\newenvironment{customthm}[1]
  {\innercustomthm}
  {\endinnercustomthm}
\title{\MakeUppercase{Encoding Arguments}\thanks{This research is partially funded by NSERC.
W. Mulzer is supported by DFG Grants 3501/1 and 3501/2.}}
\author{Pat Morin, Wolfgang Mulzer, and Tommy Reddad}
\date{}
\begin{document}
\begin{titlepage}
\maketitle

\begin{abstract}
  \setlength{\baselineskip}{15.84pt}

  Many proofs in discrete mathematics and theoretical computer  science
  are based on the probabilistic method. To prove the existence of a
  good object, we pick a random object and show that it is bad with low
  probability. This method is effective, but the underlying probabilistic
  machinery can be daunting. ``Encoding arguments'' provide an alternative
  presentation in which probabilistic reasoning is encapsulated in a
  ``uniform encoding lemma''. This lemma provides an upper bound on the
  probability of an event using the fact that a uniformly random choice
  from a set of size $n$ cannot be encoded with fewer than $\log_2 n$
  bits on average. With the lemma, the argument reduces to devising an
  encoding where bad objects have short codewords.

  In this expository article, we describe the basic method and
  provide a simple tutorial on how to use it. After that, we survey
  many applications to classic problems from discrete mathematics
  and computer science. We also give a generalization for the case
  of non-uniform distributions, as well as a rigorous justification
  for the use of non-integer codeword lengths in encoding arguments.
  These latter two results allow encoding arguments to be applied more
  widely and to produce tighter results.
\end{abstract}

\keywords{Encoding arguments, entropy, Kolmogorov complexity,
          incompressibility, analysis of algorithms, hash tables, random
          graphs, expanders, concentration inequalities, percolation theory.}

\end{titlepage}
\pagenumbering{roman}
\tableofcontents
\newpage
\pagenumbering{arabic}

\section{Introduction}
\setlength{\baselineskip}{15.84pt}

There is no doubt that probability theory plays a fundamental role in
computer science. Often, the fastest and simplest solutions to important
algorithmic and data structuring problems are randomized~\cite{mitzenmacher.upfal:probability,motwani.raghavan:randomized}; 
average-case analysis of algorithms relies entirely on tools from probability 
theory~\cite{flajolet.sedgewick:aofa}; 
many difficult combinatorial questions have strikingly simple
solutions using probabilistic arguments~\cite{alon:probabilistic}.

Unfortunately, many of these beautiful results 
present a challenge to
most computer scientists, as they rely on
advanced mathematical concepts.
For instance, the 2013 edition of ACM/IEEE Curriculum
Guidelines for Undergraduate Degree Programs in Computer Science does
not require a full course in probability theory
\cite[Page~50]{computing-curricula:computer}. Instead, the report
recommends a total of 6 Tier-1 hours and 2 Tier-2 hours on
discrete probability, as part of the discrete structures curriculum
\cite[Page~77]{computing-curricula:computer}.

``Encoding arguments'' offer a more elementary 
approach to presenting these results.
We transform the
task of upper-bounding the probability of a specific event,
$\mathcal{E}$, into the task of devising a code for the set of
elementary events in $\mathcal{E}$. This provides an alternative
to performing a traditional probabilistic analysis or, since we are
only concerned with finite spaces, to directly estimating the size of
$\mathcal{E}$. Just as the probabilistic method
is essentially only a sophisticated rephrasing of a counting argument
with many theoretical and intuitive advantages,
encoding arguments likewise
offer their own set of benefits,
although they are also just a glorified way of counting. More
specifically:

\begin{enumerate}
\item Except for
  applying a simple \emph{Uniform Encoding Lemma}, 
  encoding arguments are ``probability-free.''
  There is no danger of
  common mistakes such as multiplying probabilities of non-independent
  events or (equivalently) multiplying expectations.

  The actual proof of the Uniform Encoding Lemma is trivial. It only
  uses the probabilistic fact that if we have a finite set $X$
  with $r$ special elements and we pick an element from $X$ uniformly at
  random, the probability of selecting a special element
  is $r/|X|$.

\item Encoding arguments usually yield strong results;
  $\Pr\{\mathcal{E}\}$ typically decreases at least exponentially in
  the parameter of interest. Traditionally, these strong 
  results require (at least) careful calculations on probabilities of
  independent events and/or concentration
  inequalities.  This latter subject is advanced
  enough to fill entire textbooks
  \cite{boucheron.lugosi.ea:concentration,dubhashi.panconesi:concentration}.
  
\item Encoding arguments are natural for computer scientists. They
  turn a probabilistic analysis into the task of designing
  an efficient code---an algorithmic problem. Consider the following
  two problems:
  \begin{enumerate}

  \item Prove an upper-bound of $1/n^{\log n}$ on the probability that
    a random graph on $n$ vertices contains a clique of size $k=\lceil
    4\log n\rceil$.\footnote{Since we are overwhelmingly concerned with
    binary encoding, we will
    agree now that the base of logarithms in $\log x$ is $2$, except when
    explicitly stated otherwise.}

  \item Design an encoding for graphs on $n$ vertices so that any graph
    with a clique of size $k=\lceil 4\log n\rceil$ is
    encoded using at most $\binom{n}{2}-\log^2 n$ bits. (Note: Your
    encoding and decoding algorithms don't have to be efficient, just
    correct.)
  \end{enumerate}
  Many computer science undergraduates would not know where to start
  on the first problem.  Even a good student who realizes that they
  can use Boole's Inequality will still be stuck wrestling with the
  formula $\binom{n}{4\log n}2^{-\binom{4\log n}{2}}$.
\end{enumerate}

We believe that encoding arguments are an easily
accessible, yet versatile tool for solving many problems.  Most of
these arguments can be applied after learning almost no probability
theory beyond the Encoding Lemma mentioned above.

The remainder of this article is organized as follows. In
\secref{uel}, we present an elementary tutorial on the
method, including the
\emph{Uniform Encoding Lemma}, the basis of most of our
encoding arguments. In \secref{background}, we review 
more advanced mathematical tools, such as entropy,
Stirling's Approximation, and encoding schemes for natural
numbers. In \Secref{applications-i} we show how the Uniform
Encoding Lemma can be applied to
Ramsey graphs, several hashing variants, expander graphs,
analysis of binary search trees, and $k$-SAT. In
\Secref{nuel}, we introduce the Non-Uniform Encoding Lemma,
a generalization
that extends the reach of the method, This is demonstrated in
\secref{applications-ii}, where we prove the Chernoff 
bound and consider percolation and random triangle counting problems.
\Secref{el} presents an
alternative view of encoding arguments, justifying the use of
non-integer codeword lengths.  \Secref{summary} concludes the
survey.

\section{An Elementary Tutorial}
\seclabel{uel}

This section gives a simple tutorial on the basic
method.

\subsection{Basic Definitions, Prefix-free Codes and the Uniform 
Encoding Lemma}

Before we can talk about codes, we first recall
some basic definitions about bit strings.
\begin{itemize}
\item \textbf{binary string}/\textbf{bit string}:
a finite (possibly empty) sequence of elements from $\{0, 1\}$.
The set of all bit strings is denoted by $\{0, 1\}^*$.

Examples: $0$; $1$; $101$; $00101$; $1010$; $1101$; $\eps$ (the empty 
string).
\item \textbf{length of a bit string $x$}: the number of bits
  in $x$, denoted by $|x|$.\\
Examples: $|0| = 1$; $|1| = 1$; $|101| = 3$; $|00101| = 5$. 
\item \textbf{$n_0(x), n_1(x)$}: the number of $0$- and $1$-bits
  in a bit string $x$.

 Examples:
  $n_0(010) = 2$; $n_1(010) = 1$; $n_0(1111) = 0$; $n_1(1111) = 4$.
\item \textbf{$n$-bit string}: a bit string of length $n$, for
  $n \in \N$.  The set of all $n$-bit strings is denoted by
  $\Sigma^n$.

Examples: $001$ is a $3$-bit string; $00101$ is a $5$-bit string;
  $\Sigma^2 = \{00, 01, 10, 11\}$.
\item \textbf{prefix}: a bit string
$x \in \Sigma^*$ is a \emph{prefix} of another binary string 
$y \in \Sigma^*$ if $y$ is of the form $y = xz$, for
some $z \in \Sigma^*$. Here, $xz$ denotes the bit string obtained
by concatenating $x$ with $z$.

Example:
$0$, $01$, $010$, and $0100$ are prefixes of $0100$,
but $1$ and $00$ are not.
\end{itemize}

Next, we explain our coding theoretic vocabulary. In
the following, let $X$ be a finite or countable set.
\begin{itemize}
\item \textbf{code for $X$}/\textbf{encoding of $X$}: an 
injective function $C \from X \to \{0, 1\}^*$ that
assigns a unique finite bit string to every element of $X$.\\
Example: Let $X = \{a,b,c\}$.
The function $C: a \mapsto 01, b \mapsto 1110, c \mapsto 10$ is
a code for $X$, but the function 
$C': a \mapsto 01, b \mapsto 1110, c \mapsto 01$ is not.
\item \textbf{codewords of a code $C\from X \to \{0, 1\}^*$}:
the elements of the range of $C$.

Example: The codewords of
$C: a \mapsto 01, b \mapsto 1110, c \mapsto 10$ are
$01$, $1110$, and $10$.
\item \textbf{partial code $C\from X\nrightarrow \{0,1\}^*$ for $X$}: a 
code that is only a partial function, \emph{i.e.}, not every element
in $X$ is assigned a codeword.
We will use the convention that
$|C(x)|=\infty$ if $x$ is not in the domain of $C$.

Example:
Let $X = \{a,b,c\}$.
Then $C: a \mapsto 01, b \mapsto 1110$ is
a partial code for $X$ in which there is no codeword for $c$.
We have $|C(a)| = 2$, $|C(b)| = 4$, and $|C(c)| = \infty$.
\item \textbf{prefix-free (partial) code}: a 
(partial) code $C$ in which no codeword is the prefix
of another codeword.

Examples: The code $C_1: a \mapsto 10, b\mapsto 111, c \mapsto 01$
is prefix-free. The code $C_2: a \mapsto 10, b\mapsto 111, c \mapsto 1011$
is not prefix-free, because the codeword for $a$ is a prefix
of the codeword for $c$.
\item \textbf{fixed-length code for a finite set $X$}: 
a prefix-free code for $X$ where each codeword has
length $\lceil\log |X|\rceil$.
It is obtained by enumerating
the elements of $X$ in some order $x_0,x_1,\ldots,x_{|X|-1}$ and
assigning to each $x_i$ the binary 
representation of $i$, padded with
leading zeros to obtain $\lceil\log |X|\rceil$ bits.

Example: Let $X = \{a,b,c,d,e\}$. Then, a possible
fixed-length code for $X$ is
$C: a \mapsto 000, b \mapsto 001, c \mapsto 010, d \mapsto 011,
e \mapsto 100$.
\end{itemize}

It is helpful to visualize prefix-free
codes as (rooted ordered) binary trees whose leaves are labelled with
the elements of $X$.  The codeword for a given
$x\in X$ is obtained by tracing the root-to-leaf path leading to $x$
and outputting a 0 each time this path goes from a parent to its left
child, and a 1 each time it goes to a right child. (See
\figref{bintree}.)

\begin{figure}
  \centering{\includegraphics{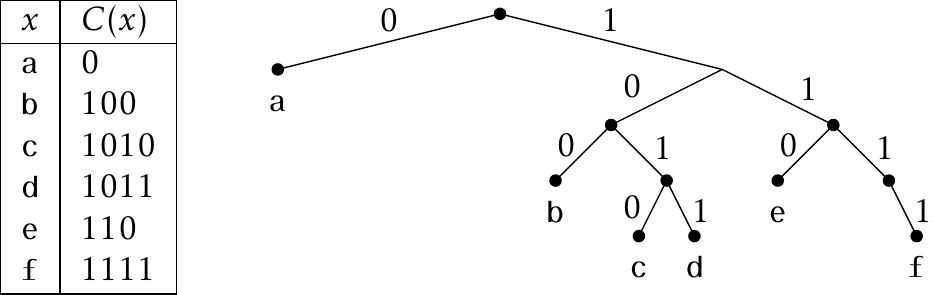}}
  \caption{A prefix-free code for 
    $X=\{\mathtt{a},\mathtt{b},\mathtt{c},\mathtt{d},\mathtt{e},
    \mathtt{f}\}$
    and the corresponding leaf-labelled binary tree (which can also be
    viewed as a partial prefix-free code for 
    $\{\mathtt{a},\mathtt{b},\mathtt{c},\ldots,\mathtt{z}\}$).}
  \figlabel{bintree}
\end{figure}

We claim that if a code $C$ is prefix-free, then for any
$k \in \N$, the code $C$ has 
not more than $2^k$ codewords of length at most $k$.
To see this, we modify $C$ 
into a code $\widehat C$, in which every codeword of length
$\ell <k$ is extended to a word of length exactly $k$ by 
appending $k-\ell$ zeros.
Since $C$ is prefix-free, $\widehat C$
is a valid code, i.e., all codewords of $\widehat C$ 
are pairwise distinct. 
Furthermore, the number of codewords in $C$ with length
at most $k$ equals the number of codewords in $\widehat C$ of 
length exactly $k$.
Since all codewords of length $k$ in $\widehat C$ are pairwise
distinct, there are at most $2^k$ of them.
To illustrate the proof, consider the code
$C: a \mapsto 0, b \mapsto 100, c \mapsto 1010, d \mapsto 1011,
e \mapsto 110, f \mapsto 1111$ from \figref{bintree}.
Our claim says that $C$ has at most $2^3 = 8$ codewords of length
at most $3$. The modified code
$\widehat{C}$ is as follows.
$\widehat{C}: a \mapsto 000, b \mapsto 100, c \mapsto 1010, d \mapsto 1011,
e \mapsto 110, f \mapsto 1111$. We see that $\widehat{C}$ is indeed
a code, and its
codewords of length exactly $3$ are in correspondence with the codewords
of length at most $3$ in $C$, as claimed.

Finally, we need to review some probability theory.

\begin{itemize}
\item \textbf{probability distribution $p$ 
on $X$}: a function $p: X \rightarrow [0,1]$ with
  $\sum_{x \in X} p(x) = 1$. We sometimes write
  $p_x$ instead of $p(x)$. We again emphasize
  that $X$ is a finite or countable set.

Example: for $X = \{a,b,c\}$, the function
  $p: a \mapsto 1/3, b \mapsto 1/2, c \mapsto 1/6$ is 
  a probability distribution, but 
  $p': a \mapsto 1/3, b \mapsto 1/2, c \mapsto 1/5$ is not.
  For $X = \{1, 2, \dots \}$, the function
  $p(x) = 1/2^{x}$ is a probability distribution.
\item \textbf{uniform distribution on finite set $X$}:
  the probability distribution $p: X \rightarrow [0,1]$
  given by $p_x = 1/|X|$, for all $x \in X$.
\item \textbf{Bernoulli distribution on $n$-bit strings
with parameter $\alpha \in [0,1]$}: the probability
distribution $p$ on $X = \{0, 1\}^n$ with
$p_x = (1-\alpha)^{n_0(x)}\alpha^{n_1(x)}$.
In other words, a bit string $x$ is sampled
by setting each bit to $1$ with probability $\alpha$ and to 
$0$ with probability $1-\alpha$,
independently of the other bits.
We write $\mathrm{Bernoulli}(\alpha)$ for $p$.
\end{itemize}
The following lemma provides the foundation on which this
survey is built. The lemma is folklore, but as we will see in
the following sections, it has an incredibly wide range of
applications and can lead to surprisingly powerful results.
\begin{lem}[Uniform Encoding Lemma]\lemlabel{uel}
  Let $X$ be a finite set and 
  $C\from X\nrightarrow \{0,1\}^*$ a partial prefix-free
  code. If an element $x\in X$ is chosen uniformly at random, then
  \[
    \Pr\{|C(x)|\le \log|X|-s\}\le 2^{-s} \enspace .
  \]
\end{lem}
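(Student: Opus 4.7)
The plan is to combine the counting bound on short codewords in a prefix-free code (proved earlier in the section) with the definition of the uniform distribution, which lets us convert a cardinality bound directly into a probability bound.

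First I would let $k = \lfloor \log|X| - s\rfloor$ and consider the set $B = \{x \in X : |C(x)| \le k\}$ of elements whose codewords are ``short.'' Because $C$ is a partial prefix-free code and $C$ is injective, the elements of $B$ are in bijection with their codewords, so $|B|$ equals the number of codewords of $C$ of length at most $k$. Invoking the claim already established in the excerpt, this count is at most $2^k$, and hence $|B| \le 2^k \le 2^{\log|X|-s} = |X|\cdot 2^{-s}$.

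Next, since $x$ is chosen uniformly from $X$, the event $\{|C(x)| \le \log|X| - s\}$ is exactly the event $\{x \in B\}$ (using $k = \lfloor\log|X|-s\rfloor$, or one can just observe monotonicity of $|C(x)|\le\cdot$), and it has probability $|B|/|X|$. Plugging in the bound on $|B|$ gives $\Pr\{|C(x)|\le \log|X|-s\} \le |X|\cdot 2^{-s}/|X| = 2^{-s}$, as required.

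The only subtle point, and what I would flag as the ``main obstacle,'' is a small bookkeeping issue about non-integer codeword lengths: $\log|X| - s$ need not be an integer, so one has to be careful that the event $|C(x)| \le \log|X|-s$ coincides with $|C(x)| \le \lfloor \log|X| - s \rfloor$ (which it does, since $|C(x)|$ is an integer or $\infty$) before applying the counting bound at the integer threshold $k$. Once that observation is made, the rest is a one-line calculation, and the convention $|C(x)| = \infty$ for $x$ outside the domain of the partial code ensures those elements contribute nothing to $B$.
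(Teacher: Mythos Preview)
Your proof is correct and essentially identical to the paper's own proof: both set $k=\lfloor\log|X|-s\rfloor$, invoke the earlier bound that a prefix-free code has at most $2^k$ codewords of length $\le k$, use injectivity to pass to preimages, and divide by $|X|$. Your explicit remark about why the floor is harmless (codeword lengths being integers or $\infty$) is a small clarification the paper leaves implicit, but otherwise the arguments coincide.
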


\begin{proof}
  We call a codeword of
  $C$ \emph{short} if it has length at most 
  $k = \lfloor \log|X|-s \rfloor$. Above, we observed that
  $C$ has at most $2^{k}$ short codewords.  Since $C$ is injective,
  each codeword has at exactly one preimage in $X$.  
  Since $x$ is chosen uniformly at
  random from $X$, the probability that it is the preimage of a 
  short codeword is at most
  \[
  \frac{2^k}{|X|} \leq \frac{2^{\log|X|-s}}{|X|} = 2^{-s} \enspace
  . 
  \]
\end{proof}

\subsection{Runs in Binary Strings}

To finish our tutorial, we give a simple use of the 
Uniform Encoding Lemma. Let $x$ be a bit string. 
A \emph{run} in $x$ is a consecutive sequence of
one bits. For example,
$0110111101011111$ contains runs of length
$2$, $4$, $1$, and $5$,  and every substring
of a run is also a run. We now show that
a $n$-bit string that is
chosen uniformly at random is unlikely to contain
a run of length significantly more than $\log n$.  (See
\figref{runs-i}.)

\begin{figure}
  \centering{\includegraphics{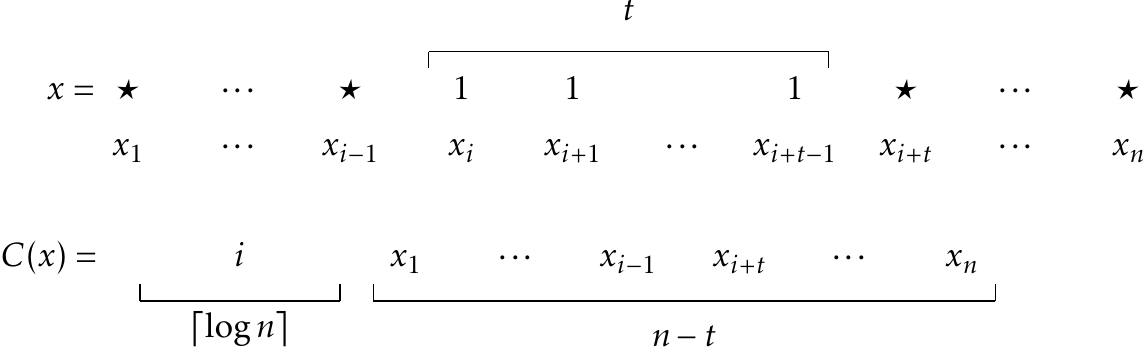}}
  \caption{Illustration of \thmref{runs-i} and its proof.}
  \figlabel{runs-i}
\end{figure}

\begin{thm}\thmlabel{runs-i}
  Let $x=(x_1,\ldots,x_n)\in\{0,1\}^n$ be chosen uniformly at random
  and let $t = \Ceil{\ceil{\log n} + s}$. Then, the probability that
  $x$ contains a run of length $t$
  is at most $2^{-s}$.
\end{thm}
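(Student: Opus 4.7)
The plan is to apply the Uniform Encoding Lemma with $X=\{0,1\}^n$ (so $\log|X|=n$), by designing a partial prefix-free code $C$ that assigns a codeword of length at most $n-s$ to every $x\in\{0,1\}^n$ that contains a run of length $t$. Then the conclusion follows immediately from \lemref{uel}, since $\Pr\{x\text{ contains such a run}\}=\Pr\{|C(x)|\le \log|X|-s\}\le 2^{-s}$.

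The encoding I have in mind is the natural one: if $x$ contains a run of length $t$, let $i\in\{1,\dots,n-t+1\}$ be the starting position of the \emph{first} run of length $t$ in $x$, and write $C(x)$ as the concatenation of two parts. The first part is a fixed-length binary encoding of $i$ using exactly $\lceil\log(n-t+1)\rceil\le\lceil\log n\rceil$ bits; the second part lists, in order, the $n-t$ bits of $x$ that lie outside positions $i,i+1,\dots,i+t-1$ (whose values are known to be $1$). Decoding is straightforward: read the first $\lceil\log n\rceil$ bits to recover $i$, then fill in $1$'s in positions $i,\dots,i+t-1$ and fill the other $n-t$ positions using the remaining bits of the codeword in order. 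Since the two parts both have fixed, predetermined lengths depending only on $n$ and $s$, every codeword of $C$ has the same length, so $C$ is prefix-free (indeed a fixed-length partial code), and the decoding procedure shows injectivity.

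It remains to check the length bound. By construction,
\[
|C(x)| \;=\; \lceil\log n\rceil + (n-t) \;\le\; \lceil\log n\rceil + n - (\lceil\log n\rceil + s) \;=\; n-s,
\]
using $t=\Ceil{\ceil{\log n}+s}\ge\lceil\log n\rceil+s$. Thus every $x$ containing a run of length $t$ satisfies $|C(x)|\le\log|X|-s$, and \lemref{uel} yields the claimed bound.

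I do not expect any serious obstacle; the only minor care needed is in handling the ceilings (making sure that encoding the starting position in $\lceil\log n\rceil$ bits is enough, which holds because there are at most $n$ possible starting positions) and in justifying that fixed-length concatenation is prefix-free, which is immediate.
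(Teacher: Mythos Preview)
Your proof is correct and matches the paper's approach exactly: encode the first starting index $i$ of a length-$t$ run using $\lceil\log n\rceil$ bits, append the remaining $n-t$ bits of $x$, obtain a fixed-length (hence prefix-free) partial code of length at most $n-s$, and invoke the Uniform Encoding Lemma. The only cosmetic slip is that you first say the index part uses ``exactly $\lceil\log(n-t+1)\rceil$'' bits but then decode and compute lengths with $\lceil\log n\rceil$; either choice works, so just pick one and use it consistently.
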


\begin{proof}
  We construct a partial prefix-free
  code for strings with a run of length $t$. For such a string 
  $x=(x_1,\ldots,x_n)$, let $i$ be the minimum index with
  $x_i=x_{i+1}=\cdots=x_{i+t-1}=1$. The codeword $C(x)$ for $x$ is the
  binary string that consists of the ($\ceil{\log n}$-bit binary
  encoding of the) index $i$ followed by the $n-t$ bits
  $x_1,\dots,x_{i-1},x_{i+t},\dots,x_n$. (See \figref{runs-i}.)
  For example, for $n = 8$ and $t = 4$, we encode
  $x = 10111110$ as $010\,10\,10$, since
  the first run of length $4$ in $x$ is at position $2$,
  which is $010$ as a $3 = \lceil \log 8 \rceil$-bit number.

  Observe that $C(x)$ has length 
  \[
    |C(x)| = \ceil{\log n} + n - t \le n-s \enspace .
  \]
  To see that $C$ is injective, 
  we argue that we can obtain $(x_1,\ldots,x_n)$ uniquely 
  from $C(x)$: 
  The first $\ceil{\log n}$
  bits from $C(x)$ tell us, in binary, a position $i$ for
  which $x_i = x_{i + 1} = \dots = x_{i + t - 1} = 1$; the following
  $n - t$ bits in $C(x)$ contain the remaining 
  bits
  $x_1, x_2, \dots, x_{i - 1}, x_{i + t}, x_{i + t + 1}, \dots,
  x_n$. Thus, $C$ is a partial code whose domain are the $n$-bit strings
  with a run of length $t$.  Also,
  $C$ is prefix-free, as all codewords have the same 
  length.
  Recall from our convention that $|C(x)| = \infty$ if
  $x$ does not have a run of length $t$.

  Now, there are $2^n$ $n$-bit strings.
  Therefore, by the Uniform Encoding Lemma, the probability that 
  a uniformly random $n$-bit string has a run of length $t$ 
  is at most
  \[
    \Pr\{|C(x)|\le n-s\} \le 2^{-s} \enspace . 
  \]
\end{proof}

The proof of \thmref{runs-i} contains the main ideas
used in most encoding arguments: 
\begin{enumerate}
\item The arguments usually show that a particular \emph{bad event} is
  unlikely. In \thmref{runs-i} the bad event is the occurrence of a
  substring of $t$ consecutive ones.

\item We use a partial prefix-free code whose domain is the bad
  event, whose elements we call the \emph{bad outcomes}. In this case, the code
  $C$ only encodes strings containing a run of length $t$,
  and a particular such string is a
  bad outcome.

\item The code begins with a concise description of the bad
  outcome, followed by a straightforward encoding of the
  information that is not implied by the bad outcome. In
  \thmref{runs-i}, the bad outcome is completely described by the index
  $i$ at which the run of length $t$ begins, and this implies that
  the bits $x_i,\ldots,x_{i+t-1}$ are all equal to 1, so these bits 
  can be omitted in the second part of the codeword.
\end{enumerate}

\subsection{A Note on Ceilings}\seclabel{ceilings}

Note that \thmref{runs-i} also has an easy proof using the union
bound: If we let $\mathcal{E}_i$ denote the event
$x_i=x_{i+1}=\cdots=x_{i+t}=1$, then
\begin{align*}
  \Pr \left\{\bigcup_{i=0}^{n-t-1} \mathcal{E}_i\right\}
  & \le \sum_{i=0}^{n-t-1} \Pr \{\mathcal{E}_i\} && \text{(using the union bound)}\\
  & = \sum_{i=0}^{n-t-1} 2^{-t}  &&\text{(using the independence of the $x_i$'s)}\\
  & \le n2^{-t} && \text{(the sum has $n-t\le n$ identical terms)}\\
  & \le n2^{-\lceil\log n\rceil-s}&& \text{(using the definition of $t$)}\\
  & \le 2^{-s} \enspace. && \text{($*$)}
\end{align*}
This traditional proof also works with the sometimes smaller value
$t=\ceil{\log n+s}$ (note the lack of a ceiling over $\log n$), 
in which case the final inequality ($*$) holds with an equality.

In the encoding proof of \thmref{runs-i}, the ceiling on the
$\log n$ is an artifact of encoding the integer $i$ which comes
from a set of size $n$. When sketching an encoding argument, we
think of this as requiring $\log n$ bits. However, when 
the time comes to write down a
careful proof we need a ceiling over this term as bits are a discrete
quantity.

In \secref{el}, however, we will 
formally justify that the informal
intuition we use in blackboard proofs is actually valid; we can think
of the encoding of $i$ using $\log n$ bits even if $\log n$ is not an
integer.  In general we can imagine encoding a choice from among $r$
options using $\log r$ bits for any $r\in\N$.  From this point
onwards, we omit ceilings this way in all our theorems and
proofs. This simplifies calculations and provides tighter results.
For now, it allows us to state the following cleaner version of
\thmref{runs-i}:

\begin{customthm}{\ref*{thm:runs-i}b}\thmlabel{runs-ii}
  Let $x\in\{0,1\}^n$ be chosen uniformly at random and let
  $t = \ceil{\log n + s}$. Then, the probability that $x$ contains a
  run of length $t$ is at most $2^{-s}$.
\end{customthm}

\section{More Background and Preliminaries}
\seclabel{background}
\subsection{Encoding Sparse Bit Strings}\seclabel{sparse-bit-strings}

At this point we should also point out an extremely useful trick 
for encoding sparse bit strings. For any $\alpha\in(0,1)$,
there exists a code $C_\alpha\from \{0,1\}^n\rightarrow \{0,1\}^*$
such that, for any bit string $x\in\{0,1\}^n$ having $n_1(x)$ ones and
$n_0(x)$ zeros,
\begin{equation}
  |C_\alpha(x)| = \left\lceil n_1(x)\log \frac{1}{\alpha} + 
    n_0(x)\log\frac{1}{1-\alpha} \right\rceil \enspace.
  \eqlabel{sparse-bitstring-i}
\end{equation}
This code is the \emph{Shannon-Fano code}
for $\mathrm{Bernoulli}(\alpha)$ bit strings of length $n$
\cite{fano:transmission,shannon:mathematical}. 
More generally, for any probability density $p : X \to [0, 1]$,
there is a Shannon-Fano code $C : X \to \{0, 1\}^*$ 
such that
\[
  |C(x)| = \ceil{\log (1/p_x)} \enspace .
\]
Moreover, we can construct such a code deterministically, even when
$X$ is countably infinite.

Again, as explained in \secref{el}, we can omit the ceiling in the
expression for $|C_\alpha(x)|$.  This holds for any value of $n$. In
particular, for $n=1$, it gives a ``code'' for a single
bit where the cost of encoding a 1 is $\log(1/\alpha)$ and the cost of
encoding a 0 is $\log(1/(1-\alpha))$.  Indeed, the ``code'' for bit
strings of length $n>1$ is what we get when we apply this 1-bit
code to each bit of the bit string.

If we wish to encode bit strings of length $n$ and we know in advance
that the strings contain exactly $k$ one bits, then we can obtain
an optimal code 
by taking $\alpha=k/n$. The resulting fixed length
code has length
\begin{equation}
  k\log (n/k) + (n-k)\log(n/(n-k))  \eqlabel{sparse-bitstring-ii} \enspace.
\end{equation}
Formula~(\ref{eq:sparse-bitstring-ii}) brings us to our next 
topic: binary entropy.

\subsection{Binary Entropy}

The \emph{binary entropy function} $H\from (0,1)\to(0,1]$ is defined
by
\[
  H(\alpha) = \alpha\log(1/\alpha) + (1-\alpha)\log(1/(1-\alpha)) 
\]
and it will be quite useful.  The binary entropy function and two upper
bounds on it that we derive below are illustrated in \figref{entropy}.
\begin{figure}
  \centering{\includegraphics[scale=0.9]{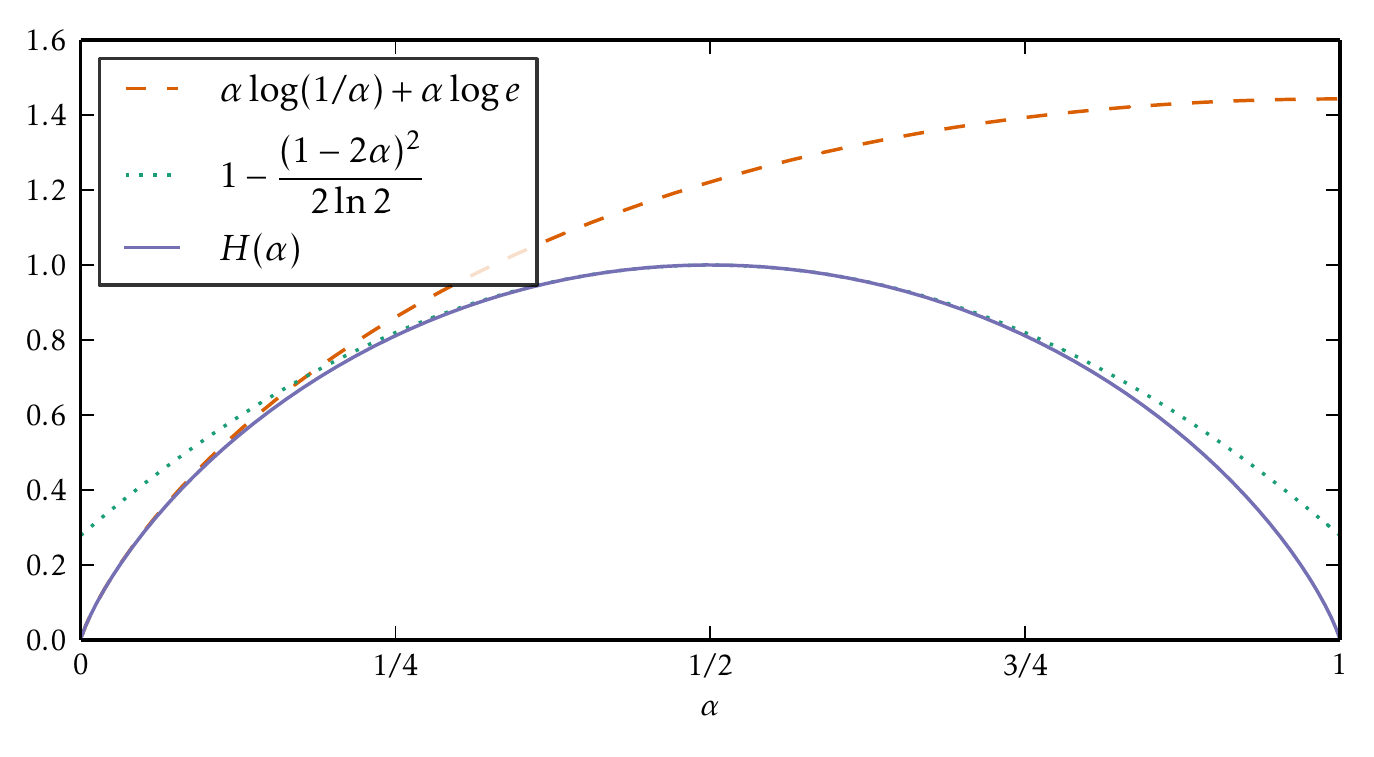}}
  \caption{Binary entropy, $H$, and two useful approximations.}
  \figlabel{entropy}
\end{figure}
We have already encountered a quantity that can be
expressed in terms of the binary entropy.  From
\eqref{sparse-bitstring-ii}, a bit string of length $n$ with
exactly $k$ one bits can be encoded with a fixed-length code of 
$nH(k/n)$ bits.

The binary entropy function can be difficult to work with, so it is
helpful to have some manageable approximations.  One of these is
derived as follows:
\begin{align}
  H(\alpha) & = \alpha\log(1/\alpha) + (1-\alpha)\log(1/(1-\alpha)) 
      = \alpha\log(1/\alpha) + (1-\alpha)\log(1+\alpha/(1-\alpha)) 
      \notag \\
            & \le \alpha\log(1/\alpha) + (1-\alpha)\cdot 
            \alpha/(1-\alpha) \log e 
            \le \alpha\log(1/\alpha) + \alpha\log e 
            \eqlabel{entropy-i} 
\end{align}
since $1+x\le e^x$ for all $x\in\mathbb{R}$. 
Inequality~(\ref{eq:entropy-i}) is a
useful approximation when $\alpha$ is close to zero, in which case 
$H(\alpha)$ is also close to zero.

For $\alpha$ close to $1/2$ (in which case $H(\alpha)$ is close to 1),
we obtain a good approximation from the Taylor series expansion at
$1/2$.  Indeed, a simple calculation shows that
\[
  H'(\alpha) = \log (1/\alpha) - \log (1/(1-\alpha))
\]
and that
\[
  H^{(i)}(\alpha) = \frac{(i-2)!}{\ln 2} \left( \frac{(-1)^{i-1}}{\alpha^{i-1}} - \frac{1}{(1-\alpha)^{i-1}}\right) ,
\]
for $i \geq 2$. Hence, $H^{(i)}(1/2) = 0$, for $i \geq 1$ odd, and
\[
  H^{(i)}(1/2) = -\frac{(i-2)!\, 2^i}{\ln 2}, 
\]
for $i \geq 2$ even. The Taylor series expansion at $1/2$ now gives
\begin{align}
  H(1/2 + \beta) &= 
  H(1/2) + \sum_{i = 1}^{\infty} \frac{H^{(i)}(1/2)}{i!}
     \beta^i \notag
 = 
1 - \frac{1}{\ln 2}\sum_{i = 1}^{\infty} \frac{(2i-2)! 2^{2i}}
{(2i)!}
     \beta^{2i}
= 1-\frac{1}{2\ln 2}\sum_{i=1}^{\infty}\frac{(2\beta)^{2i}}{i(2i-1)} .
              \notag \\ 
\intertext{ In particular, for $\beta=\eps/2$,}
H\left(\frac{1+\eps}{2}\right) & = 1-\frac{1}{2\ln 2}
\sum_{i=1}^{\infty}\frac{\eps^{2i}}{i(2i-1)} 
            < 1-\frac{\eps^2}{2\ln 2} \enspace .\eqlabel{entropy-ii}
\end{align}

\subsection{Basic Chernoff Bound}

With all the pieces in place, we can now give an encoding argument for
a well-known and extremely useful result typically attributed to
Chernoff~\cite{chernoff:bound}.

\begin{thm}\thmlabel{chernoff-basic}
  Let $x\in\{0,1\}^n$ be chosen uniformly at random. Then, for any
  $\eps \geq 0$,
  \[
    \Pr\left\{n_1(x) \leq (1-\eps)\frac{n}{2}\right\} 
    \le e^{-\eps^2n/2} \enspace .
  \]
\end{thm}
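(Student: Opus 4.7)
The plan is to apply the Uniform Encoding Lemma to the sample space $X=\{0,1\}^n$, for which $\log|X|=n$. The partial prefix-free code will be defined only on the bad strings (those with $n_1(x)\le(1-\eps)n/2$), and I need to exhibit codewords whose lengths are at most $n - n\eps^2/(2\ln 2)$, so that invoking the lemma with $s=n\eps^2/(2\ln 2)$ yields $2^{-s}=e^{-\eps^2 n/2}$.

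For the code itself I would simply restrict the Shannon-Fano code $C_\alpha$ from \secref{sparse-bit-strings} to the bad strings, with $\alpha=(1-\eps)/2$. By \eqref{sparse-bitstring-i} (dropping the ceiling as justified in \secref{ceilings}), $|C_\alpha(x)|=n_1(x)\log(1/\alpha)+n_0(x)\log(1/(1-\alpha))$. Rewriting this as $n\log(1/(1-\alpha))+n_1(x)\log((1-\alpha)/\alpha)$ shows that, because $\alpha<1/2$ makes the coefficient of $n_1(x)$ positive, the length is an increasing function of $n_1(x)$. Hence over the domain of bad strings the length is maximized at $n_1(x)=(1-\eps)n/2$, where $\alpha=n_1(x)/n$ and the formula collapses to $nH((1-\eps)/2)$.

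To finish, I would use the symmetry $H(\alpha)=H(1-\alpha)$ to rewrite $nH((1-\eps)/2)=nH((1+\eps)/2)$, and then apply the Taylor-series bound \eqref{entropy-ii} to obtain
\[
    |C_\alpha(x)| \le nH\!\left(\tfrac{1+\eps}{2}\right) < n - \frac{n\eps^2}{2\ln 2}\enspace.
\]
The Uniform Encoding Lemma with $s=n\eps^2/(2\ln 2)$ then gives $\Pr\{n_1(x)\le(1-\eps)n/2\}\le 2^{-s}=e^{-\eps^2 n/2}$, as required.

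The only real obstacle is the monotonicity step: one must verify that the Shannon-Fano length over bad strings is maximized at the boundary $n_1(x)=(1-\eps)n/2$, which amounts to checking $\log(1/\alpha)>\log(1/(1-\alpha))$ when $\alpha<1/2$. Everything else is an assembly of pieces already established in the preceding sections (the Shannon-Fano length formula, the entropy approximation \eqref{entropy-ii}, the convention of non-integer codeword lengths, and the Uniform Encoding Lemma itself).
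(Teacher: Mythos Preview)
Your proposal is correct and follows essentially the same approach as the paper: encode with the Shannon--Fano code $C_\alpha$ for $\alpha=(1-\eps)/2$, use the monotonicity of the codeword length in $n_1(x)$ (via $\log(1/\alpha)>\log(1/(1-\alpha))$ when $\alpha<1/2$) to bound it by $nH(\alpha)$, apply the entropy estimate \eqref{entropy-ii}, and finish with the Uniform Encoding Lemma. The only cosmetic difference is that you make the symmetry $H((1-\eps)/2)=H((1+\eps)/2)$ explicit before invoking \eqref{entropy-ii}, whereas the paper applies it tacitly.
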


\begin{proof}
  Encode the bit string $x$ using a Shannon-Fano code $C_\alpha$ with
  $\alpha=(1-\eps)/2$. 
  Then, the length of the codeword for $x$ is
  \[
    |C_\alpha(x)| = n_1(x)\log(1/\alpha) + n_0(x)\log (1/(1-\alpha))
    \enspace .
  \]
  Since $\alpha < 1/2$, we have $\log(1/\alpha) > \log(1/(1-\alpha))$, 
  so for fixed $n$, the codeword length $|C_\alpha(x)|$ is 
  increasing in $n_1(x)$ and
  becomes maximal when $n_1(x)$ is maximal. 
    Thus, if $n_1(x) \le \alpha n$,
  then
  \begin{align*}
    |C_\alpha(x)|  \le \alpha n\log(1/\alpha) + 
    (1-\alpha)n\log(1/(1-\alpha))
                   = n H(\alpha) \le n\left(1-\frac{\eps^2}{2\ln 2}\right) = n - s \enspace ,
  \end{align*}
  where the second inequality is an application of \eqref{entropy-ii},
  and where $s = \eps^2 n/(2 \ln 2)$.  Now, $x$ was chosen uniformly at
  random from a set of size $2^n$. By the Uniform Encoding Lemma, we
  obtain that
  \[
    \Pr\{n_1(x) \leq \alpha n\} \le \Pr\{|C_\alpha(x)| \le n - s\} \le 2^{-s} = e^{-\eps^2n/2} \enspace . 
  \]
\end{proof}
In \secref{chernoff}, after developing a Non-Uniform Encoding Lemma,
we will extend this argument to $\mathrm{Bernoulli}(\alpha)$ bit
strings. 

\subsection{Factorials and Binomial Coefficients}
\seclabel{stirling}

Before moving on to some more advanced encoding arguments, it will be
helpful to remind the reader of a few inequalities that can be derived
from Stirling's Approximation of the factorial~\cite{robbins:stirling}.  
Recall that Stirling's Approximation states that
\begin{equation}
  n! = \left(\frac{n}{e}\right)^n\sqrt{2\pi n}\left(1+\varTheta\left(\frac{1}{n}\right)\right) \enspace .
  \eqlabel{stirling}
\end{equation}

In many cases, we are interested in representing a set of size $n!$
using a fixed-length code.  By \eqref{stirling}, and using once again
that $1+x \leq e^x$ for all $x \in \R$ as well as
$1+x \geq e^{x/2}$ for $x \in [0, 1/2]$, the length 
of the codewords in such a code is
\begin{align}
  \log n!
  & = n\log n - n\log e + (1/2)\log n + \log \sqrt{2 \pi} + \log(1+\varTheta(1/n)) \notag \\
  & = n\log n - n\log e + (1/2)\log n + \log \sqrt{2 \pi} + \varTheta(1/n) \notag \\
  & = n\log n - n\log e + \varTheta(\log n) \eqlabel{stirling-loose}
    \enspace .
\end{align}

We are sometimes interested in codes for the $\binom{n}{k}$ subsets of
$k$ elements from a set of size $n$. Note that there is an easy
bijection between such subsets and binary strings of length $n$ with
exactly $k$ ones. Therefore, we can represent these using the
Shannon-Fano code $C_{k/n}$ and each of our codewords will have length
$nH(k/n)$.  In particular, this implies that
\begin{equation}
  \log\binom{n}{k} \le nH(k/n) \le k\log n - k\log k + k\log e 
  \eqlabel{log-n-choose-k}
\end{equation}
where the last inequality is an application of \eqref{entropy-i}. The
astute reader will notice that we just used an encoding argument to
prove an upper-bound on $\binom{n}{k}$ \emph{without using the 
formula $\binom{n}{k}=\frac{n!}{k! (n - k)!}$}. Alternatively, we 
could obtain a slightly worse bound by applying 
\eqref{stirling} to this formula.

\subsection{Encoding the Natural Numbers}

So far, we have only explicitly been concerned with codes for finite
sets. In this section, we give an outline of some
prefix-free codes for the set of natural numbers. Of course, if $p :
\N \to [0, 1]$ is a probability density, then the Shannon-Fano code
for $p$ could serve. However, it seems easier to simply design our
codes by hand, rather than find appropriate distributions.

A code is prefix-free if and only if any message consisting of a
sequence of its codewords can be decoded unambiguously and
instantaneously as it is read from left to right: Consider some
sequence of codewords $M = y_1 y_2 \cdots y_k$ from a prefix-free code
$C$. Since $C$ is prefix-free, then $C$ has no codeword $z$ which is a
prefix of $y_1$, so reading $M$ from left to right, the first codeword
of $C$ which we recognize is precisely $y_1$. Continuing in this
manner, we can decode the whole message $M$. Conversely, if for each
codeword $y$ of $C$, a message consisting of this single codeword can
be decoded unambiguously and instantaneously from left to right, we
know that $y$ has no prefix among the codewords of $C$, \emph{i.e.}
$C$ is prefix-free. 
This idea allows us to more easily design
the codes in this section, which were originally given by
Elias~\cite{elias:coding}.

The \emph{unary encoding} of an integer $i \in \N$, denoted by $U(i)$,
begins with $i$ 1 bits which are followed by a 0 bit. This code is
not particularly useful in itself, but it can be improved as follows:
The \emph{Elias $\gamma$-code} for $i$, denoted by $E_\gamma(i)$,
begins with the unary encoding of the number of bits in $i$, and then
the binary encoding of $i$ itself (minus its leading bit). The
\emph{Elias $\delta$-code} for $i$, denoted by $E_\delta(i)$ begins
with an Elias $\gamma$-code for the number of bits in $i$, and then
the binary encoding of $i$ itself (minus its leading bit). This
process can be continued recursively to obtain the \emph{Elias
  $\omega$-code}, which we denote by $E_\omega$. Each of these codes
has a decoding procedure as in the preceding paragraph, which
establishes their prefix-freeness.

The most important properties of these codes are their codeword
lengths:
\begin{align*}
  |U(i)| &= i + 1 \enspace , \\
  |E_\gamma(i)| &= 2 \log i + O(1) \enspace , \\
  |E_\delta(i)| &= \log i + 2 \log \log i + O(1) \enspace , \\
  |E_\omega(i)| &= \log i + \log \log i + \dots + \underbrace{\log \cdots \log}_{\text{$\log^* i$ times}}i + O(\log^* i) \enspace .
\end{align*}
Here, $\log^* i$ denotes the number of times we need to apply
the function $x \mapsto \log x$ to the integer $i$ until we obtain
a number that is smaller than $2$.
It may be worth noting that the lengths of unary codes correspond to
the lengths of Shannon-Fano codes for a geometric distribution with
density $p_i = 1/2^{i + 1}$, that is, 
\[
  \log (1/p_i) = \log 2^{i + 1} = |U(i)| \enspace ,
\]
and the lengths of Elias $\gamma$-codes correspond to the lengths of
Shannon-Fano codes for a discrete Cauchy distribution with density
$p_i = c/i^2$ for a normalization constant $c$, that is,
\[
  \log (1/p_i) = 2 \log i - \log c = |E_\gamma(i)| + O(1) \enspace .
\]
The lengths of Elias $\delta$-codes and $\omega$-codes do not seem to
arise as the lengths of Shannon-Fano codes for any 
named distributions.

\section{Applications of the Uniform Encoding Lemma}
\seclabel{applications-i}

We now start with some applications of the Uniform Encoding Lemma. In
each case, we will design and analyze a partial prefix-free code
$C\from X\nrightarrow \{0, 1\}^*$, where $X$ depends on
the context.

\subsection{Graphs with no Large Clique or Independent Set}\seclabel{ramsey}

The \emph{Erd\H{o}s-R\'enyi random graph} $G_{n,p}$ is the
probability space on
graphs with vertex set $V=\{1,\ldots,n\}$ in which each edge $\{u,
w\} \in \binom{V}{2}$ is present with probability $p$ and absent with
probability $1-p$, independently of the other edges.  
Erd\H{o}s~\cite{erdos:some} used the 
random graph $G_{n,\frac{1}{2}}$ to prove
that there are graphs that have neither a large clique nor a large 
independent set. Here we show how this can be done using an encoding
argument.

\begin{thm}\thmlabel{erdos-renyi-i}
  For $n \ge 3$ and $s \in \N$, the probability that $G \in G_{n,\frac{1}{2}}$
  contains a clique or an independent set of size $t = \ceil{3\log n +
    \sqrt{2s}}$ is at most $2^{-s}$.
\end{thm}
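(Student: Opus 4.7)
The plan is to apply the Uniform Encoding Lemma to the sample space $\{0,1\}^{\binom{n}{2}}$ of labeled graphs on vertex set $\{1,\ldots,n\}$, which under the uniform distribution is precisely $G_{n,1/2}$. I will construct a partial prefix-free code $C$ whose domain is the set of graphs containing a clique or independent set of size $t$, and whose every codeword has length at most $\binom{n}{2}-s$; the lemma will then immediately give the bound $2^{-s}$.

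The encoding proceeds as follows. For a graph $G$ in the domain, let $S\subseteq\{1,\ldots,n\}$ be the lexicographically smallest $t$-subset of vertices that induces either a clique or an independent set in $G$. The codeword $C(G)$ is the concatenation of (i) a single bit indicating whether $S$ is a clique or an independent set; (ii) a fixed-length description of $S$ as a $t$-subset of $[n]$, costing $\log\binom{n}{t}$ bits via the subset code from \secref{sparse-bit-strings}; and (iii) the $\binom{n}{2}-\binom{t}{2}$ bits recording adjacency among those unordered vertex pairs not wholly contained in $S$, in a canonical order. The $\binom{t}{2}$ pairs inside $S$ are omitted, because part (i) determines their status. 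Decoding is immediate, so $C$ is injective, and since every codeword has the same length (depending only on $n$ and $t$), $C$ is automatically prefix-free.

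For the length accounting, I use $\log\binom{n}{t}\le t\log n$ (which follows from \eqref{log-n-choose-k}) to obtain
\[
  |C(G)| \le 1 + t\log n + \binom{n}{2} - \binom{t}{2} \enspace .
\]
It then suffices to verify that $\binom{t}{2} - t\log n - 1 \ge s$ when $t = \ceil{3\log n + \sqrt{2s}}$. Rewriting the left-hand side as $t(t-1-2\log n)/2 - 1$ and substituting the definition of $t$, the cross term $(3\log n)(\sqrt{2s})/2$ and the square $(\sqrt{2s})^2/2 = s$ combine to yield a principal term of exactly $s$, together with positive corrections of order $\log^2 n$ (from the $3\log n \cdot \log n$ product) and $\sqrt{s}\log n$; these comfortably absorb the $-1$ and the $O(\log n)+O(\sqrt{s})$ lower-order shifts once $n\ge 3$. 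The case $t>n$ is vacuous.

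The main obstacle is simply the calibration of the threshold $t$. The coefficients $3$ and $\sqrt{2}$ in $t = \ceil{3\log n + \sqrt{2s}}$ are tuned so that $\binom{t}{2}$ outpaces $\log\binom{n}{t}$ by at least $s+1$, simultaneously exploiting the quadratic-in-$\log n$ savings (when $s$ is small) and the linear-in-$s$ savings (when $s$ is large). I expect the only care required is at the boundary $n=3$ and small $s$, where the margin in the inequality $\binom{t}{2}-t\log n-1\ge s$ is thinnest.
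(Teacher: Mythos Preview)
Your proposal is correct and follows essentially the same approach as the paper: encode a distinguished $t$-set $S$ together with a clique/independent-set indicator bit, then omit the $\binom{t}{2}$ redundant adjacency bits. The paper encodes $S$ with $t\log n$ bits directly rather than via $\log\binom{n}{t}\le t\log n$, but after your bound the codeword-length inequality and the function $f(t)=\binom{t}{2}-t\log n-1$ are identical; the paper simply carries out the verification $f(3\log n+\sqrt{2s})\ge s$ explicitly where you sketch it, and your identification of $n=3$, small $s$ as the tight case is exactly right.
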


\begin{proof}
  This encoding argument compresses the $\binom{n}{2}$ bits
  of $G$'s adjacency matrix, as they appear in row-major order.
  
  Suppose the graph $G$ contains a clique or an independent set $S$ of size
  $t$. The encoding $C(G)$ begins with a bit indicating whether $S$
  is a clique or independent set; followed by the set of vertices of $S$; 
  then the
  adjacency matrix of $G$ in row major-order, omitting
  the $\binom{t}{2}$ bits implied by the edges or non-edges in
  $S$. Such a codeword has length
  \begin{equation} 
    |C(G)|  = 1 + t\log n + \binom{n}{2}-\binom{t}{2} \enspace .  
    \eqlabel{tst-1}
  \end{equation}
  Before diving into the detailed arithmetic, we intuitively
  argue why we're heading in the right direction: Roughly,
  \eqref{tst-1} is of the form:
  \[ |C(G)|  = \binom{n}{2} + t\log n - \varOmega(t^2) \enspace . \]
  That is, we need to invest $O(t\log n)$ bits
  to encode the vertex set of a clique or an independent set of 
  size $t$, but we save $\varOmega(t^2)$ bits in the encoding of $G$'s 
  adjacency matrix.
  Clearly, for $t>c\log n$, with $c$ sufficiently large, this has the form 
  \[ |C(G)|  = \binom{n}{2} - \varOmega(t^2) \enspace . \]
  At this point, it is just a matter of pinning down the dependence on $c$.
  A detailed calculation beginning from \eqref{tst-1} gives
  \begin{align*}
    |C(G)| 
     = \binom{n}{2} + 1 + t\log n - (1/2)(t^2 - t) 
     = \binom{n}{2} + 1 - (1/2)(t^2 - t - 2t \log n) \enspace .
  \end{align*}
  The function $f(x) = (1/2)(x^2 - x - 2x \log n) - 1$ is increasing
  for $x \geq \log n + 1/2$, so recalling that
$t = \ceil{3\log n + \sqrt{2s}}$, we get
  \begin{align*}
    f(t) &\ge f(3\log n + \sqrt{2s}) \\
    &= (1/2)(9 \log^2 n + 6 \sqrt{2s} \log n + 2s - 3 \log n - \sqrt{2s} - 6 \log^2 n - 2 \sqrt{2s} \log n) - 1 \\
    &= (1/2)(3 \log^2 n + 4 \sqrt{2s} \log n - 3 \log n - \sqrt{2s}) + 
    s - 1  \ge s
  \end{align*}
  for $n \ge 3$. Therefore, our code has length
  \[
    |C(G)| = \binom{n}{2} - f(t) \le \binom{n}{2} - s \enspace .
  \]
  Applying the Uniform Encoding Lemma completes the proof.
\end{proof}

\begin{rem}
  The bound in \thmref{erdos-renyi-i} can be strengthened a little,
  since the elements of $S$ can be encoded using only
  $\log\binom{n}{t}$ bits, rather than $t\log n$.  With a more careful
  calculation, using \eqref{log-n-choose-k}, the proof then works with
  $t = 2\log n +O(\log\log n) + \sqrt{s}$. This comes closer to 
  Erd\H{o}s's original result, which was at the threshold $2\log n - 2\log\log n +
  O(1)$~\cite{erdos:some}.
\end{rem}

\subsection{Balls in Urns}
\seclabel{urns}

The random experiment of  throwing $n$ balls uniformly and
independently at random into $n$ urns is a useful abstraction of many
questions encountered in algorithm design, data structures, and 
load-balancing~\cite{mitzenmacher.upfal:probability,motwani.raghavan:randomized}.  Here we
show how an encoding argument can be used to prove the classic result
that, when we do this, every urn contains $O(\log n/\log\log
n)$ balls, with high probability.

\begin{thm}\thmlabel{urns}
  Let $n, s \in \N$, and let $t$ be such that $t\log(t/e) \ge \log n + s$.
  Suppose we
  throw $n$ balls independently and uniformly at random into $n$
  urns. Then, for sufficiently large $n$, the probability that any urn
  contains more than $t$ balls is at most $2^{-s}$.
\end{thm}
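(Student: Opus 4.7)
The plan is to model the experiment as a uniformly random choice from the set $X$ of all $n^n$ functions $f\from\{1,\ldots,n\}\to\{1,\ldots,n\}$, where $f(i)$ is the urn receiving ball $i$, and then to apply the Uniform Encoding Lemma to a partial prefix-free code $C$ whose domain contains every bad $f$, i.e., every $f$ for which some urn receives more than $t$ balls. Any such $f$ has some urn containing at least $t$ balls, and this is the structural feature the encoding will exploit.

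Given a bad $f$, I would let $j^*$ be the smallest-indexed urn that receives at least $t$ balls, and let $S\subseteq\{1,\ldots,n\}$ be the set of the $t$ smallest-indexed balls that $f$ sends to $j^*$. Then $C(f)$ is the concatenation of three fixed-length fields: the index $j^*$ in $\log n$ bits; the $t$-element subset $S$, via a fixed-length code of length $\log\binom{n}{t}$; and the list $(f(i))_{i\notin S}$ in increasing order of $i$, using $(n-t)\log n$ bits. Decoding is straightforward (read $j^*$, read $S$, assign every ball of $S$ to $j^*$, then read the remaining $n-t$ assignments in order), and $C$ is prefix-free because all codewords share a common length.

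The total length is $|C(f)| = \log n + \log\binom{n}{t} + (n-t)\log n$, and \eqref{log-n-choose-k} bounds $\log\binom{n}{t}\le t\log(en/t) = t\log n - t\log(t/e)$, so
\[
  |C(f)| \le n\log n + \log n - t\log(t/e) \le n\log n - s,
\]
where the last inequality is exactly the hypothesis $t\log(t/e)\ge \log n + s$. Since $|X|=n^n$, the Uniform Encoding Lemma immediately yields the claimed probability bound of $2^{-s}$.

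The main subtlety is choosing the right payload for the codeword. Encoding the entire preimage $f^{-1}(j^*)$ would force us to also record its variable size and would loosen the arithmetic; encoding a \emph{specific} $t$-subset $S$ is what lets us invoke \eqref{log-n-choose-k} directly and recover the hypothesis at exactly the threshold $t\log(t/e) = \log n + s$. The ``sufficiently large $n$'' caveat most likely absorbs the requirement $t\le n$ needed for the binomial bound, together with any low-order slack that the ceiling-free convention of \secref{ceilings} lets us ignore.
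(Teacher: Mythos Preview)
Your proof is correct and essentially identical to the paper's own argument: both encode the urn index ($\log n$ bits), a designated $t$-subset of balls landing there ($\log\binom{n}{t}$ bits), and the remaining $n-t$ assignments ($(n-t)\log n$ bits), then apply \eqref{log-n-choose-k} and the hypothesis $t\log(t/e)\ge\log n+s$ to finish via the Uniform Encoding Lemma. Your explicit tie-breaking rules for $j^*$ and $S$ make the injectivity of $C$ slightly more transparent than in the paper's terser version, but the approach is the same.
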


Before proving \thmref{urns}, we note that, for any constant $\eps >0$
and all sufficiently large $n$, taking
\[
  t = \Ceil{\frac{(1+\eps)\log n}{\log\log n}}
\] 
satisfies the requirements of \thmref{urns},  since then
\begin{align*}
  t \log t &\ge \frac{(1 + \eps) \log n}{\log \log n} 
  \log \left( \frac{(1 + \eps) \log n}{\log \log n}\right) 
  = \frac{(1 + \eps) \log n}{\log \log n} 
  \left( \log\log n - \log\left(\frac{\log\log n}{1 + \eps}\right)
  \right) \\
  &= (1 + \eps) \log n 
  \left( 1 - \frac{\log\left(\frac{\log\log n}{1 + \eps}\right)}{\log\log n}
     \right) 
  = (1+\eps)\log n - o(\log n) \enspace.
\end{align*}
Then,
\[
  t \log (t/e) = t \log t - t \log e = (1+\eps)\log n 
  - o(\log n) \ge \log n + s \enspace ,
\]
for sufficiently large $n$, as claimed.

\begin{proof}[Proof of \thmref{urns}]
  For each $i\in\{1,\ldots,n\}$, let $b_i$ denote the index of the urn
  chosen for the $i$-th ball. The sequence $b = (b_1,\ldots,b_n)$ is
  sampled uniformly at random from a set of size $n^n$, and this
  choice will be used in our encoding argument.

  Suppose that urn $j$ contains $t$ or more balls. Then,
  we encode the sequence $b$
  with the value $j$ ($\log n$ bits), followed by a code that describes $t$ of
  the balls in urn $j$ ($\log\binom{n}{t}$ bits), followed by the remaining $n-t$ values in
  $b$ that cannot be deduced from the preceding
  information ($(n-t)\log n$ bits).  Thus, we get
  \begin{align*}
    |C(b)| &= \log n + \log\binom{n}{t} + (n-t)\log n \\
           &\le \log n + t\log n - t\log t + t\log e + (n-t)\log n
             \tag{using \eqref{log-n-choose-k}} \\
           & = n \log n + \log n - t\log t + t\log e \\
           & \le n \log n - s \tag{by the choice of $t$} \\
           & = \log n^n - s
  \end{align*}
  bits. We conclude the proof by applying the Uniform Encoding Lemma.
\end{proof}

\subsection{Linear Probing}

Studying balls in urns as in the previous section is useful when 
analyzing hashing with chaining
(see \emph{e.g.}~Morin~\cite[Section~5.1]{morin:open}). A more practically
efficient form of hashing is \emph{linear probing}.  In a linear
probing hash table, we hash the elements of the set
$X = \{x_1, \ldots, x_n\}$ into a hash table of size $m=cn$, for some
fixed $c> 1$. We are given a hash function
$h : X \to \{1, \ldots, m\}$ which we assume to be a uniform random
variable. To insert $x_i$, we try to place it at table position
$j=h(x_i)$. If this position is already occupied by one of
$x_1,\ldots,x_{i-1}$, we try table location
$(j+1)\bmod m$, followed by $(j+2)\bmod m$, and so on, until 
we find an empty spot
for $x_i$.  
To find a given element $x \in X$ in the hash table, we compute
$j = h(x)$, and we start a linear search from position $j$ 
until we encounter either $x$ or an empty position.
Assuming that the hash table has been created by inserting
the elements from $X$ successively according to the
algorithm above, 
we want to study the expected search time for some item
$x \in X$.

We call a maximal consecutive sequence of occupied table locations a
\emph{block}. (The table locations $m-1$ and $0$ are considered
consecutive.)

\begin{thm}\thmlabel{linear-probing}
  Let $n \in \N$, $c > e$.
  Suppose that a set $X = \{x_1, \dots, x_n\}$ of $n$ items 
  has been inserted into a hash table of size $m = cn$, using linear 
  probing.
  Let $t \in \N$, $t \geq 2$, such that
  \[
    (t-1) \log (c/e) - \log t - 3 \ge s \enspace ,
  \]
  and fix some $x\in X$. 
  Then the probability that the block containing
  $x$ has size exactly $t$
  is at most $2^{-s}$.
\end{thm}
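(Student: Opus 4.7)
The plan is to apply the Uniform Encoding Lemma to the random vector $(h(x_1),\ldots,h(x_n))$, which is uniform over $\{1,\ldots,m\}^n$, a set of size $m^n = (cn)^n$. The bad event to be encoded is the event that the block containing $x$ has size exactly $t$.

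The key combinatorial fact I would first establish is that, whenever $x$ lies in a block occupying positions $j,j+1,\ldots,j+t-1$, the items in this block are in bijection with the items whose hash values fall in $[j,j+t-1]$; in particular there are exactly $t$ of them. This uses both endpoints of the block: since position $j-1$ is empty, no item hashed outside $[j,j+t-1]$ could have probed into the block, and since $j+t$ is empty, no item hashed into $[j,j+t-1]$ could have overflowed past the block.

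Given this, I would encode $h$ by concatenating: $\log m$ bits specifying the block start $j$; $\log\binom{n-1}{t-1}$ bits specifying which $t-1$ other items (besides $x$) belong to the block; $t\log t$ bits specifying the hash values of the $t$ block items (each restricted to a set of $t$ positions); and $(n-t)\log m$ bits giving the remaining hash values in item-index order. All codewords have the same length, so the partial code is trivially prefix-free, and the decoder clearly recovers $h$ from the codeword.

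What remains is arithmetic. Compared to the full length $n\log m$, the savings equal $(t-1)\log m - \log\binom{n-1}{t-1} - t\log t$. Bounding the binomial via \eqref{log-n-choose-k} and using $m = cn$ reduces this to at least $(t-1)\log(c/e) + (t-1)\log(t-1) - t\log t$. The main fiddly point I anticipate is controlling the leftover $t\log t - (t-1)\log(t-1) = \log t + t\log(t/(t-1))$; the inequality $\log(1+x)\le x\log e$ bounds $t\log(t/(t-1))$ by $(t/(t-1))\log e \le 2\log e < 3$ for $t\ge 2$, which leaves the net savings at least $(t-1)\log(c/e) - \log t - 3 \ge s$. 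The Uniform Encoding Lemma then yields the claimed $2^{-s}$ bound. This constant-order slack is precisely what dictates the $-3$ in the hypothesis and explains why the statement must exclude $t=1$ (for which the encoding recovers no savings at all).
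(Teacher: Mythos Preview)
Your proof is correct and follows essentially the same encoding as the paper's own argument (block start, the $t-1$ other block members, their $t$ restricted hash values, then the remaining $n-t$ hash values); your use of $\binom{n-1}{t-1}$ rather than the paper's $\binom{n}{t-1}$ is in fact slightly tighter, and your explicit justification that the block members are exactly the items hashing into $[j,j+t-1]$ is a welcome detail the paper leaves implicit. One tiny slip: the identity you write as $t\log t-(t-1)\log(t-1)=\log t+t\log(t/(t-1))$ should have $\log(t-1)$ in place of $\log t$ (or ``$\le$'' in place of ``$=$''), but since $\log(t-1)\le\log t$ your subsequent bound and the final conclusion are unaffected.
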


\begin{proof}
  This is an encoding argument for 
  the sequence
  \[
    h = (h(x_1),h(x_2),\ldots,h(x_n)) \enspace ,
  \]
  that is drawn uniformly at random from a set of size
  $m^n = (cn)^n$.
  
  Suppose that $x$ lies in a block with $t$ elements.
  We encode $h$ by the first index $b$ of the block containing $x$;
  followed by the $t-1$ elements $y_1,\dots,y_{t-1}$ of this block
  (excluding $x$); followed by the hash
  values of $x$ and of $y_1,\dots,y_{t-1}$; followed by
  the $n-t$ hash values for the remaining elements in $X$.

  Since the values $h(x),h(y_1),h(y_2),\ldots,h(y_{t-1})$ are in the range
  $b,\ldots,b+t-1$ (modulo $m$), they can be encoded using
  $t\log t$ bits.  Therefore, 
  we obtain a codeword of length 
  \begin{align*}
    |C(h)| & = \overbrace{\log m}^{b} + \overbrace{\log\binom{n}{t-1}}^{y_1,\ldots,y_{t-1}} + \overbrace{t\log t}^{h(x),h(y_1),\ldots,h(y_{t-1})} + \overbrace{(n-t)\log m}^{\text{everything else}} \\
           & \le \log m + (t-1)\log n - 
             (t-1)\log(t-1) + (t-1)\log e + t\log t + (n-t)\log m \tag{by \eqref{log-n-choose-k}}\\
	  &=
         (n-t+1)\log m + (t-1)\log (m/c) + (t-1)\log e + \log (t-1)+ t\log(t/(t-1)) \tag{$m = cn$}\\
           & \leq n\log m - (t-1)\log c + (t-1)\log e + \log t + 3 \\
           & = \log m^n - (t-1)\log(c/e) + \log t + 3 
            \le \log m^n - s \enspace ,
  \end{align*}
  since we assumed that $t$ satisfies
  \[
    (t-1) \log (c/e) - \log t - 3 \ge s 
  \]
  and since for $t \geq 2$, we have
  \[
   t\log\frac{t}{t-1} = \frac{t}{\ln 2} \ln 
   \left(1 + \frac{1}{t-1}\right) 
   \leq \frac{t}{\ln 2} \cdot \frac{1}{t-1}
   \leq 3.
  \]
  The proof is completed by applying the Uniform Encoding Lemma.
\end{proof}

\begin{rem}
  The proof of \thmref{linear-probing} only works if the factor
  $c$ in the size $m = cn$ of the linear probing hash table is 
  $c > e$.
  We know from
  previous analysis that this is not necessary, and that any $c>1$ is
  sufficient~\cite[Theorem~9.8]{flajolet.sedgewick:aofa}. 
  We leave it as an open problem to find an encoding proof
  of \thmref{linear-probing} that works for any $c>1$.
\end{rem}

\begin{cor}
  Let $n \in \N$, $c > e$.
  Suppose that a set $X = \{x_1, \dots, x_n\}$ of $n$ items 
  has been inserted into a hash table of size $m = cn$, using linear 
  probing.
  Fix some $x\in X$. 
  Then, the expected search time for $x$ in the hash table is
  $O(1)$.
\end{cor}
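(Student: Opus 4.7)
The plan is to bound the expected search time for $x$ by the expected size $T$ of the block containing $x$, and then use the tail bound from \thmref{linear-probing} to show $T$ has expected value $O(1)$.

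First, I would observe that the number of probes needed to search for $x$ is at most $T$. A linear-probing search for $x$ starts at position $h(x)$ and scans forward until $x$ is found. Since $x$ itself was inserted by scanning forward from $h(x)$, both $h(x)$ and the final position of $x$ lie in the same block, so the search never leaves this block. Hence it suffices to bound the expected value of $T$.

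Next, I would apply \thmref{linear-probing} with $s=(t-1)\log(c/e)-\log t -3$ for each $t\ge 2$, which gives
\[
  \Pr\{T=t\}\le 2^{-s} = 8\,t\left(\frac{e}{c}\right)^{t-1},
\]
while for $t=1$ the trivial bound $\Pr\{T=1\}\le 1$ suffices. Summing,
\[
  \sum_{t\ge 1} t\,\Pr\{T=t\} \le 1 + 8\sum_{t=2}^{\infty} t^2 \left(\frac{e}{c}\right)^{t-1}.
\]
Because $c>e$, the ratio $e/c$ lies strictly in $(0,1)$, so the series on the right converges to a constant depending only on $c$. This gives expected block size, and hence expected search time, of $O(1)$.

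The proof is essentially mechanical; the only subtlety is justifying the reduction from the search time to the block size, but this is immediate from the definition of linear probing together with the fact that $x$ was inserted by the same forward-scan procedure. The remainder is a routine convergent geometric-type series, and the hidden constant blows up as $c\downarrow e$, which is expected given the restriction $c>e$ noted in the remark following \thmref{linear-probing}.
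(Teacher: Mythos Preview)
Your proof is correct and follows essentially the same approach as the paper: bound the search time by the block size $T$, apply \thmref{linear-probing} to get a geometric-type tail bound on $\Pr\{T=t\}$, and sum the resulting convergent series. The only cosmetic difference is that the paper first absorbs the $\log t$ and constant terms by introducing a threshold $t_0$ so as to obtain the cleaner bound $\Pr\{T=t+t_0\}\le (c/e)^{-t/2}$, whereas you plug the exact value $s=(t-1)\log(c/e)-\log t-3$ directly into the theorem; both routes yield the same $O(1)$ conclusion.
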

\begin{proof}
  Let $T$ denote the size of the block containing $x$ in the
  hash table. Let $t_0$ be a large enough constant. 
  Then, by \thmref{linear-probing}, the probability
  that $T = t + t_0$ is at most $2^{-t\log(c/e)/2}$,
  since then
  \[(t  + t_0 - 1)\log(c/e) - \log(t+t_0) - 3
  \geq
(t  + t_0)\log(c/e)/2 - 3
    \geq t\log(c/e)/2  \enspace. 
    \]
  Thus, the expected search time for $x$ is
  \begin{align*}
   \E\{T\} &= \sum_{t = 1}^\infty t\Pr\{T = t\} = \sum_{t=1}^{t_0} t\Pr\{T = t\} + \sum_{t = 1}^\infty (t+t_0)\Pr\{T = t + t_0\} \\
            &\le t_0 + \sum_{t = 1}^\infty (t+t_0)2^{-t \log (c/e)/2} 
            = t_0 + \sum_{t = 1}^\infty (t+t_0)(c/e)^{-t/2} = O(1) \enspace . 
  \end{align*}
\end{proof}

\subsection{Cuckoo Hashing}

Cuckoo hashing is relatively new hashing scheme that offers an
alternative to classic perfect hashing \cite{pagh.rodler:cuckoo}. 
We present a clever proof, due to 
P\u{a}tra\c{s}cu~\cite{patrascu:cuckoo}, that cuckoo hashing
succeeds with probability $1-O(1/n)$ 
(see also Haimberger~\cite{haimberger:cuckoo} for 
a more detailed exposition of the argument).

We again hash the elements of the set $X = \{x_1, \ldots, x_n\}$. The
hash table consists of two arrays $A$ and $B$, each of size $m = 2n$, and
two hash functions $h, g : X \to \{1, \ldots, m\}$ which are uniform
random variables. To insert an element $x$ into the hash table, we
insert it into $A[h(x)]$; if $A[h(x)]$ already contains an element
$y$, we insert $y$ into $B[g(y)]$; if $B[g(y)]$ already contains some
element $z$, we insert $z$ into $A[h(z)]$, etc. If an empty location
is eventually found, the algorithm terminates successfully. If the
algorithm runs for too long without successfully completing the
insertion, then we say that the insertion failed, and the hash table
is rebuilt using different newly sampled hash functions. Any element
$x$ either is held in $A[h(x)]$ or $B[g(x)]$, so we can search for $x$
in constant time.  The following
pseudocode describes this procedure more precisely:

\noindent{$\textsc{Insert}(x):$}
\begin{algorithmic}[1]
  \IF{$x = A[h(x)]$ \OR $x = B[g(x)]$}
    \RETURN
  \ENDIF

  \FOR{MaxLoop iterations}
    \IF{$A[h(x)]$ is empty}
      \STATE{$A[h(x)] \gets x$} 
      \RETURN
    \ENDIF
    \STATE{$x \leftrightarrow A[h(x)]$}
    \IF{$B[g(x)]$ is empty}
      \STATE{$B[g(x)] \gets x$}
      \RETURN
    \ENDIF
    \STATE{$x \leftrightarrow B[g(x)]$}
  \ENDFOR
  \STATE{$\textsc{Rehash}()$}
  \STATE{$\textsc{Insert}(x)$}
\end{algorithmic}

The threshold `MaxLoop' is to be specified later. To study the
performance of insertion in cuckoo hashing, we consider the random
bipartite \emph{cuckoo graph} $G = (A, B, E)$, where $|A| = |B| = m$
and $|E| = n$, with each vertex corresponding either to a location in
the array $A$ or $B$ above, and with edge multiset
$E = \{(h(x_i), g(x_i)) : 1 \leq i \leq n\}$.

An \emph{edge-simple} path in $G$ is a path that 
uses each edge at most once. 
One can check that if a successful insertion 
takes at least $2t$ steps, then the cuckoo
graph contains an edge simple path with at least $t$ edges.
Thus, in bounding the length of edge-simple
paths in the cuckoo graph, we bound the worst case insertion time.

\begin{lem}\lemlabel{cuckoo-path-length}
  Let $s \in \N$. Suppose that we insert a set $X = \{x_1, \dots, x_n\}$
  into a hash table using cuckoo hashing.
  Let $G$ be the resulting cuckoo graph. Then, 
  $G$ has an edge-simple path of length at least
  $s + \log n + O(1)$ with probability at most $2^{-s}$.
\end{lem}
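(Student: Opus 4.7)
The plan is to apply the Uniform Encoding Lemma to the uniformly random hash sequence $(h(x_1), g(x_1), \ldots, h(x_n), g(x_n))$, which takes values in a set of size $m^{2n}$. I will design a partial prefix-free code whose domain is exactly the set of hash sequences that produce a cuckoo graph containing an edge-simple path of length at least $t = s + \log n + O(1)$. The savings come from the fact that each edge on the path contributes one ``free'' hash value: once we know an endpoint $v_{j-1} \in A$ (resp.\ $B$) and which element $x_{i_j}$ corresponds to that edge, one of the two hash values $h(x_{i_j}), g(x_{i_j})$ is already determined, leaving only the other endpoint $v_j$ to encode.

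If there is a path of length at least $t$, pick a canonical one (e.g., lexicographically smallest) of length exactly $t$, with vertices $v_0, v_1, \ldots, v_t$ alternating between $A$ and $B$, and distinct edge indices $i_1, \ldots, i_t \in \{1, \ldots, n\}$. The codeword for $(h, g)$ consists of three concatenated parts of fixed lengths (hence the code is prefix-free): (a) the starting vertex $v_0 \in A \cup B$, in $\log(2m)$ bits; (b) for $j = 1, \ldots, t$, the pair $(i_j, v_j)$, where $i_j$ costs $\log n$ bits and $v_j$ costs $\log m$ bits (since its side is forced by the parity of $j$); (c) the full hash pair $(h(x_i), g(x_i))$ for each of the $n - t$ elements not on the path, costing $2\log m$ bits each.

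The decoder proceeds in the obvious way: read $v_0$, then iteratively read $(i_j, v_j)$ to recover the path; for each path edge the two hash values $h(x_{i_j}), g(x_{i_j})$ are determined by which of $v_{j-1}, v_j$ is in $A$ and which is in $B$; finally read the remaining hash pairs from part (c). The total codeword length is
\[
  \log(2m) + t(\log n + \log m) + 2(n - t)\log m = 2n\log m + \log(4n) - t,
\]
where the cancellation $t\log n - t\log(2n) = -t$ (using $m = 2n$) is exactly what yields the per-edge savings. Setting $t \ge s + \log(4n) = s + \log n + 2$ makes this at most $\log m^{2n} - s$, so the Uniform Encoding Lemma gives the claimed probability bound $2^{-s}$.

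The main obstacle is designing the encoding so that the ``$t$ bits of savings from a path'' are captured correctly — specifically, recognizing that we must encode $v_j$ rather than, say, both hash values of edge $i_j$, and observing that the bipartition of the cuckoo graph is precisely what lets one endpoint of each path edge serve double duty. A secondary issue is fixing a canonical path deterministically, but any tie-breaking rule suffices since the code only needs to be an injection from its domain.
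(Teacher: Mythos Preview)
Your proof is correct and follows essentially the same approach as the paper: encode the starting vertex (including one bit for its side), the ordered sequence of edge labels $i_1,\ldots,i_t$, the ordered sequence of remaining path vertices $v_1,\ldots,v_t$, and then the $2(n-t)$ remaining hash values directly. The arithmetic and the resulting threshold $t \ge s + \log n + O(1)$ match the paper's proof exactly, with only cosmetic differences in how the pieces are grouped.
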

\begin{proof}
  We encode $G$ by presenting its set of edges. Since 
  each endpoint of
  an edge is chosen independently and uniformly at random from a set
  of size $m$, the set of all edges is chosen uniformly at random from
  a set of size $m^{2n}$.

  Suppose some vertex $v \in A \cup B$ is the endpoint of an 
  edge-simple path
  of length $t$; such a path has $t + 1$ vertices and $t$ edges. 
  Each edge in the path corresponds to an element in $X$.
  In
  the encoding, we present the indices of the elements in $X$ corresponding
  to the $t$ edges of the path 
  in order; then,
  we indicate whether $v \in A$ or $v \in B$; and we give the $t + 1$
  vertices in order starting from $v$; followed by the remaining
  $2n - 2t$ endpoints of edges of the graph. This code has length
  \begin{align*}
    |C(G)| &= t \log n + 1 + (t + 1) \log m + (2n - 2t) \log m\\
           &= 2n \log m + t \log n - t \log m + \log m + O(1) \\
           &= \log m^{2n} - t + \log n + O(1) \tag{since $m = 2n$} \\
           &\leq \log m^{2n} - s
  \end{align*}
  for $t \geq s + \log n + O(1)$. We finish by applying the Uniform
  Encoding Lemma.
\end{proof}
This immediately implies that a successful insertion takes time at
most $4 \log n + O(1)$ with probability $1 - O(1/n)$. Moreover,
selecting `MaxLoop' to be $4\log n + O(1)$, we see that a rehash
happens only with probability $O(1/n)$.

One can prove that
the cuckoo hashing insertion algorithm fails if and only if  
some subgraph of
the cuckoo graph contains more edges than vertices, since edges
correspond to keys, and vertices correspond to array locations.
\begin{lem}\lemlabel{cuckoo-failure}
  The cuckoo graph has a subgraph with more edges than vertices with
  probability $O(1/n)$. That is, cuckoo insertion
  succeeds with probability $1 - O(1/n)$.
\end{lem}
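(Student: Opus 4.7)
The plan is to decompose the failure event by the number $t$ of edges in a \emph{minimal} bad subgraph of the cuckoo graph, and handle each $t$ via its own encoding argument on the uniform hash sequence $h = ((h(x_i), g(x_i)))_{i=1}^{n} \in [m]^{2n}$.

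First I would pin down the shape of a minimal subgraph $H$ with $|E(H)| > |V(H)|$. Removing any single edge, pendant vertex, or isolated vertex must destroy the bad property, which forces $|E(H)| = |V(H)| + 1$, every vertex of degree at least $2$, and total excess degree $\sum_{v} (\deg v - 2) = 2$. Hence $H$ has at most two vertices of degree $\geq 3$ and falls into one of three structural types: a theta graph (two degree-$3$ hubs joined by three internally disjoint paths), a figure-eight (one degree-$4$ hub carrying two cycles), or a dumbbell (two vertex-disjoint cycles linked by a path). In each case, $H$ is described by a constant number of hub vertices plus a small number of path pieces whose internal vertices all have degree exactly $2$.

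Writing $t = |E(H)|$ and $t-1 = |V(H)|$, I would encode $h$ under the assumption that such an $H$ exists. The codeword has four parts: (a) the $t$ keys of $X$ whose edges lie in $H$, presented in the order induced by a canonical traversal of $H$, costing $\log\bigl(n(n-1)\cdots(n-t+1)\bigr) \leq t \log n$ bits; (b) the shape of $H$ (one of three types plus the path/cycle lengths summing to $t$), costing $O(\log t)$ bits; (c) the $t-1$ vertex labels of $H$, listed once each in canonical order (hubs first, then internal vertices along each path), costing $(t-1)\log m$ bits; and (d) the remaining $n-t$ hash pairs, costing $2(n-t)\log m$ bits. The savings come from step (c): $H$ has $2t$ endpoint occurrences but only $t-1$ distinct vertex labels. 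Comparing to the baseline $2n \log m$ bits and using $m = 2n$, the encoding is shorter by $(t+1)\log m - t\log n - O(\log t) = \log n + t + 1 - O(\log t)$ bits. The Uniform Encoding Lemma then gives
\[
  \Pr\bigl\{\text{some minimal bad } H \text{ with } t \text{ edges exists}\bigr\} \leq \frac{2^{O(\log t) - t}}{n},
\]
and summing this rapidly convergent series over $t \geq 2$ yields the desired $O(1/n)$ bound.

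The main obstacle is the structural case analysis in step (c): confirming that a single uniform canonical ordering of hubs and path-internal vertices works across the theta, figure-eight, and dumbbell types, and respecting the bipartite parity constraints that further restrict which of these actually arise. A secondary subtlety is the cost of step (a): because we need not only the set of $t$ keys in $H$ but also their assignment to the $t$ canonically ordered edges of $H$, the bound is $t\log n$ rather than the cheaper $\log\binom{n}{t}$; happily, this extra $\log(t!)$ is exactly absorbed by the $(t+1)\log m$ savings obtained from listing each of $H$'s shared vertices only once.
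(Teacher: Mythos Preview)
Your proposal is correct and follows essentially the same route as the paper: the paper also isolates a minimal subgraph with one more edge than vertices, observes it is a cycle with a chord or two cycles joined by a path (your theta/figure-eight/dumbbell trichotomy), and encodes it by listing a starting vertex, the shape data in $O(\log t)$ bits, the path vertices and edge-keys in order, and the two surplus edges, yielding savings $\log n + t - O(\log t)$. The only cosmetic difference is that the paper absorbs the $-t + O(\log t)$ term into a single uniform $O(1)$ and applies the Uniform Encoding Lemma once, whereas you stratify by $t$ and sum; both are fine.
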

\begin{proof}
  Suppose that some vertex $v$ is part of a subgraph with more edges
  than vertices, and in particular a minimal such subgraph with $t +
  1$ edges and $t$ vertices. Such a subgraph appears exactly as in
  \figref{cuckoo-cycles}. By inspection, we see that for every such 
  subgraph, there are two edges
  $e_1$ and $e_2$ whose removal disconnects the graph into two paths
  of length $t_1$ and $t_2$ starting from $v$, where $t_1 + t_2 = t -
  1$.

  \begin{figure}
    \centering
    \begin{subfigure}[b]{0.3\textwidth}
      \includegraphics[scale=0.8]{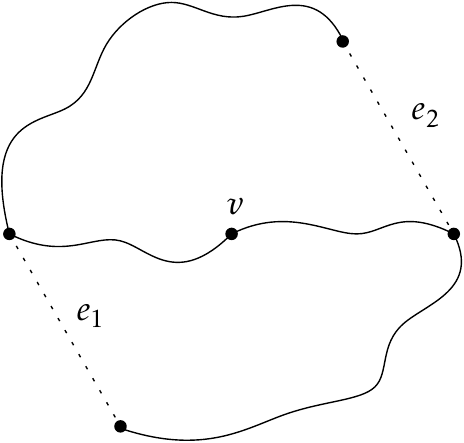}
      \caption{A cycle with a chord.}
    \end{subfigure}
    \quad\quad
    \begin{subfigure}[b]{0.6\textwidth}
      \includegraphics[scale=0.8]{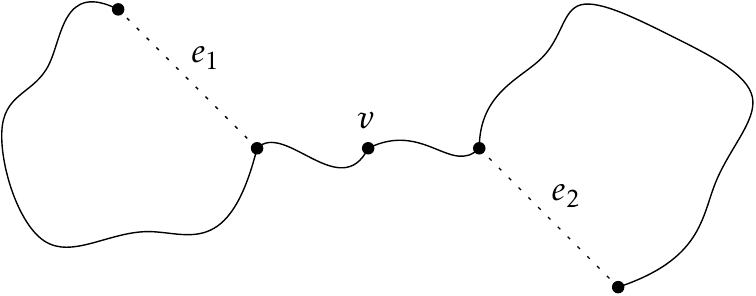}
      \caption{Two cycles connected by a path.}
    \end{subfigure}
    \caption{The potential minimal subgraphs of the cuckoo graph.}
    \figlabel{cuckoo-cycles}
  \end{figure}

  We encode $G$ by giving the vertex $v$ ($\log m$ bits); and presenting Elias
  $\delta$-codes for the values of $t_1$ and $t_2$ and for the
  positions of the endpoints of $e_1$ and $e_2$ ($O(\log t)$ bits); then the indices
  of the edges of
  the above paths in order ($(t-1)\log n$ bits); then the vertices of the paths in 
  order ($(t-1)\log m$ bits);
  and the indices of the edges $e_1$ and $e_2$ ($2 \log n$ bits); and finally the 
  remaining $2n - 2(t + 1)$ endpoints of edges in the graph
  ($(2n - 2(t + 1))\log n$ bits). Such a code has length
  \begin{align*}
    |C(G)| &= \log m + O(\log t) + (t - 1)(\log n + \log m) + 2\log n + (2n - 2(t + 1))\log m \\
           &= 2n \log m + (t + 1) \log n - (t + 2) \log m + O(\log t) \\
           &= 2n \log m + (t + 1) \log n - (t + 2) \log n - t + O(\log t) \tag{since $m = 2n$} \\
           &\le \log m^{2n} - \log n + O(1) \enspace .
  \end{align*}
  We finish by applying the Uniform Encoding Lemma.
\end{proof}

\subsection{2-Choice Hashing}

We showed in \secref{urns} that if $n$ balls are thrown independently
and uniformly at random into $n$ urns, then the maximum number of
balls in any urn is $O(\log n/\log \log n)$ with high probability. In
\emph{2-choice hashing}, each ball is instead given a choice of two
urns, and the urn containing the fewer balls is preferred.

More specifically, we are given two hash functions $h, g : X \to \{1,
\ldots, m\}$ which are uniform random variables. Each value of $h$ and
$g$ points to one of $m$ urns. The element $x \in X$ is added to the
urn containing the fewest elements between $h(x)$ and $g(x)$ during an
insertion. The worst case search time is at most the maximum number of
hashing collisions, or the maximum number of elements in any urn.

Perhaps surprisingly, the simple change of having two choices instead
of only one results in an exponential improvement over the strategy of
\secref{urns}. The concept of 2-choice hashing was first studied by 
Azar~\etal~\cite{azar:multiplechoice}, who showed that the expected
maximum size of an urn is $\log \log n + O(1)$. Our encoding argument
is based on V\"{o}cking's use of witness trees to analyze 2-choice
hashing~\cite{vocking:witness}.

Let $G = (V, E)$ be the random multigraph with $V = \{1, \ldots, m\}$,
where $m = cn$ for some constant $c > 8$, and
$E = \{(h(x), g(x)) : x \in X\}$. Each edge in $E$ is
labeled with the element $x \in X$ that it corresponds to.

\begin{lem}\lemlabel{two-choice-two-cycles}
  The probability that $G$ has a subgraph with more edges than 
  vertices is
  $O(1/n)$.
\end{lem}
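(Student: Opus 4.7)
The plan is to mimic the encoding argument from \lemref{cuckoo-failure} almost verbatim, with the only new ingredient being that $\log m = \log n + \log c$ with $\log c > 3$ (since $c > 8$), which is even better than the $\log c = 1$ we had in cuckoo hashing. The sample space for $G$ is $m^{2n}$ (each of the $n$ edges contributes two independent endpoints chosen uniformly from $m$), so I plan to show that whenever $G$ has a subgraph with more edges than vertices, we can encode $G$ with at most $\log m^{2n} - \log n + O(1)$ bits; the Uniform Encoding Lemma then yields probability $O(1/n)$.

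First I would observe that any subgraph with more edges than vertices contains a \emph{minimal} such subgraph, and such a minimal subgraph has $t+1$ edges and $t$ vertices and is structured exactly as in \figref{cuckoo-cycles}. Hence there exist two edges $e_1,e_2$ whose removal splits the subgraph into two edge-simple paths of lengths $t_1$ and $t_2$ (with $t_1+t_2 = t-1$) sharing a common endpoint $v$. Note that since $G$ here is a multigraph rather than bipartite, the pictures are the same but we do not need the bipartition; the encoding works just as well.

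Next I would write down the identical encoding as in \lemref{cuckoo-failure}: the vertex $v$ ($\log m$ bits); Elias $\delta$-codes for $t_1,t_2$ and for the positions of $e_1,e_2$ along the concatenated path ($O(\log t)$ bits); the $t-1$ indices into $X$ of the path edges, in order ($(t-1)\log n$ bits); the $t-1$ subsequent vertices of the path ($(t-1)\log m$ bits); the two indices of $e_1,e_2$ in $X$ ($2\log n$ bits); and finally the remaining $2n-2(t+1)$ endpoints of the other edges ($(2n-2(t+1))\log m$ bits). Collecting $\log m$ and $\log n$ terms gives
\[
|C(G)| = (2n-t-2)\log m + (t+1)\log n + O(\log t) \enspace .
\]

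The only place the analysis differs from cuckoo is in the arithmetic, and this is where I expect the mild bookkeeping to happen but no real obstacle. Substituting $\log m = \log n + \log c$, the savings against the trivial $\log m^{2n} = 2n\log m$ bound come out to
\[
\log m^{2n} - |C(G)| = \log n + (t+2)\log c - O(\log t) \enspace .
\]
Because $c>8$ gives $\log c > 3$ and the minimal subgraph satisfies $t \ge 3$, the linear term $(t+2)\log c$ dominates the $O(\log t)$ overhead uniformly in $t$, so $|C(G)| \le \log m^{2n} - \log n + O(1)$. Applying \lemref{uel} with $s = \log n - O(1)$ finishes the proof, and also explains why even the weaker hypothesis $c>2$ would suffice for this particular lemma --- the assumption $c>8$ is needed only for later steps in the 2-choice analysis.
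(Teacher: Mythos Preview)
Your encoding is not injective, so it is not a valid code and the Uniform Encoding Lemma does not apply. The sample space is the sequence $(h(x_1),g(x_1),\ldots,h(x_n),g(x_n))\in\{1,\ldots,m\}^{2n}$, and to recover it you must know, for each edge of the minimal subgraph, \emph{which endpoint is $h(x)$ and which is $g(x)$}. In cuckoo hashing this comes for free: the graph is bipartite, and once you know which side of the bipartition the starting vertex lies on, the sides of all subsequent path vertices alternate, so the orientation of every path edge is determined. In 2-choice hashing the graph lives on the single vertex set $\{1,\ldots,m\}$, so nothing in your encoding determines these orientations; two configurations that differ only by swapping $h(x)\leftrightarrow g(x)$ for some $x$ in the minimal subgraph produce identical codewords. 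Your sentence ``the pictures are the same but we do not need the bipartition; the encoding works just as well'' is precisely where the gap is.

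The fix is what the paper does: spend one extra bit per edge of the minimal subgraph to record its orientation, at a cost of about $t$ additional bits. The savings then become $\log n + (t{+}2)\log c - t - O(\log t)$, which is $\ge \log n - O(1)$ exactly when $\log c > 1$, i.e.\ $c>2$. So your closing remark that $c>2$ suffices for this lemma is correct, but for a reason your argument as written does not supply; indeed, without the orientation bits your own computation would (incorrectly) suggest that any $c>1$ works.
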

\begin{proof}
  The proof is similar to that of \lemref{cuckoo-failure}. More
  specifically, we encode $G$ by giving the same encoding as in
  \lemref{cuckoo-failure}. However, since now $G$ is not bipartite,
  we cannot immediately deduce for an edge $uv$ in the encoding
  which endpoint corresponds to which hash function.
  Thus, for each edge $uv$, we store an additional bit indicating whether $u = h(x)$ and $v = g(x)$, or $u
  = g(x)$ and $v = h(x)$. This needs $t$ additional bits compared to \lemref{cuckoo-failure}. 
Our code thus has length
  \begin{align*}
    |C(G)| &= \log m + O(\log t) + (t - 1)(\log n + \log m) + 2 \log n + (2n - 2(t + 1))\log m + t  \\
           &= 2n \log m - \log n - t \log c + t + O(\log t) 
           \le \log m^{2n} - \log n + O(1) \enspace ,
  \end{align*}
  since $\log c > 1$.
\end{proof}

\begin{lem}\lemlabel{two-choice-component-size}
  $G$ has a component of size at least $(2/\log(c/8))\log n + O(1)$
  with probability $O(1/n)$.
\end{lem}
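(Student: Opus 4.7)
The plan is to give an encoding argument in the style of \lemref{cuckoo-failure} and \lemref{two-choice-two-cycles}, using the structural observation that every component of size at least $t$ contains a connected subtree on exactly $t$ vertices (take, e.g., the first $t$ vertices discovered by a canonical BFS inside the component together with their tree edges). Such a subtree, with exactly $t-1$ edges and no cycles, admits a much more economical encoding than $t-1$ ``arbitrary'' edges would, and the per-edge saving turns out to be $\log(c/8)$ bits---exactly the quantity appearing in the statement, and strictly positive thanks to the hypothesis $c>8$.

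If $G$ has a component of size at least $t$, I would fix a canonical subtree $T$ of exactly $t$ vertices inside it, choose a canonical root $v\in T$, and encode $G$ (an element of $\{1,\ldots,m\}^{2n}$) by writing down: (i) the root $v$, using $\log m$ bits; (ii) the canonical DFS traversal of $T$ from $v$, recorded as a balanced-parenthesis ``descend/pop'' sequence in $2(t-1)$ self-delimiting bits; (iii) for each descend step in order, the index of the element $x\in X$ labelling the traversed edge ($\log n$ bits), the identity of the newly discovered vertex ($\log m$ bits), and an orientation bit indicating whether $h(x)$ or $g(x)$ is the parent; and (iv) the pair $(h(x),g(x))$ for each of the remaining $n-(t-1)$ elements of $X$ not used by $T$, at $2\log m$ bits each.

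A short arithmetic exercise using $m=cn$ then gives
\[
  |C(G)| = 2n\log m + \log m - (t-1)\log(c/8) \enspace ,
\]
so choosing $t = (2/\log(c/8))\log n + O(1)$ with a sufficiently large hidden constant makes $|C(G)| \le 2n\log m - \log n = \log m^{2n} - \log n$, and the Uniform Encoding Lemma delivers the desired $O(1/n)$ bound.

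The main obstacle is reconciling the per-tree-edge saving with the fixed overheads. Each of the $t-1$ tree edges costs only $\log n + \log m + 3$ bits under my encoding (element index, new endpoint, orientation, and the two parenthesis bits amortized per edge) rather than the $2\log m$ bits it would cost as a generic edge, yielding a saving of $\log m - \log n - 3 = \log(c/8)$. This saving has to accumulate across the $t-1$ tree edges and cover \emph{both} the $\log n$ ``probability budget'' \emph{and} the $\log m = \log(cn)$ overhead from encoding the root, which is precisely why the threshold takes the form $(2/\log(c/8))\log n + O(1)$ rather than $(1/\log(c/8))\log n + O(1)$. Prefix-freeness is automatic once $t$ is fixed (all four pieces have known lengths), and injectivity is clear because pieces (iii)--(iv) together specify every hash value in $G$.
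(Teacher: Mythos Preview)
Your proof is correct and follows essentially the same approach as the paper: both encode a tree on $t$ vertices via the root, a $2(t-1)$-bit traversal string, and per-edge data consisting of the element index, the new endpoint, and an orientation bit, arriving at the identical bound $|C(G)| = 2n\log m + \log m - (t-1)\log(c/8)$. The only difference is cosmetic---the paper takes a spanning tree of the component (so $t$ is the component size) whereas you extract a subtree of exactly the threshold size $t$, which is arguably cleaner since it keeps $t$ fixed and makes prefix-freeness immediate.
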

\begin{proof}
  Suppose $G$ has a connected component with $t$ vertices and at least
  $t-1$ edges. This component has a spanning tree $T$. Pick an
  arbitrary vertex as the root of $T$. To encode $G$, we first specify
  a bit string encoding the shape of $T$.
  This can be done in $2(t - 1)$ bits, tracing a pre-order traversal of $T$, 
  where a 0 bit indicates that the path to the next node goes up, and a 1 bit indicates
  that the path goes down.
  Then, we encode the $t$ vertices of $T$, in the order as they are first encountered by
  the pre-order traveral ($t \log m$ bits). Furthermore, 
  for each edge $e = uv$ of $T$, 
  we store $\log n$ bits encoding the element $x \in X$ 
  corresponding to $e$ and a bit
  indicating whether $u = h(x)$ and $v = g(x)$, or $u
  = g(x)$ and $v = h(x)$. Again, the edges are stored
  in the order and direction as they are encountered by the 
  pre-order traversal. As $T$ has $t-1$ edges, 
  this takes $(t-1)(\log n + 1)$ bits.
  Finally we directly encode the remaining $2(n - t + 1)$ endpoints of edges in $G$,
  in $2(n-t+1)\log m$ bits.
 In total, our code has length
  \begin{align*}
    |C(G)| &= 2(t - 1) + t \log m + (t - 1) (\log n + 1) + 2 (n - t + 1) \log m \\
           &= 2n\log m + t \log c - \log n + 3t - 3 - 2t \log c + 2\log n + 2 \log c \tag{since $m = cn$}\\
           &= \log m^{2n} - t \log c + 3t + \log n + O(1) 
           \le \log m^{2n} - s \enspace ,
  \end{align*}
  as long as $t$ is such that
  \[
    t \geq \frac{s + \log n + O(1)}{\log (c/8)} \enspace .
  \]
  In particular, for $s = \log n$, the Uniform
  Encoding Lemma tells us that $G$ has a component of size at least
  $(2/\log (c/8))\log n + O(1)$ with probability $O(1/n)$.
\end{proof}

Suppose that when $x$ is inserted, it is placed in an urn with $t$
other elements. Then, we say that the \emph{age} of $x$ is $a(x) = t$.
\begin{thm}
  Fix $c > 8$ and
  suppose we insert $n$ elements into a table of size $cn$
  using $2$-choice hashing.
  With probability $1 - O(1/n)$
  all positions in the hash table contain at most 
  $\log \log n + O(1)$ elements.
\end{thm}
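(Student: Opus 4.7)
My plan is to condition on the joint event that the multigraph $G$ has no subgraph with more edges than vertices (\lemref{two-choice-two-cycles}) \emph{and} that every connected component of $G$ has size at most $(2/\log(c/8))\log n + O(1)$ (\lemref{two-choice-component-size}). A union bound shows that both hold simultaneously with probability $1 - O(1/n)$, so it suffices to prove deterministically, under this conditioning, that no bin ever holds more than $\log\log n + O(1)$ elements.

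To establish this, I would use a \emph{witness tree} construction based on the ages of elements. Suppose at the end of the process some bin $B_0$ holds at least $t$ elements; then the $t$-th element inserted into $B_0$, call it $x$, must have $a(x) \geq t - 1$, so both of its candidate bins, $B_0$ and its other choice $B_1$, already held at least $t - 1$ elements just before $x$ was inserted. Recursing, within each of $B_0$ and $B_1$, restricted to elements inserted before $x$, the most recent such element has age $\geq t - 2$ and in turn witnesses two further bins that were already nearly full at the appropriate time. Iterating for $\approx t$ levels produces a rooted binary tree with $\Theta(2^t)$ nodes, each labelled by an element of $X$ (equivalently an edge of $G$), where the nodes at depth $d$ have age $\geq t - 1 - d$.

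The key structural claim is that all edges labelling this tree lie in the single connected component of $G$ containing $B_0$, and that almost all of them are distinct. Any two tree nodes sharing the same label would force a cycle in that component, and two independent coincidences would produce a subgraph with more edges than vertices; under the conditioning, only $O(1)$ such collisions can occur. Hence the component of $B_0$ contains $\Omega(2^t)$ distinct edges and therefore $\Omega(2^t)$ vertices, and combined with the $O(\log n)$ component-size bound this forces $t \leq \log\log n + O(1)$.

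The main obstacle will be making the witness tree construction fully rigorous: one has to verify that at each recursive step the chosen predecessor in a bin that held at least $k$ elements genuinely has age $\geq k - 1$, and that the bookkeeping between repeated labels in the tree and cycles in $G$ really does leave $\Omega(2^t)$ distinct witnesses under the no-two-cycles event. Once these details are in place, the conclusion follows immediately from the two preceding lemmas.
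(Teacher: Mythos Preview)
Your proposal is correct and follows essentially the same route as the paper: condition on the two preceding lemmas, build a witness tree from a high-age element, argue that under the no-dense-subgraph event the tree has $\Omega(2^t)$ distinct edges in a single component, and conclude $t\le\log\log n+O(1)$ from the component-size bound. The paper handles duplicates by only recursing on the leftmost occurrence of a repeated element on each level and then shows that the resulting subgraph $G_T$ can lose at most one node from being a complete binary tree; this is exactly the ``at most one collision'' bookkeeping you anticipate as the main obstacle.
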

\begin{proof}
  Suppose that some element $x$ has $a(x) = t$. This leads to a binary
  \emph{witness tree} $T$ of height $t$ as follows: The root of $T$ is
  the element $x$. When $x$ was inserted into the hash table, it had
  to choose between the urns $h(x)$ and $g(x)$, both of which
  contained at least $t - 1$ elements; in particular, $h(x)$ has a unique
  element $x_h$ with $a(x_h) = t - 1$, and $g(x)$ has a unique 
  element $x_g$
  with $a(x_g) = t - 1$. The elements $x_h$ and $x_g$ become the left and
  the right child of $x$ in $T$. The process continues recursively. If
  some element appears more than once on a level, we only recurse on
  its leftmost occurrence. See \figref{2-choice-example} for an example. 

  \begin{figure}
    \centering
    \includegraphics{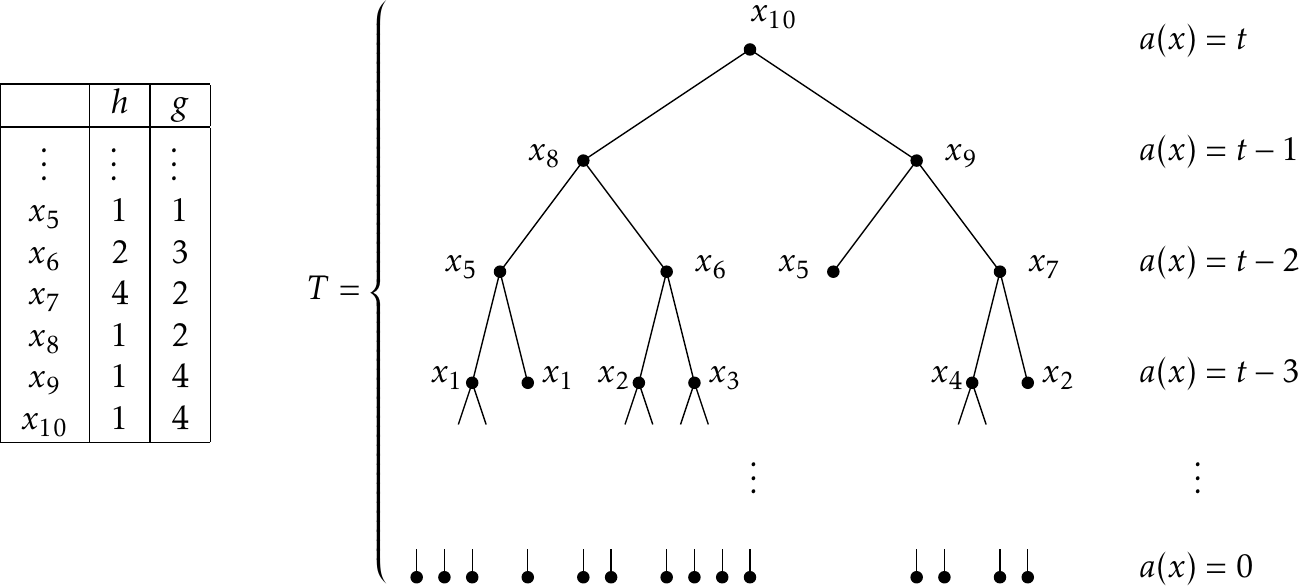}
    \caption{The tree $T$ is a witness tree for the 2-choice hashing
      instance with elements $\dots, x_5, x_6, x_7, x_8, x_9, x_{10}$
      inserted in order and according to the hash functions $h$ and
      $g$.}
    \figlabel{2-choice-example}
  \end{figure}

  Using $T$, we can iteratively define a connected subgraph $G_T$ of $G$.
  Initially, $G_T$ consists of the single node in $V$ corresponding
  to the bucket that contains the root element $x$ of $T$.
  Now, to construct $G_T$, we go through $T$ level by level, 
  starting from the root.
  For $i =  t,\dots, 0$, let
  $L_i$ be all elements in $T$ with age $i$, and let
  $E_i = \{(h(x), g(x)) : x \in L_i\}$ be the corresponding
  edges in $G$.
  When considering $L_i$, we add 
  to $G_T$ all edges in $E_i$, together with their endpoints,
  if they are not in $G_T$ already.
  Since every element appears at most once in $T$, this
  adds $|L_i|$ new edges to $G_T$. The number of vertices in $G_T$
  increases by at most $|L_i|$. In the end, $G_T$ contains
  $\sum_{i = 0}^{t} |L_i|$ edges. Since $G_T$ is connected, 
  with probability $1 - O(1/n)$, the number of edges in $G_T$ does not 
  exceed the number of vertices, by \lemref{two-choice-two-cycles}. 
  We assume that this is the case.
  Since initially $G_T$ had 
  one vertex and zero edges, the iterative procedure must add
  at least $\sum_{i = 0}^{t} |L_i| - 1$ new vertices to $G_T$.
  This means that all nodes in $T$ but one must
  have two children, so we can conclude that $T$ is a
  complete binary tree with at most one subtree removed. It follows
  that $T$ (and hence $G_T$) has at least $2^{t}$ vertices.
  If we choose $t = \ceil{\log \log n + d}$, then 
  $2^t \geq 2^d \log n$. 
  We know from
  \lemref{two-choice-component-size} that this 
  happens with probability $O(1/n)$
  for a sufficiently large choice of the constant $d$.
\end{proof}

\begin{rem}
  The arguments in this section can be refined by more carefully
  encoding the shape of trees using \emph{Cayley's formula}, which says 
  that
  there are $t^{t - 2}$ unrooted labelled trees on $t$
  nodes~\cite{cayley:theorem}. In particular, an unrooted tree with $t$
  nodes and $m$ choices for distinct node labels can be encoded using
  $\log \binom{m}{t} + (t - 2) \log t$ bits instead of
  $t \log m + 2(t - 1)$ bits. We would then recover the same results
  for hash tables of size $m = cn$ with $c > 2e$ instead of $c > 8$. In
  fact, it is known that for \emph{any}
  $c > 0$ searching in 2-choice hashing takes time 
  $1/c + O(\log \log n)$~\cite{berenbrink:densehashing}. We leave it as an
  open problem to find an encoding argument for this result when
  $c > 0$.
\end{rem}

\begin{rem}
  \emph{Robin Hood hashing} is another hashing solution which achieves
  $O(\log \log n)$ worst case running time for all
  operations~\cite{devroye:robin}. The original analysis is difficult,
  but might be amenable to a similar approach as we used in
  this section. Indeed, when a Robin Hood hashing operation takes a
  significant amount of time, a large witness tree is again implied,
  which suggests an easy encoding argument. Unfortunately, this
  approach appears to involve unwieldy hypergraph encoding.
\end{rem}

\subsection{Bipartite Expanders}

Expanders are families of sparse graphs which share many connectivity
properties with the complete graph. These
graphs have received much research attention, and have led to many
applications in computer science. See, for instance, the survey by
Hoory, Linial, and Wigderson~\cite{hoory.linial.ea:expander}.

The existence of expanders was originally
established through probabilistic arguments~\cite{pinsker:expanders}.
We offer an encoding argument to prove that a certain random bipartite
graph is an expander with high probability. There are many different
notions of expansion. We will consider what is commonly known as
\emph{vertex expansion} in bipartite graphs: For some fixed
$0 < \alpha \leq 1$, a bipartite graph $G = (A, B, E)$ is called a
\emph{$(c, \alpha)$-expander} if
\begin{align*}
  \min_{\substack{{A' \subseteq A}\\{|A'| \leq \alpha |A|}}} \frac{|N(A')|}{|A'|} \geq c \enspace ,
\end{align*}
where $N(A') \subseteq B$ is the set of neighbours of $A'$ in $G$.
That is, in a $(c, \alpha)$-expander, every set  of vertices in
$A$ that is not too large is ``expanded'' by a factor $c$ by taking
one step in the graph.

Let $G = (A, B, E)$ be a random bipartite multigraph where
$|A| = |B| = n$ and where each vertex of $A$ is connected to three
vertices of $B$ chosen independently and uniformly at random (with
replacement). The following theorem shows that $G$ is an expander.
The proof of this theorem usually involves a messy sum that contains
binomial coefficients and probabilities: see, for example, 
Motwani and Raghavan~\cite[Theorem~5.3]{motwani.raghavan:randomized},
Pinsker~\cite[Lemma~1]{pinsker:expanders}, or 
Hoory, Linial, and Wigderson~\cite[Lemma~1.9]{hoory.linial.ea:expander}.

\begin{thm}
  There exists a constant $\alpha >0$ such that $G$ is a
  $(3/2,\alpha)$-expander with probability at least $1 - O(n^{-1/2})$.
\end{thm}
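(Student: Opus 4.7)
The plan is to view $G$ as sampled by drawing $3n$ independent uniform endpoints in $B=\{1,\dots,n\}$, one per edge-slot incident to a vertex of $A$, giving an outcome space of size $n^{3n}$. The bad event is that some $A' \subseteq A$ with $|A'|=k$ for some $1 \le k \le \lfloor \alpha n \rfloor$ satisfies $|N(A')| < (3/2)k$. For each bad graph I would fix a canonical witness, say the lexicographically smallest $A'$ of minimum size $k$, then set $b = \lceil 3k/2\rceil - 1$ and take $B'$ to be the unique size-$b$ subset of $B$ obtained by padding $N(A')$ with the smallest-indexed vertices of $B \setminus N(A')$.

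The partial prefix-free code $C$ records, in order: an Elias $\delta$-code for $k$ (which pins down both $k$ and $b$ in $O(\log k)$ bits); the set $A'$ in $\log\binom{n}{k}$ bits; the set $B'$ in $\log\binom{n}{b}$ bits; the $3k$ endpoints of edges out of $A'$, each of which lies in $B'$ and so costs $\log b$ bits; and the remaining $3(n-k)$ endpoints in $3(n-k)\log n$ bits. Subtracting from the raw length $\log n^{3n} = 3n\log n$ yields savings
\[
  3k\log(n/b) - \log\binom{n}{k} - \log\binom{n}{b} - O(\log k).
\]
Using $\log\binom{n}{j} \le j\log(en/j)$ from \eqref{log-n-choose-k} and $b \le (3/2)k$, a routine simplification bounds this from below by $(k/2)\log(n/k) - Ck$ for some absolute constant $C$. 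For $k=1$ this is $(1/2)\log n - O(1)$; for larger $k$ the $\log(n/k)$ factor grows, and choosing $\alpha$ so small that $(1/2)\log(1/\alpha) > C$ keeps the bound at least $(1/2)\log n - O(1)$ uniformly over $1 \le k \le \lfloor\alpha n\rfloor$. Applying the Uniform Encoding Lemma with $s = (1/2)\log n - O(1)$ then bounds the failure probability by $2^{-s} = O(n^{-1/2})$, which is exactly the claim.

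The main obstacle is fixing $\alpha$ so that the linear term $-Ck$ never swamps the $(k/2)\log(n/k)$ contribution as $k$ approaches $\alpha n$: the small-$k$ regime (especially $k=1$, where $b=1$ is forced) dictates the final exponent $-(1/2)\log n$, while the large-$k$ regime forces $\alpha$ to be a sufficiently small absolute constant. Once the constants are balanced, the rest is a mechanical application of the Uniform Encoding Lemma.
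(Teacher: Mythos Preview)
Your proposal is correct and follows essentially the same encoding as the paper: encode $k$, the witnessing set $A'$, the neighbourhood (you pad it to a fixed size $b=\lceil 3k/2\rceil-1$, the paper simply writes $\log\binom{n}{3k/2}$), the $3k$ constrained edge-endpoints, and the remaining $3(n-k)$ endpoints, yielding the same savings $(k/2)\log(n/k)-\beta k-O(\log k)$ with $\beta=(3/2)\log(3/2)+(5/2)\log e$ and the same bottleneck $s(1)=(1/2)\log n-O(1)$. One small slip: $\log(n/k)$ \emph{decreases} as $k$ grows, so your uniformity claim needs a different justification; the paper handles this by noting that $s(k)$ is concave for $k\ge 4$, whence its minimum on $\{1,\dots,\lfloor\alpha n\rfloor\}$ lies at an endpoint, and for $\alpha<2^{-2\beta}$ the right endpoint gives $s(\alpha n)=\Omega(n)\gg s(1)$.
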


\begin{proof}
  We encode the graph $G$ by presenting its edge set. Since each edge
  is selected uniformly at random, the graph $G$ is chosen uniformly at
  random from a set of size $n^{3n}$.
  
  If $G$ is not a $(3/2, \alpha)$-expander, then there is some set $A'
  \subseteq A$ with $|A'|=k\le \alpha n$ and
  \[
  \frac{|N(A')|}{|A'|} < 3/2 \enspace .
  \]
  To encode $G$, we first give $k$ using an Elias $\gamma$-code; together 
  with the sets $A'$ and $N(A')$; and the edges between $A'$ and
  $N(A')$. Then we encode the rest of $G$, skipping the $3k\log n$
  bits devoted to edges incident to $A'$.  The key savings here come 
  because
  $N(A')$ should take $3k\log n$ bits to encode, but can actually be
  encoded in roughly $3k\log(3k/2)$ bits. Our code then has length
  \begin{align*}
    |C(G)| &= 2\log k + \log \binom{n}{k} + \log \binom{n}{3k/2} + 3k \log (3k/2) + (3n - 3k) \log n + O(1) \\
           &\le 2\log k + k\log n - k\log k + k\log e + (3k/2)\log n - (3k/2) \log (3k/2) \\
           & \hphantom{{}=2\log k} + (3k/2) \log e + 3k \log (3k/2) + (3n - 3k) \log n + O(1) \tag{by \eqref{log-n-choose-k}} \\
           &= 3n\log n - (k/2)\log n + (k/2)\log k + \beta k + 2\log k + O(1) \\
           &= \log n^{3n} - s(k)
  \end{align*}
  bits, where $\beta = (3/2) \log (3/2) + (5/2) \log e$ and
  \[
  s(k) = (k/2)\log n - (k/2)\log k - \beta k -
  2 \log k - O(1) \enspace.
  \]
  Since
  \[
    \frac{d^2}{dk^2} s(k) = \frac{4 - k}{2 k^2} \log e \enspace ,
  \]
  the function $s(k)$ is concave for all $k \geq 4$. Thus, $s(k)$ is
  minimized either when $k = 1, 2, 3, 4$, or when $k = \alpha n$. We have
  \[
    s(1) = (1/2)\log n + c_1,  \quad
    s(2) = \log n + c_2, \quad
    s(3) = (3/2) \log n + c_3, \quad 
    s(4) = 2 \log n + c_4, 
  \]
  for constants $c_1, c_2, c_3, c_4$. For $k=\alpha n$ we have
  \[
    s(\alpha n) = (\alpha n/2)\log \left(\frac{1}{2^{2\beta}
        \alpha}\right) - 2 \log \alpha n + c_5,
  \]
  for some constant $c_5$.
  Thus, $2^{-s(\alpha n)} = 2^{-\varOmega(n)}$ for
  $\alpha < (1/2)^{2\beta} \approx 0.002$. Now the Uniform Encoding
  Lemma gives the desired result. Indeed, the encoding works
  for all values of $k$, and it always saves at least 
  $s(1) = (1/2)\log n + O(1)$ bits.
  Thus, the construction fails with probability
  $O(n^{-1/2})$.
\end{proof}

\subsection{Permutations and Binary Search Trees}

We define a \emph{permutation} $\sigma$ of size $n$ to be a sequence
of $n$ pairwise distinct integers, sometimes denoted by
$\sigma = (\sigma_1, \dots, \sigma_n)$. The set
$\{\sigma_1, \dots, \sigma_n\}$ is called the
\emph{support} of $\sigma$. This slightly unusual
definition will serve us for the purpose of encoding. Except when
explicitly stated, we will assume that the support of a permutation
of size $n$
is precisely $\{1, \dots, n\}$. For any fixed support of
size $n$, the number of distinct permutations 
is $n!$.

\subsubsection{Analysis of Insertion Sort}
\seclabel{insertion-sort}

Recall the insertion sort algorithm for sorting a list
$\sigma = (\sigma_1,\ldots,\sigma_n)$ of $n$ elements:

\noindent{$\textsc{InsertionSort}(\sigma)$}
\begin{algorithmic}[1]
  \FOR{$i\gets 2$ \TO $n$}
     \STATE{$j \gets i$}
     \WHILE{$j>1$ \AND $\sigma_{j-1} > \sigma_j$}
         \STATE{$\sigma_j \leftrightarrow \sigma_{j-1}$
            \COMMENT{ swap }}\label{alg:isort:swap}
         \STATE{$j\gets j-1$}
     \ENDWHILE
  \ENDFOR
\end{algorithmic}

A typical task in the average-case analysis of algorithms is to
determine the number of times Line~\ref{alg:isort:swap} executes 
if $\sigma$ is a
uniformly random permutation of size $n$.  The answer
$\binom{n}{2}/2$ is an easy application of linearity of expectation:
For every one of the $\binom{n}{2}$ pairs of indices 
$p,q\in\{1,\dots,n\}$ with
$p<q$, the values initially stored at positions $\sigma_p$ and
$\sigma_q$ will eventually be swapped if and only if
$\sigma_p > \sigma_q$. This happens with probability $1/2$ in a
uniformly random permutation. A pair $p, q \in \{1, \dots, n\}$ with
$p < q$ and $\sigma_p > \sigma_q$ is called an
\emph{inversion}, so the number of times Line~\ref{alg:isort:swap} 
executes is the number
of inversions of $\sigma$.

A more advanced question is to ask for a concentration result on the
number of inversions. This is harder to tackle; because $>$
is transitive, the $\binom{n}{2}$ events being studied have a lot of
interdependence. In the following, we show how an encoding argument
leads to a concentration result.  The argument
presented here follows the same outline as Vit\'{a}nyi's analysis of
bubble sort \cite{vitanyi:analysis}, though without all the trappings
of Kolmogorov complexity.

\begin{thm}\thmlabel{insertion-sort}
  Let $\alpha \in (0, 1/e^2)$.
  A uniformly random permutation $\sigma$ of size $n$ has at most
  $\alpha n^2 - n + 2$ inversions with probability at most
  $2^{n \log(\alpha e^2) + O(\log n)}$. In particular, for a fixed
  $\alpha < 1/e^2$, this probability is $2^{-\varOmega(n)}$.
\end{thm}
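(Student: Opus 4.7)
The plan is to encode $\sigma$ using the trace of an insertion sort run. For a permutation $\sigma$ of size $n$, let $k_i$ be the number of times the swap line executes during iteration $i$ of $\textsc{InsertionSort}$, for $i \in \{2,\ldots,n\}$. Since each swap removes exactly one inversion and no non-inversion is ever swapped, $\sum_{i=2}^n k_i$ equals the number of inversions of $\sigma$. Crucially, $(k_2,\ldots,k_n)$ determines $\sigma$ uniquely: starting from the sorted list $(1,2,\ldots,n)$ and running backwards from $i = n$ down to $i = 2$, in step $i$ I pick up the element currently at position $i - k_i$ and reinsert it at position $i$. This recovers the array state just before iteration $i$, and iterating all the way back to $i=2$ yields $\sigma$.

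Let $K = \alpha n^2 - n + 2$ and define the partial code $C$ on the set of permutations with at most $K$ inversions. For any such $\sigma$, the associated sequence satisfies $k_i \geq 0$ and $\sum_{i=2}^n k_i \leq K$. A standard stars-and-bars count (introducing one slack variable for the unused budget) shows that the number of sequences $(k_2,\ldots,k_n)$ satisfying these constraints is exactly $\binom{K+n-1}{n-1}$. I can therefore take $C$ to be a fixed-length code of length $\log\binom{K+n-1}{n-1}$, which is trivially prefix-free.

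To bound this codeword length, I apply \eqref{log-n-choose-k} with $M = K + n - 1 = \alpha n^2 + 1$, obtaining
\[
  |C(\sigma)| \leq (n-1)\log\frac{(\alpha n^2 + 1)e}{n-1} = n\log(\alpha n e) + O(\log n).
\]
The sample space has size $n!$, and by \eqref{stirling-loose}, $\log(n!) = n \log n - n \log e + \Theta(\log n)$. The savings is therefore
\[
  s = \log(n!) - |C(\sigma)| \geq n\log\frac{n/e}{\alpha n e} + O(\log n) = -n\log(\alpha e^2) + O(\log n).
\]
The Uniform Encoding Lemma then delivers the stated bound $2^{-s} = 2^{n\log(\alpha e^2) + O(\log n)}$, and the hypothesis $\alpha < 1/e^2$ makes $\log(\alpha e^2) < 0$ so this is $2^{-\Omega(n)}$.

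The only conceptual step requiring care is verifying that the reverse insertion sort process genuinely reconstructs $\sigma$ from $(k_2,\ldots,k_n)$; everything else is bookkeeping. The choice $K = \alpha n^2 - n + 2$ in the theorem statement is precisely engineered so that $K + n - 1 = \alpha n^2 + 1$, which is what makes the $-n\log(\alpha e^2)$ term appear cleanly after combining the $\binom{n}{k}$ bound with Stirling's approximation; there is no real obstacle here, just a bit of arithmetic.
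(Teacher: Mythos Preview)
Your proof is correct and follows essentially the same approach as the paper: encode $\sigma$ by the swap counts $(k_2,\ldots,k_n)$ from insertion sort, observe that these determine $\sigma$, and bound the number of such sequences via stars-and-bars and \eqref{log-n-choose-k}. The only cosmetic difference is that the paper first encodes $m=\sum k_i$ in $2\log n$ bits and then the composition $\binom{m+n-2}{n-2}$, whereas you absorb the slack directly into a single count $\binom{K+n-1}{n-1}$; both lead to the same bound.
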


\begin{proof}
  We encode the permutation $\sigma$ by recording the execution of
  \textsc{InsertionSort} on $\sigma$. In particular, we record for
  each $i\in\{2,\ldots,n\}$, the number of times $m_i$ that 
  Line~\ref{alg:isort:swap}
  executes during the $i$-th iteration of
  $\textsc{InsertionSort}(\sigma)$. With this information, one can run
  the following algorithm to recover $\sigma$:
  
  \noindent{$\textsc{InsertionSortReconstruct}(m_2, \dots, m_n)$}:
  \begin{algorithmic}[1]
    \STATE{$\sigma \gets (1, \dots, n)$}
    \FOR{$i \gets n \textbf{ down to } 2$}
      \FOR{$j \gets i - m_i + 1$ \TO $i$}
        \STATE{$\sigma_j \leftrightarrow \sigma_{j-1}$
          \COMMENT{ swap }}
      \ENDFOR
    \ENDFOR
    \RETURN{$\sigma$}
  \end{algorithmic}

  We have to be slightly clever with the
  encoding. Rather than encode $m_2, m_3, \dots, m_n$ directly, we first
  encode $m=\sum_{i=2}^{n} m_i$ using $2\log n$ bits (since
  $m < n^2$). Given $m$, it remains to describe the partition of
  $m$ into $n-1$ non-negative integers $m_2,\ldots,m_n$; there are
  $\binom{m+n-2}{n-2}$ such partitions.\footnote{To see this, draw
    $m+n-2$ white dots on a line, then choose $n-2$ dots to colour
    black. This splits the remaining $m$ white dots into $n-1$
    groups whose sizes determine the values of $m_2,\ldots,m_n$.}
  
  Therefore, the values of $m_2,\ldots,m_n$ can be encoded using
  \[
    |C(\sigma)| = 2\log n + \log\binom{m+n-2}{n-2}
  \]
  bits and this is sufficient to recover the permutation $\sigma$.  By
  applying \eqref{log-n-choose-k}, we obtain
  \begin{align*}
    |C(\sigma)| & \le (n-2)\log(m+n-2) - (n-2)\log(n-2)  + (n-2)\log e + O(\log n) \\
      & \le n\log(m+n-2) - n\log n   + n\log e + O(\log n) \\
      & \le n\log(\alpha n^2) - n\log n  + n\log e + O(\log n) \tag{since $m \le \alpha n^2 - n + 2$}\\
      & = 2n\log n + n\log\alpha - n\log n  + n\log e + O(\log n) \\
      & = n\log n + n\log\alpha + n\log e + O(\log n) \\
      & = \log n! + n\log\alpha + 2n\log e + O(\log n) \tag{by \eqref{stirling-loose}} \\
      & = \log n! + n \log (\alpha e^2) + O(\log n) \enspace .
  \end{align*}
  Again, we finish by applying the Uniform Encoding Lemma.
\end{proof}

\begin{rem}
  \thmref{insertion-sort} is not sharp; it only gives a non-trivial
  probability when $\alpha < 1/e^2$.  To obtain a sharp bound, one can
  use the fact that $m_2,\ldots,m_n$ are independent and that $m_i$ is
  uniform over $\{0,\ldots,i-1\}$ together with the method of bounded
  differences \cite{mcdiarmid:on}. This shows that $m$ is concentrated in
  an interval of size $O(n^{3/2})$.
\end{rem}

\subsubsection{Records}
\seclabel{records}

A \emph{(max) record} in a permutation $\sigma$ of size
$n$ is some value $\sigma_i$, $1 \leq i \leq n$, such that
\[
  \sigma_i = \max\{\sigma_1, \dots, \sigma_i\} \enspace .
\]
If $\sigma$ is chosen uniformly at random, the probability that 
$\sigma_i$ is a record is exactly $1/i$. Thus, 
the expected number of records in such a permutation is
\[
  H_n = \sum_{i = 1}^n 1/i = \ln n + O(1) \enspace ,
\]
the $n$-th harmonic number. It is harder to establish concentration
with non-negligible probability. To do this, one first needs to show
the independence of certain random variables, which quickly becomes
tedious. We instead give an encoding argument to show concentration of
the number of records, inspired by a technique used by 
Lucier, Jiang, and Li~\cite{lucier.jiang.li:quicksort}
 to study the height of random binary search trees
 (see also \secref{height}).

First, we describe a recursive encoding of a
permutation $\sigma$ of size $n$: Begin by providing the first value
of the permutation $\sigma_1$; then show the set of indices from
$\{2, \dots, n\}$
for which $\sigma$ takes on a value strictly smaller than
$\sigma_1$ and an explicit encoding of the induced permutation
on the elements at those indices; finally, give a recursive encoding of the
permutation induced on the elements strictly larger than
$\sigma_1$. The number of recursive invocations is equal to the number
of records in $\sigma$.

If $\sigma$ contains $k$ elements strictly smaller than $\sigma_1$,
then the length $\ell(\sigma)$ of the codeword for $\sigma$ satisfies 
\[
  \ell(\sigma) = \log n + \log \binom{n - 1}{k} + 
    \log k! + \ell(\sigma') \enspace ,
\]
where $\sigma'$ is the induced permutation on the $n - k - 1$ 
elements strictly
larger than $\sigma_1$.
Thus, we get the following recursion for the length $\ell(n)$ of the
encoding for a permutation of size $n$:
\[
  \ell(n) = \max_{k \in \{1, \dots, n-1\}} \left(\log n + 
  \log \binom{n - 1}{k} + 
    \log k! + \ell(n-1-k)\right) \enspace ,
\]
with $\ell(0) = 0$ and $\ell(1) = 0$. 
This solves to $\ell(n) = \log
n!$, so the encoding described above is no better than a fixed-length
encoding for $\sigma$. However, a simple modification of the scheme 
yields a result about the concentration of records in a uniformly
random permutation.

\begin{thm}\thmlabel{records}
  For any fixed $c > 2$, 
  a uniformly random permutation $\sigma$ of size $n$ has at least $c
  \log n$ records with probability at most
  \[
    2^{-c (1 - H(1/c)) \log n + O(\log \log n)}\enspace .
  \]
\end{thm}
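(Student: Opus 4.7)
My plan is to modify the recursive encoding outlined in the preamble of this subsection so that it produces a codeword shorter than $\log n!$ whenever $\sigma$ has many records. Since each record triggers one recursive call, a permutation with $r \ge c\log n$ records yields $r$ recursive levels and therefore $r$ opportunities to save bits.

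The modification I have in mind is to change the way record values $v_i$ are encoded across the recursion. In the unmodified scheme each $v_i = \sigma_1^{(i)}$ is encoded in $\log n_i$ bits, for a total of $\sum_i \log n_i$ bits spent on the records. Instead, I would replace this with a Shannon--Fano code tailored to a Bernoulli$(1/c)$ prior that encodes, once and for all, an indicator bit string of length $n$ with $r$ ones (the records), at total cost $r\log c + (n-r)\log(c/(c-1))$. Combined with the ``smaller subset'' and $\log k_i!$ parts of the recursion, a telescoping calculation using Stirling's approximation and the entropy estimates from \secref{background} should yield a total codeword length at most $\log n! - r(1 - H(1/c)) + O(\log\log n)$, and hence at most $\log n! - c(1 - H(1/c))\log n + O(\log\log n)$ when $r \ge c\log n$.

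A final application of the Uniform Encoding Lemma to the uniform distribution on permutations of $\{1,\ldots,n\}$ then yields the bound $2^{-c(1-H(1/c))\log n + O(\log\log n)}$.

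The main obstacle will be showing that the Bernoulli$(1/c)$ modification actually shortens rather than lengthens the codeword. Since $\log\binom{n-1}{k}$ is already the optimal code length for a fixed-weight indicator, the savings cannot come from the indicator in isolation; they must arise from the interaction between the new indicator encoding and the deleted $\log n_i$ terms across all $r$ levels. The delicate step is organizing the arithmetic so that $\sum_i \log n_i - [r\log c + (n-r)\log(c/(c-1))]$ is at least $c(1-H(1/c))\log n - O(\log\log n)$ uniformly over all permutations with at least $c\log n$ records, which requires careful accounting with the explicit form of the binary entropy $H(1/c)$ and the constraint $\sum_i k_i = n - r$.
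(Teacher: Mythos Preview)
Your proposal has a genuine gap, and it is precisely the obstacle you flag in your last paragraph. The Bernoulli$(1/c)$ Shannon--Fano code on a length-$n$ indicator with $r=c\log n$ ones costs
\[
r\log c + (n-r)\log\frac{c}{c-1} \;=\; \Theta(n),
\]
since for fixed $c>2$ the second term is a positive constant times $n-r\sim n$. But the quantity it is meant to replace satisfies $\sum_i \log n_i \le r\log n = c\log^2 n$. So your modification \emph{lengthens} the codeword by $\Theta(n)$ bits. The mismatch is one of scale: the record-value indicator is a length-$n$ string whose one-density is $\Theta((\log n)/n)$, not $1/c$, so a Bernoulli$(1/c)$ code is badly mis-tuned for it. Even the optimal fixed-weight encoding of that set, at cost $\log\binom{n}{r}$, is essentially the same as $\sum_i\log n_i$ (they encode the same information), so no savings can come from re-encoding the record values alone; the ``delicate step'' you hope to carry out is in fact impossible.

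The paper's argument realises the $1/c$ density on a string of the correct length, namely $t=\lceil c\log n\rceil$. For each record $r_i$ it records a single bit $x_i$ indicating whether $r_i$ lies in the upper half of the interval $[r_{i-1},n]$. Since each $x_i=1$ at least halves the remaining range, the string $x\in\{0,1\}^t$ has at most $\log n\le t/c$ ones and can therefore be encoded in $tH(1/c)$ bits. Knowing $x_i$ then saves (essentially) one bit in the recursive encoding of each $r_i$, for a net saving of $t(1-H(1/c))$, up to an $O(\log\log n)$ correction. The idea you are missing is that the compressibility does not come from a better description of the record \emph{values}; it comes from an auxiliary length-$t$ bit string that is \emph{forced} to be biased toward zeros whenever $t\ge c\log n$.
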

\begin{proof}
We describe an encoding scheme for permutations with at least 
$t = \ceil{c \log n}$ records. 
Suppose that the permutation $\sigma$ has $t$ records
$r_1 < r_2 < \cdots < r_t$. First, we define a
bit string $x = (x_1, \dots, x_t) \in \{0, 1\}^t$, where $x_1 = 0$
and $x_i = 1$ if and only if $r_i$ lies in the second half of the
interval $[r_{i - 1}, n]$, for $i = 2, \dots, t$. 
Recalling that  $n_1(x)$ represents the 
number of ones in the bit string $x$, it follows that 
$n_1(x) \leq \log n$, so $n_1(x)/t \leq 1/c$.

To begin our encoding of $\sigma$, we encode the bit string $x$ by
giving the set of $n_1(x)$ ones in $x$; followed by the 
recursive encoding of $\sigma$ from earlier. Now, our knowledge of
the value of $x_i$ halves the size of the space of options for
encoding the position $r_i$ . In other words, our knowledge of $x$
allows us to encode each record using roughly one less bit per
record. More precisely, if the number of choices for each record
$r_i$ in the original encoding is $m_i$, such that $m_1 > \dots >
  m_t$, then the number of bits spent encoding records in the new code
  is at most
  \begin{align*}
    \sum_{i = 1}^t \log \ceil{m_i/2} &\leq \sum_{i = 1}^t \log (m_i/2
                                       + 1)
    \\
                                     &\leq \sum_{i = 1}^t \log (m_i/2) + \sum_{i = 1}^t O(1/m_i) \tag{since $\log (x + 1) = \log x + O(1/x)$} \\
                                     &\leq \sum_{i = 1}^t \log (m_i/2) + O(H_t) 
                                     = \sum_{i = 1}^t \log
                                       m_i - t + O(\log \log n) \enspace ,
  \end{align*}
  since $c$ is a constant. Thus, the total length of the code is
  \begin{align*}
    |C(\sigma)| &\leq \binom{t}{n_1(x)} + \log n! - t + O(\log \log n) \\
                &\leq \log n! - t (1 - H(n_1(x)/t)) + O(\log \log n) \tag{by \eqref{log-n-choose-k}}\\
                &\leq \log n! - c (1 - H(1/c))\log n + O(\log \log n) \enspace ,
  \end{align*}
  where this last inequality follows since $c > 2$, so 
  $0 \leq n_1(x)/t \leq 1/c < 1/2$, and $H(n_1(x)/t)
  \leq H(1/c)$ since $H(\cdot)$ is increasing on $[0, 1/2]$. 
  We finish by applying the Uniform Encoding Lemma.
\end{proof}

\begin{rem}\remlabel{records}
  The preceding result only works for $c > 2$, but it is  known
  that the number of records in a uniformly random permutation is
  concentrated around $\ln n + O(1)$, where $\ln n = \alpha \log n$
  for $\alpha = 0.6931\dots$. We leave as an open problem whether or not
  this significant gap can be closed through an encoding argument.
\end{rem}

\subsubsection{The Height of a Random Binary Search Tree}
\seclabel{height}

Every permutation $\sigma$ determines a binary search tree
$\text{BST}(\sigma)$ created through the sequential insertion of the
keys $\sigma_1, \ldots, \sigma_n$. Specifically, if $\sigma^L$
(respectively, $\sigma^R$) denotes the permutation of elements
strictly smaller (respectively, strictly larger) than $\sigma_1$, then
$\text{BST}(\sigma)$ has $\sigma_1$ as its root, with
$\text{BST}(\sigma^L)$ and $\text{BST}(\sigma^R)$ as left and right
subtrees.

Lucier, Jiang, and Li~\cite{lucier.jiang.li:quicksort} use an encoding
argument via Kolmogorov complexity to study the height of
$\text{BST}(\sigma)$. They show that for a uniformly chosen
permutation $\sigma$, the tree $\text{BST}(\sigma)$ has height at most
$c \log n$ with probability $1 - O(1/n)$ for $c = 15.498\dots$; we can
extend our result on records from~\secref{records} to obtain a 
tighter result.

For a node $u$, let $s(u)$ denote the number of nodes in the tree
rooted at $u$. Then, $u$ is called \emph{balanced} if
$s(u^L), s(u^R) > s(u)/4$, where $u^L$ and $u^R$ are the left and
right subtrees of $u$, respectively. In other words, since each node
$u$ determines an interval $[v, w]$, where $v$ is the smallest node in
the subtree rooted at $u$, and $w$ is the largest such node, then $u$
is balanced if and only if
\[
  u \in \left(\frac{w + v}{2} - \frac{w - v - 1}{4}, \frac{w + v}{2} + \frac{w - v - 1}{4}\right) \enspace ,
\]
\emph{i.e.}~$u$ is called balanced if it occurs inside the middle interval
of length $(w - v - 1)/2$ of its subrange.

\begin{thm}\thmlabel{bst-height}
  Let $\sigma$ be a uniformly random permutation of size $n$. 
  There is a constant $c < 9.943483$ such that 
  $\text{BST}(\sigma)$ has height at most $c\log n$ with probability
  $1 - O(1/n)$.
\end{thm}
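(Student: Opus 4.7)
The plan is to extend the recursive encoding of permutations from the proof of Theorem \thmref{records}. The key combinatorial fact is that each balanced node on a root-to-leaf path of $\text{BST}(\sigma)$ shrinks its subtree by at least a factor of $4/3$, so any such path of length $t$ contains at most $\log_{4/3} n$ balanced nodes. Thus, if $\text{BST}(\sigma)$ has height $t = c\log n$, a witnessing path $P = v_1, \ldots, v_{t+1}$ contains at least $t - \log_{4/3} n$ \emph{unbalanced} nodes, each of which will be the source of a bit-savings in the encoding.

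I would encode $\sigma$ as follows. First write $t$ with an Elias $\delta$-code; then the $t$ direction bits $d_1,\dots,d_t$ describing $P$; then a $t$-bit ``balance string'' $b$ whose $i$-th bit is $1$ iff $v_i$ is balanced; then the path values $v_1,\ldots,v_{t+1}$ themselves; and finally the permutations induced on each of the $t$ subtrees hanging off $P$, handled recursively by the same scheme used in Theorem \thmref{records}. The savings over the naive $\log n!$-bit baseline come from two complementary sources. First, since $n_1(b) \le \log_{4/3} n$, by \eqref{log-n-choose-k} the string $b$ can be written in at most $t\,H(\log_{4/3} n/t)$ bits rather than $t$. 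Second, at each unbalanced node $v_i$ the pair $(b_i,d_i)=(0,d_i)$ tells the decoder that $v_{i+1}$ lies in a known sub-range of $v_i$'s range of relative size either at most $1/4$ (a ``minor'' step) or at least $3/4$ (a ``major'' step), saving $2$ bits per minor step and $\log(4/3)$ bits per major step compared with encoding $v_{i+1}$ as an arbitrary element of $v_i$'s range. Because a minor step can occur at most $\log_4 n$ times along $P$, the minor/major split is itself constrained, and the total savings are $\varOmega(t)$.

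Adding all contributions, the codeword has length $\log n! - g(c)\log n + O(\log\log n)$ for an explicit function $g(c)$, and a single-variable optimization pins down the smallest $c$ for which $g(c) > 1$; this threshold can be verified numerically to lie below $9.943483$. Applying the Uniform Encoding Lemma then yields the claimed $O(1/n)$ probability bound. The main technical obstacle is the bookkeeping: one must check that the savings at the unbalanced nodes aggregate cleanly along $P$, and that the hanging-subtree permutations, the subtree-size data, and the balance/direction strings together introduce no overhead beyond $\log n! + O(\log\log n)$, so that the $g(c)\log n$ savings really survive in the final count. Once that is set up, optimizing $c$ against the entropy cost of $b$ and the minor/major trade-off is routine calculus.
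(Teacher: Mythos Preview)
Your scheme has a real gap in the savings accounting. Once you write down ``the path values $v_1,\ldots,v_{t+1}$ themselves'', both the direction string $d$ and the balance string $b$ are fully determined by those values: $d_i$ is just the sign of $v_{i+1}-v_i$, and $b_i$ is computable from $v_i$ together with the current range. So the $t$ direction bits and the $tH(\cdot)$ bits for $b$ are pure overhead with no offsetting savings. Your claimed mechanism---that at an unbalanced $v_i$ the pair $(b_i,d_i)$ constrains the range of $v_{i+1}$---does not buy anything, because by the time the decoder is ready to read $v_{i+1}$ it already knows $v_i$ exactly and hence knows the two subtree sizes exactly; the balance bit adds no information at that point. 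In short, you have identified the right combinatorial constraint ($n_1(b)\le\log_{4/3}n$) but attached the savings to the wrong step of the encoding.

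The paper's proof exploits the balance bit differently and at \emph{every} node, not just the unbalanced ones: the definition of ``balanced'' says $v_i$ lies in the middle half of its current range, and ``unbalanced'' says it lies in the outer half, so the single bit $b_i$ halves the number of possibilities for $v_i$ itself, saving one bit in the encoding of $v_i$ regardless of its value. That is where the $t$-bit savings come from. The directions are handled not by $t$ explicit bits but by writing the single leaf value $y_t$ up front ($\log n$ bits), after which every direction along the path is recovered by comparing $y_t$ to the current pivot. The resulting inequality is $c\bigl(1-H(1/(c\log(4/3)))\bigr)>2$, whose threshold is the stated $c<9.943483$. Your minor/major step analysis and the separate $\log_4 n$ bound are unnecessary once the savings are placed correctly.
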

\begin{proof}
  Let $c > 2/\log (4/3)$,
  and suppose that the tree $\text{BST}(\sigma)$ contains a path
  $Y = (y_1, \ldots, y_t)$ of length $t = \ceil{c \log n}$ that
  starts at the root and in which $y_{i+1}$ is a child of $y_i$,
  for $i = 1, \dots, t-1$.

  Our encoding for $\sigma$ has three parts. 
  The first part consists of
  a bit string
  $x = (x_1, \dots, x_t)$, where $x_i = 1$ if and only if $y_i$ is
  balanced. From our definition, if $y_i$ is balanced, then
  $s(y_{i + 1}) \leq (3/4) s(y_i)$. Since $n_1(x)$ counts the
  number of balanced nodes along $Y$, we get
  \[
    1 \leq (3/4)^{n_1(x)} n \iff n_1(x) \leq \log_{4/3} n \enspace .
  \]
  Next, our encoding contains a fixed-length encoding of $y_t$ using
  $\log n$ bits.

  The third part of our encoding is recursive: First, encode the value of
  the root $y_1$ using $\log \ceil{n/2}$ bits. Note that since we know
  whether $y_1$ is balanced or not, there are only $n/2$ possibilities
  for the root value, by the discussion above.  If $y_2$ is the left child
  $y_1$, then specify the values in the right subtree of $y_1$, including
  an explicit encoding of the permutation induced by these values; and
  recursively encode the permutation of values in subtree of $y_2$. If,
  instead, $y_2$ is the right child of  $y_1$, proceed symmetrically.
  (Note that a decoder can determine which of these two cases occured by
  comparing $y_t$ with $y_1$ since $y_2 <y_1$ if and only if $y_t< y_1$.)
  Once we reach $y_t$, we encode the permutations of the two subtrees
  of $y_t$ explicitly.
 
  The first two parts of our encoding use at most
  \[
    t H(n_1(x)/t) + \log n
  \]
  bits. The same analysis as in the proof of \thmref{records} shows that
  the second part of our encoding has length at most
  \[
    \log n! - t + O(\log \log n) \enspace .
  \]
  In total, our code has length
  \begin{align*}
    |C(\sigma)| &= \log n! - t + t H(n_1(x)/t) + \log n + O(\log \log n) \\
                &\le \log n! - c \log n + c \log n H\left(\frac{1}{c \log (4/3)}\right) + \log n + O(\log \log n) \\
                &= \log n! - c \left(1 - H\left(\frac{1}{c \log (4/3)}\right)\right) + \log n + O(\log \log n) \enspace ,
  \end{align*}
  where the inequality uses the fact that $c > 2/\log(4/3)$. Applying
  the Uniform Encoding Lemma, we see that $\text{BST}(\sigma)$ has
  height at most $c \log n$ with probability $1 - O(1/n)$ for
  $c > 2/\log (4/3)$ satisfying
  \[
    c \left(1 - H\left(\frac{1}{c \log (4/3)}\right)\right) > 2 \enspace ,
  \]
  and a computer-aided calculation shows that 
  $c = 9.943483$ 
  satisfies this inequality.
\end{proof}

\begin{rem}
  Devroye, Morin, and Viola~\cite{devroye:records} show 
  how the length of the path to
  the key $i$ in $\text{BST}(\sigma)$ relates to the number of records
  in $\sigma$. Specifically, he notes that the number of records in
  $\sigma$ is the number of nodes along the rightmost path in
  $\text{BST}(\sigma)$. Since the height of a tree is the length of
  its longest root-to-leaf path, we obtain as a corollary that the
  number of records in a uniformly random permutation is $O(\log n)$
  with high probability; the result from \thmref{records} only
  improves upon the implied constant.
\end{rem}

\begin{rem}
  We know that the height of the binary search tree built from the
  sequential insertion of elements from a uniformly random permutation
  of size $n$ is concentrated around $\alpha \ln n + O(\log \log n)$,
  for $\alpha = 4.311\dots$~\cite{reed:height}. Perhaps if the gap in
  our analysis of records in \remref{records} can be closed through an
  encoding argument, then so too can the gap in our analysis of random
  binary search tree height.
\end{rem}

\subsubsection{Hoare's Find Algorithm}

In this section, we analyze the number of comparisons made in an
execution of Hoare's classic \textsc{Find} algorithm~\cite{hoare:find}
which returns the $k$-th smallest element in an array of $n$
elements. The analysis is similar to that of the preceding section.

We refer to an easy algorithm $\textsc{Partition}$, which takes as
input an array $\sigma = (\sigma_1, \dots, \sigma_n)$ and partitions
it into the arrays $\sigma^L$ and $\sigma^R$ which contain the values
strictly smaller and strictly larger than $\sigma_1$,
respectively. The element $\sigma_1$ is called a \emph{pivot}. The
algorithm $\textsc{Partition}$ can be implemented so as to perform
only $n - 1$ comparisons as follows:

\noindent{$\textsc{Partition}(\sigma)$}:
\begin{algorithmic}[1]
  \STATE{$\sigma^L, \sigma^R \gets \textbf{nil}$}
  \FOR{$i \gets 2$ \TO $n$}
    \IF{$\sigma_i > \sigma_1$}
      \STATE{push $\sigma_i$ onto $\sigma^R$}
    \ELSE
      \STATE{push $\sigma_i$ onto $\sigma^L$}
    \ENDIF
  \ENDFOR
  \RETURN{$\sigma^L, \sigma^R$}
\end{algorithmic}

Using this, we give the algorithm $\textsc{Find}$:

\noindent{$\textsc{Find}(k, \sigma)$}:
\begin{algorithmic}[1]
  \STATE{$\sigma^L, \sigma^R \gets \textsc{Partition}(\sigma)$}
  \IF{$|\sigma^L| \geq k$}
    \RETURN{$\textsc{Find}(k, \sigma^L)$}
  \ELSIF{$|\sigma^L| < k - 1$}
    \RETURN{$\textsc{Find}(k - |\sigma^L| - 1, \sigma^R)$}
  \ENDIF
  \RETURN{$\sigma_1$}
\end{algorithmic}

Suppose that the algorithm $\textsc{Find}$ sequentially identifies $t$
pivots $x_1, \dots, x_t$ before finding the solution. Let
$\sigma^{(i)}$ denote the value of $\sigma$ in the $i$-th recursive
call and let $n_i=|\sigma^{(i)}|$, so that
$\sigma^{(0)}=(\sigma_1,\ldots,\sigma_n)$ and $n_0=n$.  We will say
that the $i$-th pivot is \emph{good} if its rank, in $\sigma^{(i)}$,
is in the interval $[n_i/4, 3n_i/4]$. Note that a good pivot causes
the algorithm to recurse in a problem of size at most $3n_i/4$.

\begin{lem}
  Fix some constants $t_0 \geq 1$ and $0 < \alpha < 1/2$. Suppose
  that, for each $t_0 \leq i \leq t$, the number of good pivots among
  $x_1, \ldots, x_i$ is at least $\alpha i$. Then, $\textsc{Find}$
  makes $O(n)$ comparisons.
\end{lem}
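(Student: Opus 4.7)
The plan is to bound the total work by summing the cost of each recursive call, and to use the good-pivot assumption to show the subproblem sizes decay geometrically.

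First I would observe that a single invocation of $\textsc{Partition}$ on an array of size $n_i$ performs exactly $n_i - 1$ comparisons, so the total number of comparisons made by $\textsc{Find}$ is at most $\sum_{i=0}^{t-1} n_i$. Thus the problem reduces to bounding this sum by $O(n)$.

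Next I would split the sum at $t_0$. For the first $t_0$ terms we only need the trivial bound $n_i \le n$, which gives a contribution of $t_0 \cdot n = O(n)$ since $t_0$ is a constant. For $i \ge t_0$, the key point is that a good pivot shrinks the array size by a factor of at least $3/4$, while a bad pivot cannot make the array larger. By hypothesis, at least $\alpha i$ of the first $i$ pivots are good, so
\[
  n_i \le n \cdot (3/4)^{\alpha i} \enspace .
\]
Summing the resulting geometric series with common ratio $(3/4)^{\alpha} < 1$ yields
\[
  \sum_{i=t_0}^{t-1} n_i \le n \sum_{i=t_0}^{\infty} (3/4)^{\alpha i} = O(n) \enspace ,
\]
and combining the two pieces gives the desired $O(n)$ bound on the total number of comparisons.

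There is no real obstacle here: once one notices that each good pivot contributes a multiplicative factor of $3/4$ to the array size and that the number of good pivots grows linearly in $i$, the bound follows from summing a geometric series. The only mildly delicate point is handling the first $t_0$ recursive calls separately, since the hypothesis on the density of good pivots is only required to hold for $i \geq t_0$; but this contributes just $O(n)$ because $t_0$ is a fixed constant.
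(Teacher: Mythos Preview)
Your proposal is correct and follows essentially the same argument as the paper: split the sum $\sum_i n_i$ at $t_0$, bound the first part by $t_0 n = O(n)$, and use $n_i \le (3/4)^{\alpha i} n$ for $i \ge t_0$ to bound the rest by a geometric series. The paper's proof is nearly word-for-word the same, differing only in minor indexing conventions.
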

\begin{proof}
  If $x_j$ is a good pivot, then the conditions of the lemma give that
  $n_j \le (3/4) n_{j - 1}$. Therefore, $n_i \leq (3/4)^{\alpha i} n$
  for each $t_0 \leq i \leq t$, and the total number of comparisons
  made by $\textsc{Find}$ is at most
  \[
    \sum_{i=0}^t n_i \le t_0 n + \sum_{i=t_0}^t n_i \le O(n) + n
    \sum_{i = t_0}^t (3/4)^{\alpha i} = O(n) \enspace . 
  \]
\end{proof}

\begin{thm}
  Let $\sigma$ be a uniformly random permutation. Then, for 
  every fixed probability $p \in (0,1)$, there exists
  a constant $c$ such that 
  $\textsc{Find}(k, \sigma)$ executes at most $cn$ comparisons with 
  probability at least $p$, for any $k$.
\end{thm}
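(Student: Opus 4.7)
The plan is to apply the preceding lemma with a fixed $\alpha \in (0, 1/2)$. The lemma reduces the theorem to showing that, with probability at least $p$, for every $i$ with $t_0 \le i \le t$ the first $i$ pivots contain at least $\alpha i$ good ones, provided $t_0$ is chosen large enough (depending on $p$ and $\alpha$). First I would check that because $\sigma$ is a uniformly random permutation, the rank of each pivot $x_i$ inside its current subarray $\sigma^{(i)}$ is uniform on $\{1, \ldots, n_i\}$ and independent of the preceding pivot ranks, so the indicators $y_i$ that $x_i$ is a good pivot are independent and each equal to $1$ with probability exactly $1/2$.

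Next I would set up a baseline fixed-length encoding of $\sigma$ of total size $\log n!$ by simulating \textsc{Find}: for each pivot $x_i$ first record the bit $y_i$ indicating whether $x_i$ lies in the middle half $[n_i/4, 3n_i/4]$ or in the outer quarters of $\sigma^{(i)}$, and then record the rank of $x_i$ inside the corresponding half (which takes $\log(n_i/2) = \log n_i - 1$ bits); after each pivot, explicitly encode the unvisited subarray as a permutation of its support, and at the end encode the final residual subarray explicitly. In this baseline code the bit spent on $y_i$ is exactly balanced by the bit saved on the refined rank, so we have not yet gained anything. To get compression when the bad event ``there exists $i \ge t_0$ with fewer than $\alpha i$ good pivots among $x_1, \ldots, x_i$'' occurs, I would write the smallest such $i^\ast$ using an Elias $\delta$-code ($O(\log i^\ast)$ bits) and then replace the $i^\ast$ individual bits $y_1, \ldots, y_{i^\ast}$ by a Shannon--Fano code for a bit string of length $i^\ast$ with at most $\alpha i^\ast$ ones, which by \eqref{log-n-choose-k} uses at most $i^\ast H(\alpha)$ bits. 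Since $\alpha < 1/2$ we have $H(\alpha) < 1$, so the rewrite saves $i^\ast(1 - H(\alpha)) - O(\log i^\ast)$ bits compared to the baseline, and the Uniform Encoding Lemma bounds the probability of the bad event by $O(2^{-t_0(1 - H(\alpha))})$. Choosing $t_0 = t_0(p, \alpha)$ large enough makes this at most $1 - p$, and the preceding lemma then gives $\textsc{Find}(k, \sigma) = O(n)$ comparisons on the complementary good event, which is exactly the statement we want.

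The main obstacle will be the bookkeeping of the encoding: I need to confirm that the unvisited subarrays really are uniformly distributed permutations of their supports so that the ``explicit permutation'' step of the baseline code is honest; that the ``middle'' and ``outside'' halves of each rank space have size $n_i/2$ up to $\pm 1$ (with the rounding absorbed by the $O(\log i^\ast)$ slack); and that a decoder reading the rewritten header can recognise the unique $i^\ast$, peel off the Shannon--Fano block encoding $y_1, \ldots, y_{i^\ast}$, and then resume the baseline encoding for $y_{i^\ast + 1}, \ldots, y_t$ together with the rest of the \textsc{Find} recursion tree.
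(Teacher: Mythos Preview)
Your proposal is correct and is essentially the paper's argument: both use the recursive encoding of $\sigma$ along the \textsc{Find} pivot path, save one bit per pivot by first recording the good/bad indicator, and then compress those $i$ indicator bits to $iH(\alpha)$ bits when the bad event holds, yielding a net saving of $i(1-H(\alpha))-O(\log i)$ before invoking the Uniform Encoding Lemma. The only differences are cosmetic: the paper fixes $\alpha=1/4$ and reuses the recursive encoding already developed in \secref{height} rather than rebuilding it, and your preliminary probabilistic check on the independence of the $y_i$ is unnecessary, since the encoding argument only needs the baseline code to have length $\log n!$.
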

\begin{proof}
  We again encode the permutation $\sigma$. Set $\alpha = 1/4$ and
  let $t_0$ be a constant depending on $p$. Suppose that the
  conditions of the preceding lemma are not satisfied for $\alpha$ and 
  $t_0$,
  \emph{i.e.}~there is an $i \geq t_0$ such that the number of good pivots
  among $x_1, \dots, x_i$ is less than $\alpha i$. We encode $\sigma$
  in two parts. The first part of our encoding gives the value of $i$
  using an Elias $\delta$-code, followed by the set of indices of the
  good pivots among $x_1, \dots, x_i$, which costs
  \[
  \log i + i H(\alpha) + O(\log \log i) \enspace .
  \]
  Note that the pivots $x_1, \dots, x_i$ trace a path from the root in
  $\text{BST}(\sigma)$. Therefore, the second part of our encoding is
  the recursive encoding presented in \secref{height}, in which each
  pivot can be encoded using one less bit, since knowing whether $x_j$
  is a good pivot or not narrows down the range of possible values for
  $x_j$ by a factor of $1/2$. In total, our code then has length
  \[
    |C(\sigma)| \le \log n! - i + i H(\alpha) + \log i + O(\log \log
    i) = \log n! - \varOmega(i) \enspace ,
  \]
  since $\alpha = 1/4 < 1/2$. The proof is completed by applying 
  the Uniform
  Encoding Lemma, and by observing that $t_0 \leq i$ can be made
  arbitrarily large.
\end{proof}

\subsection{$k$-SAT and the Lov\'{a}sz Local Lemma}

We now consider the question of satisfiability of propositional
formulas. Let us start with some definitions.

A (Boolean) \emph{variable} $x$ is either $\textsf{true}$ or
$\textsf{false}$. The negation of $x$ is denoted by $\neg x$. A
\emph{literal} is either a variable or its negation. A
\emph{conjunction} of literals is an ``and'' of literals, denoted by
$\land$. A \emph{disjunction} of literals is an ``or'' of literals,
denoted by $\lor$. A \emph{formula} $\varphi$ is an expression
including conjunctions and disjunctions of literals, and the set of
variables involved in this formula is called the \emph{support} of
$\varphi$. A \emph{clause} is a disjunction of literals, \emph{i.e.}
the ``or'' of a set of variables or their negations, \emph{e.g.}
\begin{align}
  x_1 \lor \neg x_2 \lor x_3 \enspace . \eqlabel{clause-example}
\end{align}
Two clauses will be said to intersect if their supports intersect. The
truth value which a formula $\varphi$ evaluates to under the
assignment of values $\alpha$ to its support will be denoted by
$\varphi(\alpha)$, and such a formula is said to be \emph{satisfiable}
if there exists an $\alpha$ with $\varphi(\alpha) = \textsf{true}$.
For example, the clause in \eqref{clause-example} is satisfied for all
truth assignments except
\[
  (x_1, x_2, x_3) = (\textsf{false}, \textsf{true}, \textsf{false})
  \enspace ,
\]
and indeed any clause is satisfied by all but one truth assignment
for its support. The formulas we are concerned with are conjunctions
of clauses, which are said to be in \emph{conjunctive normal form}
(CNF). More specifically, when each clause has at most $k$
literals, we call it a $k$-CNF formula.

The $k$-SAT decision problem asks to determine whether or not a given
$k$-CNF formula is satisfiable. In general, this problem is hard. Of
course, any satisfying truth assignment to the variables in a CNF
formula induces a satisfying truth assignment for each of its
clauses. Moreover, if the supports of the clauses are pairwise
disjoint, then the formula is trivially satisfiable,
and as we will see, this holds even if the 
clauses are only nearly pairwise disjoint, \emph{i.e.},  if for each
clause the support is disjoint from the supports of all but
less than $2^k/e$ other clauses.

This result has been well known as a consequence of the Lov\'{a}sz
Local Lemma~\cite{lovasz:locallemma}, whose original proof is
non-constructive, and so does not produce a satisfying truth
assignment (in polynomial time) when applied to an instance of
$k$-SAT. Some efficient constructive solutions to $k$-SAT have been
known, but only for suboptimal clause intersection sizes.
Moser~\cite{moser:ksat} first presented a constructive solution to
$k$-SAT with near optimal clause intersection sizes, and 
Moser and Tardos~\cite{moser:locallemma} then 
generalized this result to the
full Lov\'{a}sz Local Lemma for optimal clause intersection sizes. The
analysis which we reproduce in this section comes from Fortnow's
rephrasing of Moser's proof for $k$-SAT using the incompressibility
method~\cite{fortnow:ksat}.

Moser's algorithm is remarkably na\"\i ve, and can be described in only a
few sentences: Pick a uniformly random truth assignment for the
variables of $\varphi$. For each unsatisfied clause, attempt to fix it
by producing a new uniformly random truth assignment for its support,
and recursively fix any intersecting clause which is made unsatisfied
by this reassignment. We describe this process more carefully in the
algorithms $\textsc{Solve}$ and $\textsc{Fix}$ below.

\noindent{$\textsc{Solve}(\varphi)$}:
\begin{algorithmic}[1]
  \STATE{$\alpha \gets $ uniformly random truth assignment in $\{\textsf{true}, \textsf{false}\}^n$}
  \WHILE{$\varphi(\alpha) = \textsf{false}$}
    \STATE{$D \gets $ an unsatisfied clause in $\varphi$}
    \STATE{$\alpha \gets \textsc{Fix}(\varphi, \alpha, D)$}
  \ENDWHILE
  \RETURN{$\alpha$}
\end{algorithmic}

\noindent{$\textsc{Fix}(\varphi, \alpha, D)$}:
\begin{algorithmic}[1]
  \STATE{$\beta \gets $ uniformly random truth assignment in $\{\textsf{true}, \textsf{false}\}^k$}
  \STATE{replace the assignments in $\alpha$ for $D's$ support with the values in $\beta$}
  \WHILE{$\varphi(\alpha) = \textsf{false}$}
    \STATE{$D' \gets $ an unsatisfied clause in $\varphi$ intersecting $D$}
    \STATE{$\alpha \gets \textsc{Fix}(\varphi, \alpha, D')$}
  \ENDWHILE
  \RETURN{$\alpha$}
\end{algorithmic}

\begin{thm}
  Given a $k$-CNF formula $\varphi$ with $m$ clauses and $n$ variables
  such that each clause intersects at most $r < 2^{k - 3}$ other
  clauses, then the total number of invokations of $\textsc{Fix}$ in
  the execution of $\textsc{Solve}(\varphi)$ is at least
  $s + m \log m$ with probability at most $2^{-s}$.
\end{thm}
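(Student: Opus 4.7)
The plan is to set $T = s + m\log m$ and apply the Uniform Encoding Lemma to the prefix of the algorithm's random tape consumed during the first $T$ calls to $\textsc{Fix}$. Conditioned on having at least $T$ such calls, the algorithm draws an initial $n$-bit assignment $\alpha$ plus $k$ fresh bits per $\textsc{Fix}$ call, for a total of $n + Tk$ uniformly random bits, which form a uniform element of $\{0,1\}^{n + Tk}$. I will build a partial prefix-free code $C$ whose domain is precisely the random strings that produce at least $T$ $\textsc{Fix}$ invocations, and show that $|C(\rho)| \le n + Tk - s$ on this domain.

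The codeword will have two parts. First, the truth assignment $\alpha^{*}$ immediately after the $T$-th $\textsc{Fix}$ call ($n$ bits). Second, a trace of the $T$ invocations, which naturally forms a forest whose roots are top-level calls by $\textsc{Solve}$ and whose children are recursive calls on intersecting clauses. I encode the shape of this forest via its DFS-Euler tour in $2T$ bits, spend $\log m$ bits per root to name its clause in $\{1,\dots,m\}$, and spend $\log r$ bits per non-root to name its clause among the at most $r$ intersecting its parent. Invoking Moser's structural observation that once a top-level call on $D$ returns, $D$ remains satisfied for the rest of $\textsc{Solve}$, the number of roots $t_0$ is at most $m$.

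Injectivity is established by reverse-simulating the algorithm from $\alpha^{*}$. Processing the $T$ calls in reverse, each $\textsc{Fix}(\varphi,\alpha,D_i)$ was invoked while $D_i$ was unsatisfied, which pins the variables in $D_i$'s support to the \emph{unique} falsifying assignment of $D_i$; hence the $k$ random bits drawn by that call are recovered as the current (reconstructed) values of those variables, after which we roll the state back to the falsifying assignment and continue. After $T$ reversals the initial $\alpha$ reappears, so $\rho$ is reconstructed exactly. The total code length is then
\[
  |C(\rho)| \le n + 2T + m\log m + (T - t_0)\log r \le n + 2T + m\log m + T(k-3) = n + Tk + m\log m - T,
\]
using $\log r < k - 3$. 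Substituting $T = s + m\log m$ yields $|C(\rho)| \le n + Tk - s$, and the Uniform Encoding Lemma finishes the argument.

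The main obstacle will be justifying the two properties that make the accounting work: the bound $t_0 \le m$, which requires Moser's invariant that a completed top-level call monotonically enlarges the set of permanently satisfied clauses; and the correctness of the reverse simulation, which relies on a $k$-CNF clause having a \emph{unique} falsifying assignment on its support. The latter is what localizes the $k$ random bits consumed per recursive call down to the $\log r < k - 3$ bits needed to identify a neighboring clause, and is the source of the compression on which the whole argument rests.
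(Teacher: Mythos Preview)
Your proposal is correct and follows essentially the same route as the paper: encode the $n+Tk$ uniform random bits by the post-$T$ assignment, the shape of the recursion forest, $\log m$ bits per root clause, and $\log r$ bits per non-root clause, then invoke the Uniform Encoding Lemma. If anything, you are more explicit than the paper in two places---you spell out the reverse-simulation argument for injectivity and you explicitly invoke Moser's invariant to justify $t_0\le m$, both of which the paper only sketches.
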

\begin{proof}
  Suppose that $\textsc{Fix}$ is called 
  $t = \ceil{s + m\log m}$ times. Let
  $\alpha \in \{\textsf{true}, \textsf{false}\}^n$ be the initial
  truth assignment for $\varphi$, and let
  $\beta_1, \ldots, \beta_t \in \{\textsf{true}, \textsf{false}\}^k$
  be the local truth assignments produced in each call to
  $\textsc{Fix}$. The string
  $\gamma = (\alpha, \beta_1, \ldots, \beta_t)$ is uniformly chosen
  from a set of size $2^{n + tk}$, and will be the subject of our
  encoding.

  The execution of $\textsc{Solve}(\varphi)$ determines a (rooted
  ordered) \emph{recursion tree} $T$ on $t + 1$ nodes as follows: The
  root of $T$ corresponds to the initial call to
  $\textsc{Solve}(\varphi)$. Every other node corresponds to a call
  to $\textsc{Fix}$. The children of a node correspond 
   to the sequence of calls to $\textsc{Fix}$ that the procedure
  performs, ordered from left to right.
  Each (non-root) node in the tree is assigned a clause
  and its uniformly random truth assignment produced during the call
  to $\textsc{Fix}$. Moreover, a pre-order traversal of this tree
  describes the order of function calls in the algorithm's execution.

  The string $\gamma$ can be recovered in a bottom-up manner from our
  knowledge of the tree $T$ and the final truth assignment
  $\alpha'$ after $t$ calls to \textsc{Fix}. 
  Specifically, let $D_1, \dots, D_t$ be the clauses
  encountered in a pre-order traversal of $T$. In particular, $D_t$ is
  the last fixed clause in the execution. Since $D_t$ was not
  satisfied before its reassignment, this allows us to deduce $k$
  values of the previous assignment
  before $D_t$ was fixed. Pruning $D_t$ from the tree and continuing
  in this manner at $D_{t - 1}$, we eventually recover the original
  truth assignment $\alpha$ produced in $\textsc{Solve}(\varphi)$.

  Therefore, to encode $\gamma$, we give the final truth
  assignment $\alpha'$; and a description of the shape of the tree
  $T$; and the sequence of at most $m$ clauses which are children of
  the root of $T$; and the at most $t$ clauses involved in the calls
  to $\textsc{Fix}$ in a pre-order traversal of $T$.  

  The key savings come from the fact that each clause intersects at
  most $r$ other clauses, so each clause (which is not a child of the
  root) can be encoded using $\log r$ bits. Each clause which is a
  child of the root can be encoded using $\log m$ bits, and since the
  order of these children might be significant, we use $m \log m$ bits
  to encode the full sequence of these clauses. Finally, as in
  \lemref{two-choice-component-size}, the shape of $T$ can be encoded
  using $2t$ bits. In total, the code has length
  \begin{align*}
    |C(\gamma)| &\le n + 2t + m\log m+ t \log r \\
    &\le n + 2t + m\log m + t(k - 3) \tag{since $r \le 2^{k - 3}$} \\
    &= n + tk - t + m\log m 
    \le n + tk - s \enspace .
  \end{align*}
  The result is obtained by applying the Uniform Encoding Lemma.
\end{proof}

\begin{rem}
  By more carefully encoding of the shape of the recursion tree above,
  Messner and Thierauf~\cite{messner:ksat} gave an encoding argument
  for the above result in which $r < 2^k/e$. Specifically, their 
  refinement follows
  from a more careful counting of the number of trees with nodes of
  bounded degree.
\end{rem}

\section{The Non-Uniform Encoding Lemma and Shannon-Fano Codes}
\seclabel{nuel}

Thus far, we have focused on applications that could always be
modelled as choosing some element $x$ uniformly at random from a
finite set $X$. To encompass even more applications, it is helpful to
have an Encoding Lemma that deals with \emph{non-uniform}
distributions over $X$. First, we recall the following useful classic
results: 
\begin{thm}[Markov's Inequality]
  For any non-negative random variable $Y$ with finite
  expectation, and any $a > 0$,
  \[
    \Pr\{Y \geq a\} \leq (1/a) \E\{Y\} \enspace .
  \]
\end{thm}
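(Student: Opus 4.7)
The plan is to bound $Y$ pointwise below by a scaled indicator of the event $\{Y \geq a\}$ and then take expectations. Concretely, I would introduce the indicator random variable $\mathbf{1}_{Y \geq a}$ and argue that $a \cdot \mathbf{1}_{Y \geq a} \leq Y$ pointwise. On the event $\{Y \geq a\}$ the inequality reads $a \leq Y$, which is immediate; on the complementary event $\{Y < a\}$ the left-hand side is $0$, and the right-hand side is at least $0$ precisely because $Y$ is assumed to be non-negative. This is the one place where non-negativity enters the argument.

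Once the pointwise bound is in hand, I would take expectations of both sides. Linearity gives $\E\{a \cdot \mathbf{1}_{Y \geq a}\} = a \Pr\{Y \geq a\}$ (using that $\E\{\mathbf{1}_{\mathcal{E}}\} = \Pr\{\mathcal{E}\}$ for any event $\mathcal{E}$), and monotonicity of expectation gives $a \Pr\{Y \geq a\} \leq \E\{Y\}$. Dividing through by $a$, which is legitimate because $a > 0$ by hypothesis, yields $\Pr\{Y \geq a\} \leq (1/a)\E\{Y\}$, exactly as claimed.

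I do not anticipate any real obstacle here: both hypotheses ($Y \geq 0$ and $a > 0$) are used in a transparent way, and the finiteness of $\E\{Y\}$ is only needed so that dividing by $a$ produces a meaningful number. An alternative proof would split the defining sum or integral for $\E\{Y\}$ over $\{Y \geq a\}$ and $\{Y < a\}$ and discard the latter by non-negativity, but the indicator formulation has the advantage of being distribution-agnostic and not requiring any separate handling of discrete, continuous, or mixed cases.
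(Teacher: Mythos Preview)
Your proof is correct and is the standard textbook argument for Markov's inequality. Note, however, that the paper does not actually supply a proof of this theorem: it merely recalls Markov's inequality as a ``useful classic result'' before using it in the proof of the Non-Uniform Encoding Lemma, so there is no paper proof to compare against.
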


We will say that a real-valued function $\ell : X \to \R$
\emph{satisfies Kraft's condition} if
\[
  \sum_{x \in X} 2^{-\ell(x)} \leq 1 \enspace .
\]
\begin{lem}[Kraft's Inequality and prefix-free codes]
  If $C : X \nrightarrow \{0,1\}^*$ is a partial prefix-free code,
  then the function  $\ell : x \mapsto |C(x)|$ satisfies Kraft's condition. Conversely, for any
  function $\ell : X \to \N$ satisfying Kraft's condition, there
  exists a prefix-free code $C : X \to \{0, 1\}^*$ such that
  $|C(x)| = \ell(x)$ for all $x \in X$.
\end{lem}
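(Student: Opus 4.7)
The plan is to prove the two directions separately, with both arguments modeled on the ``extended-codeword'' trick already used in the excerpt to count prefix-free codewords of bounded length.

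For the forward direction, I would first handle the case where $X$ is finite and every codeword of $C$ has length at most some $L$. For each codeword $c = C(x)$, consider the set $S_x \subseteq \{0,1\}^L$ of all length-$L$ strings having $c$ as a prefix; clearly $|S_x| = 2^{L - |C(x)|}$. The prefix-free property implies that the sets $S_x$ are pairwise disjoint (if $y \in S_x \cap S_{x'}$, then one of $C(x), C(x')$ would be a prefix of the other). Hence
\begin{equation*}
\sum_{x \in \mathrm{dom}(C)} 2^{L - |C(x)|} \;=\; \Bigl|\bigcup_x S_x\Bigr| \;\le\; 2^L,
\end{equation*}
which after dividing by $2^L$ gives Kraft's condition. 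When $C$ has countably many codewords of unbounded length, I would apply the finite case to each finite restriction $C|_{X_n}$ (where $X_n$ is any finite subset of the domain) and take the supremum over $n$; since every partial sum is at most $1$, the full sum is at most $1$ as well.

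For the converse, suppose $\ell : X \to \mathbb{N}$ satisfies Kraft's condition. Enumerate the elements of $X$ as $x_1, x_2, \ldots$ in an order such that $\ell(x_1) \le \ell(x_2) \le \cdots$ (this is possible since for each length $n$, only finitely many $x$ satisfy $\ell(x) = n$, as otherwise Kraft's sum would already be infinite). I would construct $C$ greedily: at step $i$, pick $C(x_i)$ to be the lexicographically smallest string in $\{0,1\}^{\ell(x_i)}$ that is neither a prefix of, nor has as a prefix, any previously chosen codeword $C(x_1), \ldots, C(x_{i-1})$. The key claim is that such a string always exists. To see this, identify length-$\ell(x_i)$ strings with nodes at depth $\ell(x_i)$ in the infinite binary tree; each earlier codeword $C(x_j)$ with $j < i$ ``blocks'' exactly $2^{\ell(x_i) - \ell(x_j)}$ such nodes (the descendants at that depth). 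The total number of blocked nodes is therefore
\begin{equation*}
\sum_{j < i} 2^{\ell(x_i) - \ell(x_j)} \;=\; 2^{\ell(x_i)} \sum_{j < i} 2^{-\ell(x_j)} \;<\; 2^{\ell(x_i)} \cdot 1,
\end{equation*}
using Kraft's condition and the fact that $2^{-\ell(x_i)} > 0$. Hence at least one node at depth $\ell(x_i)$ is unblocked, and the greedy choice succeeds.

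The main obstacle is really bookkeeping rather than any single difficult step: in the forward direction, making sure the limit argument for countable $X$ is rigorous; in the converse, verifying that the greedy construction never exhausts available codewords and that the resulting code is prefix-free by construction (no chosen codeword is a prefix of, nor extends, any earlier one, by the blocking condition). Once these are checked, the lemma follows immediately.
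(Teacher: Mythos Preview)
The paper does not actually prove this lemma; it is stated as a classical result (immediately preceding the Non-Uniform Encoding Lemma) and left without proof. So there is no ``paper's own proof'' to compare against.

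Your argument is the standard one and is correct. The forward direction is exactly the extended-codeword trick the paper uses in Section~2 to show a prefix-free code has at most $2^k$ codewords of length at most $k$, just with the full count $2^{L-|C(x)|}$ retained rather than collapsed to $1$. The passage to countable $X$ via finite partial sums is fine. For the converse, ordering by nondecreasing $\ell$ is legitimate (each length class is finite by Kraft), and the greedy blocking count is right: since $\ell(x_j)\le \ell(x_i)$ for $j<i$, each earlier codeword blocks exactly $2^{\ell(x_i)-\ell(x_j)}$ depth-$\ell(x_i)$ nodes, and the strict inequality $\sum_{j<i}2^{-\ell(x_j)}<1$ (strict because the $i$-th term is positive) guarantees an unblocked node. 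Prefix-freeness of the output holds by construction.
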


The following generalization of the Uniform Encoding Lemma, which was
originally proven by Barron~\cite[Theorem~3.1]{barron:dissertation},
serves for non-uniform input distributions:
\begin{lem}[Non-Uniform Encoding Lemma]\lemlabel{nuel}  
  Let $C\from X\nrightarrow\{0,1\}^*$ be a partial prefix-free code,
  and let $p$ be a probability distribution on $X$.  Suppose we draw
  $x \in X$ randomly with probability $p_x$.  Then, for any
  $s \geq 0$,
  \[
    \Pr\{ |C(x)| \le \log(1/p_x)-s\} \le 2^{-s} \enspace .
  \]
\end{lem}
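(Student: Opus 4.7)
The plan is to reduce the statement to a direct application of Markov's inequality, using Kraft's inequality to control the expectation. The key observation is that the event $|C(x)| \le \log(1/p_x) - s$ can be rearranged, by exponentiating, into the tail event $2^{-|C(x)|}/p_x \ge 2^s$. This suggests defining the non-negative random variable
\[
  Z = \frac{2^{-|C(x)|}}{p_x},
\]
with the convention (coming from $|C(x)| = \infty$ when $x$ is outside the domain of $C$) that $Z = 0$ in that case. Then the event whose probability we want to bound is exactly $\{Z \ge 2^s\}$.

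First I would verify this reformulation carefully, taking a moment to handle the domain issue: values $x$ with $p_x = 0$ never occur, and values $x$ not assigned a codeword satisfy $|C(x)| = \infty$, in which case $Z = 0 < 2^s$ and they contribute nothing to either side. Next I would apply Markov's inequality to $Z$, which gives
\[
  \Pr\{Z \ge 2^s\} \le 2^{-s}\,\E\{Z\}.
\]
Finally I would compute the expectation by unfolding the definition of $Z$ against the distribution $p$:
\[
  \E\{Z\} = \sum_{x \in X} p_x \cdot \frac{2^{-|C(x)|}}{p_x} = \sum_{x \in \mathrm{dom}(C)} 2^{-|C(x)|} \le 1,
\]
where the last inequality is Kraft's inequality applied to the partial prefix-free code $C$ (the lemma just above explicitly states that $\ell(x) = |C(x)|$ satisfies Kraft's condition). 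Combining the two bounds yields the claim.

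There is not really a hard step here; the whole proof is essentially a one-line manipulation, and the only thing that takes care is making sure the conventions about $|C(x)| = \infty$ and about $x$ with $p_x = 0$ line up so that the sum over $X$ reduces cleanly to a sum over the domain of $C$. As a sanity check, the Uniform Encoding Lemma is recovered by taking $p_x = 1/|X|$ on a finite set $X$, in which case $\log(1/p_x) = \log|X|$ and the bound specializes exactly to the earlier statement.
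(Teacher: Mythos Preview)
Your proposal is correct and follows essentially the same approach as the paper: exponentiate the inequality to obtain a tail event for the non-negative random variable $2^{\log(1/p_x)-|C(x)|}$, apply Markov's inequality, and bound the resulting expectation by $1$ via Kraft's inequality. Your additional remarks on the conventions for $|C(x)|=\infty$ and $p_x=0$, and the sanity check recovering the uniform case, are fine elaborations but not needed beyond what the paper does.
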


\begin{proof}
  We use Chernoff's trick of exponentiating both sides before applying
  the Markov inequality, and Kraft's inequality:
  \begin{align*}
    \Pr\{ |C(x)| \le\log(1/p_x)-s \}
    & = \Pr\{|C(x)| -\log(1/p_x) \le -s \} \\
    & = \Pr\{\log(1/p_x)-|C(x)| \ge s \} \\
    & = \Pr\left\{2^{\log(1/p_x)-|C(x)|} \ge 2^s \right\}  \tag{Chernoff's trick} \\
    & \le \frac{\E\left\{2^{\log(1/p_x)-|C(x)|}\right\}}{2^s} \tag{Markov's inequality} \\
    & = \frac{1}{2^s} \left(\sum_{x\in X} p_x\cdot 2^{\log(1/p_x)-|C(x)|}\right) \\
    & = \frac{1}{2^s} \left(\sum_{x\in X}2^{-|C(x)|}\right) \enspace .
  \end{align*}
  By Kraft's inequality,
  $\sum_{x \in X} 2^{-|C(x)|} \leq 1$, and the result
  is obtained.
\end{proof}

The Non-Uniform Encoding Lemma is a strict generalization
of the Uniform Encoding Lemma: Take $p_x=1/|X|$ for all $x\in X$ and
we obtain the Uniform Encoding Lemma.

As in \secref{sparse-bit-strings}, we will be interested in using a
Shannon-Fano code $C_\alpha$ to encode $\mathrm{Bernoulli}(\alpha)$
bit strings of length $n$. Recall that for such a string $x$, this
code has length
\[
  n_1(x) \log (1/\alpha) + n_0(x) \log(1/(1 - \alpha)) \enspace ,
\]
since for now we are not concerned with ceilings.

\section{Applications of the Non-Uniform Encoding Lemma}
\seclabel{applications-ii}

\subsection{Chernoff Bound}
\seclabel{chernoff}

We will now prove the so-called \emph{additive version} of the
Chernoff bound on the tail of a binomial random
variable~\cite{chernoff:bound}. \thmref{chernoff-basic} dealt with
the special case of this result for $\mathrm{Bernoulli}(1/2)$ bit
strings.

\begin{thm}\thmlabel{chernoff}
  If $B$ is a $\mathrm{Binomial}(n,p)$ random variable, then for any
  $\eps \geq 0$,
  \[
    \Pr\{B\le(p-\eps)n\} \le 2^{-nD(p-\eps \,\|\, p)} \enspace ,
  \]
  where 
  \[ 
    D(p\, \|\, q)= p\log (p/q) + (1-p)\log ((1 - p)/(1 - q))
  \]
  is the \emph{Kullback-Leibler divergence} or \emph{relative entropy}
  between $\mathrm{Bernoulli}(p)$ and $\mathrm{Bernoulli}(q)$ random
  variables.
\end{thm}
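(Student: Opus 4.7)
The plan is to mirror the proof of \thmref{chernoff-basic}, but replace the Uniform Encoding Lemma with the Non-Uniform Encoding Lemma. I would model $B$ as $n_1(x)$, where $x\in\{0,1\}^n$ is sampled from $\mathrm{Bernoulli}(p)$, so that $p_x = p^{n_1(x)}(1-p)^{n_0(x)}$ and hence $\log(1/p_x)=n_1(x)\log(1/p)+n_0(x)\log(1/(1-p))$. I would then encode $x$ with the Shannon-Fano code $C_\alpha$ for $\mathrm{Bernoulli}(\alpha)$ bit strings, choosing $\alpha = p-\eps$; this gives $|C_\alpha(x)| = n_1(x)\log(1/\alpha)+n_0(x)\log(1/(1-\alpha))$. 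The intuition is exactly that of \thmref{chernoff-basic}: a code designed for a smaller success rate $\alpha$ compresses strings with few ones, and that saving measures exactly the relative entropy.

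Next, I would compute
\[
  \log(1/p_x) - |C_\alpha(x)| = n_1(x)\log(\alpha/p) + n_0(x)\log((1-\alpha)/(1-p)).
\]
Evaluated at $n_1(x)=\alpha n$ this simplifies to $n\bigl(\alpha\log(\alpha/p) + (1-\alpha)\log((1-\alpha)/(1-p))\bigr) = nD(p-\eps\,\|\,p)$. Viewed as a function of $k=n_1(x)$, the right-hand side has slope $\log(\alpha(1-p)/(p(1-\alpha)))$, which is negative whenever $\alpha < p$, so the quantity is monotonically non-increasing in $k$. Therefore, whenever $B = n_1(x)\le (p-\eps)n=\alpha n$, we have $\log(1/p_x)-|C_\alpha(x)|\ge nD(p-\eps\,\|\,p) =: s$, i.e., $|C_\alpha(x)|\le \log(1/p_x)-s$. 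Applying \lemref{nuel} to $C_\alpha$ and the $\mathrm{Bernoulli}(p)$ distribution then gives $\Pr\{B\le(p-\eps)n\}\le 2^{-s} = 2^{-nD(p-\eps\,\|\,p)}$, as required.

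The main obstacle is not deep: it is the little monotonicity check confirming that the gap between $\log(1/p_x)$ and $|C_\alpha(x)|$ grows (rather than shrinks) as $n_1(x)$ drops below $\alpha n$. Concretely, one must verify $\alpha(1-p) < p(1-\alpha)$, which is immediate from $\alpha<p$. A secondary issue is the handling of boundary cases: if $\eps=0$ the bound reduces to $1\le 1$; if $\eps\ge p$, so that $\alpha\le 0$ and $C_\alpha$ is undefined, the bound collapses (by a direct calculation or by continuity as $\alpha\to 0$) to $\Pr\{B=0\}\le (1-p)^n$, which holds with equality. Finally, the ceilings that would normally appear in the Shannon-Fano codeword lengths are dropped under the convention introduced in \secref{ceilings} and formally justified in \secref{el}, so no additive fudge is incurred.
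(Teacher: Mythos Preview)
Your proof is correct and follows essentially the same approach as the paper: encode the $\mathrm{Bernoulli}(p)$ bit string with the Shannon--Fano code $C_{p-\eps}$, verify that the gap $\log(1/p_x)-|C_{p-\eps}(x)|$ is a linear function of $n_1(x)$ that is non-increasing (equivalently, the paper phrases this as $|C_{p-\eps}(x)|-\log(1/p_x)$ being increasing in $n_1(x)$), evaluate it at $n_1(x)=(p-\eps)n$ to obtain $nD(p-\eps\,\|\,p)$, and apply the Non-Uniform Encoding Lemma. Your explicit treatment of the boundary cases $\eps=0$ and $\eps\ge p$ is a welcome addition that the paper leaves implicit.
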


\begin{proof}
  By definition, $B=\sum_{i=1}^n x_i$, where $x_1,\ldots,x_n$ are
  independent $\mathrm{Bernoulli}(p)$ random variables.  We will use
  an encoding argument on the bit string $x=(x_1,\ldots,x_n)$. The
  proof is almost identical to that of \thmref{chernoff-basic}---now,
  we encode $x$ using a Shannon-Fano code $C_\alpha$, with
  $\alpha = p - \eps$. Such a code has length
  \[
    |C_{p - \eps}(x)| = n_1(x) \log (1/(p - \eps)) + n_0(x) \log (1/(1
    - p + \eps)) \enspace .
  \]
  Now, $x$ appears with probability
  $p_x = p^{n_1(x)} (1 - p)^{n_0(x)}$. This allows us
  to express $|C_{p - \eps}(x)|$ in terms of $\log(1/p_x)$ as follows
  \begin{align*}
    |C_{p - \eps}(x)| &= \log(1/p_x) + 
         n_1(x) \log (p/(p - \eps)) + (n - n_1(x)) \log ((1 - p)/(1 - p + \eps)) \\
         &= \log (1/p_x) + n_1(x) \log \left(1 + \frac{\eps}{p - \eps}\right) + 
(n_1(x) - n) \log \left(1 + \frac{\eps}{1 - p}\right) \enspace ,
  \end{align*}
  and $|C_{p - \eps}(x)|$ increases as a function of
  $n_1(x)$. Therefore, if $n_1(x) \le (p - \eps)n$, then
  \begin{align*}
    |C_{p - \eps}(x)| &\le \log (1/p_x) - n (p - \eps) \log ((p - \eps)/p) - n (1 - p + \eps) \log ((1 - p + \eps)/(1 - p)) \\
                      &= \log (1/p_x) - n D(p - \eps \,\|\, p) \enspace .
  \end{align*}
  The Chernoff bound is obtained by applying the Non-Uniform Encoding
  Lemma.
\end{proof}

\subsection{Percolation on the Torus}

Percolation theory studies the emergence of large components in random
graphs. For a general study of percolation theory, see the book by
Grimmett~\cite{grimmett:percolation}.  We give an encoding argument
proving that percolation occurs on the torus when edge survival rate
is greater than $2/3$, \emph{i.e.}~in random subgraphs of the torus
grid graph in which each edge is included independently at random with
probability at least $2/3$, only at most one large component
emerges. Our line of reasoning follows what is known as a
\emph{Peierls argument}.

Suppose that $\sqrt{n}$ is an integer. The \emph{$\sqrt{n} \times \sqrt{n}$
  torus grid graph} is defined to be the graph with vertex set
$\{1, \ldots, \sqrt{n}\}^2$, where $(i, j)$ is adjacent to $(k, l)$
if
\begin{itemize}[topsep=0pt]
\item $|i - k| \equiv 1 \pmod{\sqrt{n}}$ and $j = l$, or
\item $|i - k| = 0$ and $|j - l| \equiv 1 \pmod{\sqrt{n}}$.
\end{itemize}

\begin{thm}\thmlabel{percolation-long-cycles}
  Suppose that $\sqrt{n}$ is an integer.  Let $G$ be a subgraph of the
  $\sqrt{n} \times \sqrt{n}$ torus grid graph in which each edge is
  chosen with probability $p < 1/3$. Then, the probability that $G$
  contains a cycle of length at least
  \[
    \frac{s + \log n}{\log (1/(3p))}
  \]
  is at most $2^{-s}$.
\end{thm}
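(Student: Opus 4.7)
The plan is to apply the Non-Uniform Encoding Lemma to the $\mathrm{Bernoulli}(p)$ indicator string $x\in\{0,1\}^{2n}$ that records which of the $2n$ edges of the torus are present in $G$.  Under this distribution, $\log(1/p_x)=n_1(x)\log(1/p)+n_0(x)\log(1/(1-p))$, which is exactly the length of the Shannon-Fano code $C_p$ from \secref{sparse-bit-strings}.  The Non-Uniform Encoding Lemma converts a saving of $s$ bits relative to $\log(1/p_x)$ into a probability bound of $2^{-s}$, so the task reduces to designing a partial prefix-free code that compresses configurations $x$ for which $G$ contains a cycle of length at least $t=(s+\log n)/\log(1/(3p))$.

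The core idea is Peierls-style.  If $G$ contains a cycle $C$ of length at least $t$, then the first $t$ edges along $C$ (in some fixed orientation) form a self-avoiding walk $W$ of length $t$ whose edges are all guaranteed to be present.  Such a walk admits a much cheaper description than that given by $C_p$: specify its starting vertex in $\log n$ bits, specify the first of its four outgoing directions in $2$ bits, and specify each of the remaining $t-1$ directions in $\log 3$ bits, since at every step after the first $W$ cannot immediately reverse on itself.  The encoding of $x$ is then the concatenation of this walk description with the Shannon-Fano code $C_p$ applied to the $2n-t$ bits of $x$ that index edges \emph{not} on $W$.  A decoder first reconstructs $W$ from the walk description, thereby learning both that its $t$ edges are present and exactly which $2n-t$ bits are encoded in the Shannon-Fano suffix, so the scheme is a valid partial prefix-free code on the target event.

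Accounting for the $t$ known-present edges that no longer incur the $\log(1/p)$ cost per edge of $C_p$, the codeword length is
\begin{align*}
|C(x)| &= \log n + t\log 3 + (n_1(x)-t)\log(1/p) + n_0(x)\log(1/(1-p)) + O(1) \\
       &= \log(1/p_x) + \log n - t\log(1/(3p)) + O(1)
        \le \log(1/p_x) - s + O(1),
\end{align*}
where the last step uses the definition of $t$ and $p<1/3$ (which is exactly what makes $\log(1/(3p))>0$, so that the savings grow with $t$). The Non-Uniform Encoding Lemma now yields the claimed bound of $2^{-s}$, with the usual convention of \secref{ceilings} absorbing ceilings and the $O(1)$ slack.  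The main subtlety to handle carefully is to convince oneself that both the walk description and the Shannon-Fano suffix can be concatenated into a single prefix-free codeword whose domain is precisely the event of interest; since $t$ is a parameter fixed by the theorem, the decoder knows when to stop reading the walk description and switch to the Shannon-Fano portion, and no additional self-delimiting header for $t$ is needed.
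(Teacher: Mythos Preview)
Your proof is correct and follows essentially the same Peierls-style encoding as the paper: specify a starting vertex, encode the walk by non-backtracking directions (one step at $\log 4$ bits, the rest at $\log 3$ bits each), and finish with a Shannon-Fano code for the remaining edge indicators, then apply the Non-Uniform Encoding Lemma. The only cosmetic difference is that the paper encodes the entire cycle (letting the decoder detect termination when the walk returns to the start, so the cycle length is self-delimiting), whereas you encode a fixed-length-$t$ prefix of the cycle with $t$ known to the decoder from the theorem statement; both choices yield the same savings and the same bound.
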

\begin{proof}
  Let $A$ be the bit string of length $2n$ encoding the edge set of
  $G$. then the probability $p_G$ that the graph $G$ is sampled is
  \[
     p_G  = p^{n_1(A)}(1-p)^{n_0(A)} \enspace .
  \]
  Suppose that $G$ contains a cycle $C'$ of length
  $t \geq (s + \log n + O(1))/\log (1/(3p))$. We encode $A$ as follows:
  first, we give a single vertex $u$ in $C'$ ($\log n$ bits). Then,
  we provide the sequence of directions that the cycle
  moves along from $u$. 
  There are four possibilities for the direction of the
  first step taken by $C'$ from $u$, but only three for each
  subsequent choice. Thus, this sequence can be specified by
  $2 + (t - 2) \log 3$ bits.
  We conclude with a Shannon-Fano code with parameter $p$ for
  the remaining edges of $G$ ($(n_1(A) - t) 
  \log (1/p) + n_0(A) \log (1/(1 - p)$)
  bits. The total length of our code is then
  \begin{align*}
    |C(G)| &= \log n + 2 + (t - 2) \log 3 + (n_1(A) - t) \log (1/p) +
             n_0(A) \log (1/(1 - p)) \\
           &\leq \log (1/p_G) + \log n - t \log (1/(3p)) + O(1) \\
           &\leq \log (1/p_G) - s
  \end{align*}
  by our choice of $t$, since $2 +  (t-2) \log 3 \leq t\log 3$.
  We finish by applying the Non-Uniform Encoding
  Lemma.
\end{proof}

The torus grid graph can be drawn in the obvious way without crossings
on the surface of a torus. This graph drawing gives rise to a dual
graph, in which each vertex corresponds to a face in the primal
drawing, and two vertices are adjacent if and only their primal faces
are incident to the same edge. This dual graph is isomorphic to the
original torus grid graph.

This drawing of the torus grid graph also induces drawings for
any of its subgraphs. Any such subgraph also has a dual, where each
vertex corresponds to a face in the dual torus grid graph, and two
vertices are adjacent if and only if their corresponding faces are
incident to the same edge of the original subgraph.

\begin{thm}
  Suppose that $\sqrt{n}$ is an integer.  Let $G$ be a subgraph of the
  $\sqrt{n} \times \sqrt{n}$ torus grid graph in which each edge is
  chosen with probability greater than $2/3$. Then, $G$ has at most
  one component of size $\omega(\log^2 n)$ with high probability.
\end{thm}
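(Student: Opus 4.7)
The plan is to run a Peierls-type argument in the dual graph. Let $G^{*}$ be the dual random subgraph of the torus grid: a dual edge is present in $G^{*}$ exactly when the corresponding primal edge is absent from $G$. Since each primal edge is in $G$ with probability $p > 2/3$, each dual edge is in $G^{*}$ independently with probability $q = 1 - p < 1/3$. The dual of the $\sqrt{n}\times\sqrt{n}$ torus grid is isomorphic to the torus grid itself, so $G^{*}$ fits the hypothesis of \thmref{percolation-long-cycles}.

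The crucial topological step is the implication: if $G$ contains two distinct components each of size at least $L$, then $G^{*}$ contains a cycle of length $\Omega(\sqrt{L})$. Given two components $X$ and $Y$, every primal edge joining $X$ to its complement in the torus is absent from $G$ and hence present in $G^{*}$; these boundary edges assemble in the dual into a collection of closed contours that topologically separate $X$ from the rest of the torus. When $L$ is sub-polynomial in $n$, $X$ cannot wrap around the torus in either coordinate direction (that would force $|X|\ge\sqrt{n}$), so one of these contours is a contractible dual cycle enclosing $X$, and by the discrete isoperimetric inequality on the grid its length is at least $4\sqrt{|X|}$. If instead both components are so large that they wrap around, then the presence of a second component $Y$ still forces at least one non-contractible dual cycle or pair of dual contours between them whose total length is $\Omega(\sqrt{L})$. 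Taking $L = \omega(\log^{2} n)$ yields a cycle of length $\omega(\log n)$ in $G^{*}$.

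From here the proof is routine: apply \thmref{percolation-long-cycles} to $G^{*}$, which has edge probability $q < 1/3$. Setting the cycle-length threshold to $t$ with $t = \omega(\log n)$ gives probability at most $2^{-s}$ where $s = t\log(1/(3q)) - \log n$. Since $\log(1/(3q))$ is a positive constant, we obtain $s = \omega(\log n)$, so the probability that $G^{*}$ contains any such cycle is $o(1)$, and the conclusion follows. The main obstacle is precisely the topological bookkeeping in the second paragraph: on the torus one cannot simply cite a planar Peierls argument, and one must verify that any separation of two large components, whether by a contractible contour or by non-contractible contours, always contributes a single dual cycle of length $\Omega(\sqrt{L})$ so that \thmref{percolation-long-cycles} can be applied cleanly.
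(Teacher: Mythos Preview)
Your approach is essentially the same as the paper's: pass to the dual graph $G^{*}$, observe that dual edges appear independently with probability $1-p<1/3$, argue that two large primal components force a long cycle in $G^{*}$, and then invoke \thmref{percolation-long-cycles}. The paper's own proof is in fact terser than yours: it simply asserts that two components of size $\omega(\log^{2} n)$ are separated by a cycle of faces of length $\omega(\log n)$ and immediately applies the previous theorem, without any discussion of contractible versus non-contractible contours or of the isoperimetric step.

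So you are not missing anything relative to the paper; if anything you have been more scrupulous about the topological bookkeeping that the paper leaves implicit. One small point worth tightening: your case split hinges on whether a component ``wraps around,'' and you note that wrapping forces at least $\sqrt{n}$ vertices. In that case the relevant dual contour already has length at least $\sqrt{n}=\omega(\log n)$, so you do not actually need the bound $\Omega(\sqrt{L})$ there; the non-wrapping case is where the isoperimetric inequality does the work, and there $4\sqrt{L}=\omega(\log n)$ as required. With that observation, your proposal matches the paper's argument and fills in detail the paper omits.
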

\begin{proof}
  See \figref{peierls} for a visualization of this phenomenon. Suppose
  that $G$ has at least two components of size $\omega(\log^2
  n)$. Then, there is a cycle of faces separating these components
  whose length is $\omega(\log n)$. From the discussion above, such a
  cycle corresponds to a cycle of $\omega(\log n)$ missing edges in
  the dual graph, as in \figref{peierlsrare}. From
  \thmref{percolation-long-cycles}, we know that this does not happen
  with high probability.
\end{proof}

\begin{figure}
  \centering
  \begin{subfigure}[t]{0.45\textwidth}
    \includegraphics[scale=0.9]{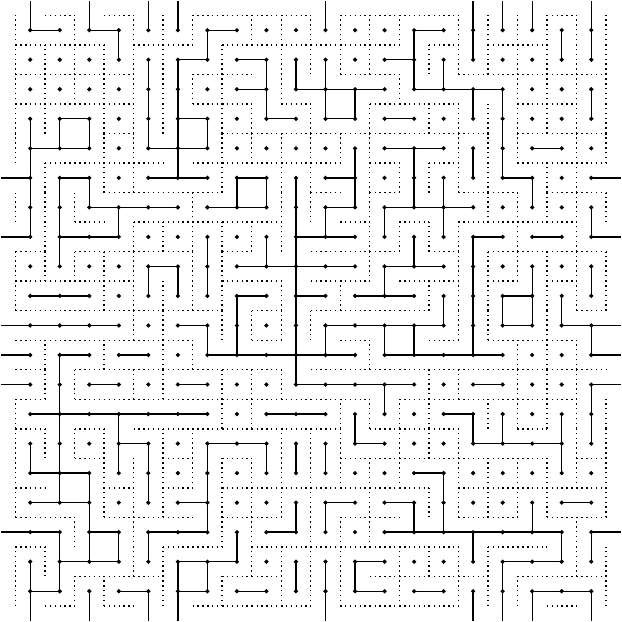}
    \caption{When $p = 0.33 < 1/3$, long cycles are rare. Dotted lines show
      missing edges in the dual.}
    \figlabel{peierlsrare}
  \end{subfigure}
  \quad\quad
  \begin{subfigure}[t]{0.45\textwidth}
    \includegraphics[scale=0.9]{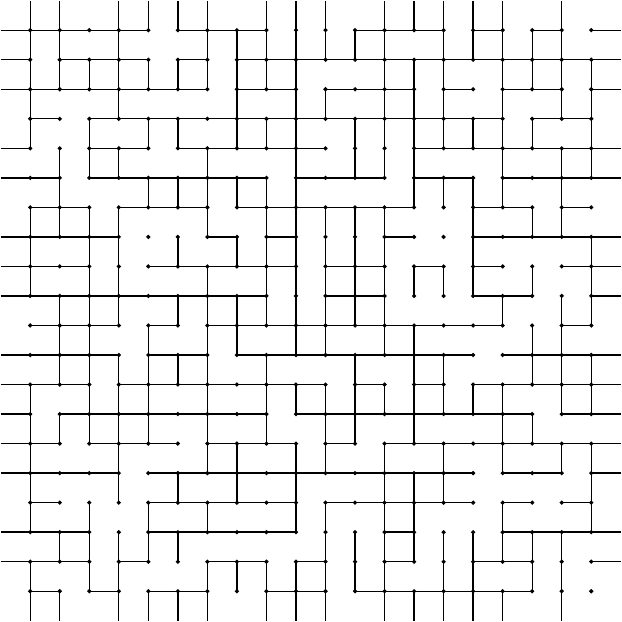}
    \caption{When $p = 0.67 > 2/3$, there is likely to be only one large
      component.}
  \end{subfigure}
  \caption{Random subgraphs of the $20 \times 20$ torus grid graph.}
  \figlabel{peierls}
\end{figure}

\subsection{Triangles in $G_{n,p}$}

Recall as in \secref{ramsey} that the Erd\H{o}s-R\'{e}nyi random graph
$G_{n,p}$ is the probability space of undirected graphs with vertex set
$V=\{1,\ldots,n\}$ and in which each edge $\{u, w\} \in \binom{V}{2}$
is present with probability $p$ and absent with probability $1-p$,
independently of the other edges.

By linearity of expectation, the expected number of triangles (cycles of 
length 3) 
in $G_{n,p}$ is $p^3\binom{n}{3}$.  For $p=(6c)^{1/3}/n$, this
expectation is $c-O(1/n)$.  Unfortunately, even when $c$ is a large
constant, it still takes some work to show that there is a constant
probability that $G_{n,p}$ contains at least one triangle. Indeed,
this typically requires the use of the second moment method, which
involves computing the variance of the number of triangles in
$G_{n,p}$. To show that $G_{n, p}$ has a triangle with more
significant probability is even more complicated,
 and a proof of this result would still typically rely on an
advanced probabilistic inequality~\cite{alon:probabilistic}. Here we
show how this can be accomplished with an encoding argument.

\begin{thm}\thmlabel{triangles-up}
  For $c \in (0, \log^{1/3} n]$ and 
  $p=c/n$, $G \in G_{n,p}$ contains at least one 
  triangle with
  probability at least $1-2^{-\varOmega(c^3)}$.
\end{thm}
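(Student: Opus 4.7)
The plan is to apply the Non-Uniform Encoding Lemma (\lemref{nuel}) with the Bernoulli($p$) measure on the adjacency bit string $A$ of $G$, so that $\log(1/p_G) = n_1(A)\log(1/p) + n_0(A)\log(1/(1-p))$. I will construct a partial prefix-free code $C$ such that every triangle-free $G$ satisfies $|C(G)|\le \log(1/p_G)-s$ with $s=\varOmega(c^3)$; \lemref{nuel} then yields $\Pr\{G\text{ triangle-free}\}\le 2^{-s}$, which is the desired bound.

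The main encoding scheme walks through the pairs $\{i,j\}$ in some canonical order (say lexicographic). For each pair in turn, if some already-processed vertex $k<i$ is a common neighbour of $i$ and $j$ in the revealed graph, then triangle-freeness forces $\{i,j\}$ to be a non-edge and no bits are written; otherwise, a Shannon-Fano bit of parameter $p$ is emitted. The resulting codeword has length $\log(1/p_G) - \Phi(G)\log(1/(1-p))$, where $\Phi(G)$ counts these ``forced'' pairs. Since $\log(1/(1-p)) = \varTheta(c/n)$ for $p=c/n$, securing savings of $\varOmega(c^3)$ amounts to showing $\Phi(G)=\varOmega(nc^2)$; a Cauchy-Schwarz count of length-two paths, $\sum_v\binom{d_v}{2}\ge 2m^2/n-m$, produces $\varOmega(nc^2)$ such paths whenever $m=\varOmega(nc)$, and under a good processing order a constant fraction of them translate into forced non-edges.

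For triangle-free graphs with atypically few edges (including the empty graph, where $\Phi(G)=0$), the above scheme saves nothing, so I would add a one-bit prefix selecting between it and a complementary branch: directly encode the edge count $m$ in $O(\log n)$ bits, followed by the edge set in $\log\binom{M}{m}$ bits (where $M=\binom{n}{2}$). The savings of this branch over $\log(1/p_G)$ equal $M\cdot D(m/M\,\|\,p)$, which, by the same relative-entropy manipulation used in the proof of \thmref{chernoff}, is $\varOmega(nc)$ whenever $m$ is bounded away from $Mp=nc/2$; for $c\le \log^{1/3}n$ this is $\gg c^3$.

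The main obstacle is guaranteeing $\Phi(G)=\varOmega(nc^2)$ for \emph{every} triangle-free graph with $m\approx nc/2$, under a processing order that is canonical and so costs no extra bits to describe. A naive lexicographic order fails on pathological examples (for instance a disjoint union of stars whose centres all sit at the largest vertex indices produces $\Phi=0$ in lex order despite having $\varOmega(n^2/c)$ length-two paths), which suggests that the decoder's processing rule must be adaptive---for example, sorting vertices by the degrees induced by edges revealed so far---so that every vertex of large degree gets processed before most of its neighbours. Closing this gap, together with carefully choosing the threshold that splits the two branches, is the technical heart of the argument.
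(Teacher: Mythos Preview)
Your plan has the right ingredients---count length-two paths to produce forced non-edges, use Shannon--Fano against the Bernoulli$(p)$ measure, and a separate branch for graphs with too few edges---and this is exactly the skeleton of the paper's proof.  The obstacle you isolate is also the real one, but your proposed fix (an adaptive order based on ``degrees revealed so far'') is circular: the decoder cannot sort vertices by degree before those degrees have been decoded, so the star example you describe defeats any such rule as well.

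The paper sidesteps the ordering issue by a fixed bipartition rather than an adaptive order.  Split $V=\{1,\dots,n/2\}\cup\{n/2+1,\dots,n\}$ and first encode, with Shannon--Fano, the $n^2/4$ bits of the bipartite block $M$ (edges between the two halves).  Only then encode the edges inside the second half.  Now every length-two path with centre $i\le n/2$ and endpoints $j,k>n/2$ forces $A_{j,k}=0$, and the decoder already knows this by the time $A_{j,k}$ is reached---no ordering subtlety at all.  This bipartition also cleans up your sparse branch: $n_1(M)$ is exactly $\mathrm{Binomial}(n^2/4,p)$, so ``$n_1(M)\le cn/8$'' is handled directly by \thmref{chernoff} with savings $\varOmega(cn)\gg c^3$.

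There is one further issue your sketch glosses over: the sum $\sum_i\binom{m_i}{2}$ counts paths, not distinct forced pairs $\{j,k\}$, and a priori many centres $i$ could force the same pair.  The paper disposes of this with a third event: the probability that \emph{some} pair of columns $j,k$ has three or more common ones among the first $n/2$ rows is at most $\binom{n/2}{2}\binom{n/2}{3}p^6=O(c^6/n)$, which is $2^{-\varOmega(c^3)}$ under the hypothesis $c\le\log^{1/3}n$.  Outside that event each pair is forced at most twice, so the number of distinct forced zeros is at least $\tfrac12\sum_{i\le n/2}\binom{m_i}{2}\ge\varOmega(c^2 n)$ by convexity, and the savings are $\varOmega(c^2 n)\cdot\log\tfrac{1}{1-p}=\varOmega(c^3)$ as required.
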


\begin{proof}
  In this argument, we will produce an encoding of $G$'s adjacency
  matrix, $A$. For simplicity of exposition, we assume that $n$ is
  even.

  Refer to \figref{triangles}.  If $G$ contains no triangles, then we
  look at the number of ones in the $n/2\times n/2$ submatrix $M$
  determined by rows $1,\ldots,n/2$ and columns $n/2+1,\ldots,n$. Note
  that $n_1(M)$, the number of ones in $M$, is a
  $\mathrm{Binomial}(n^2/4, c/n)$ random variable with expectation
  $cn/4$.  There are three
  events to consider:
  
  \begin{figure}
    \centering{\includegraphics{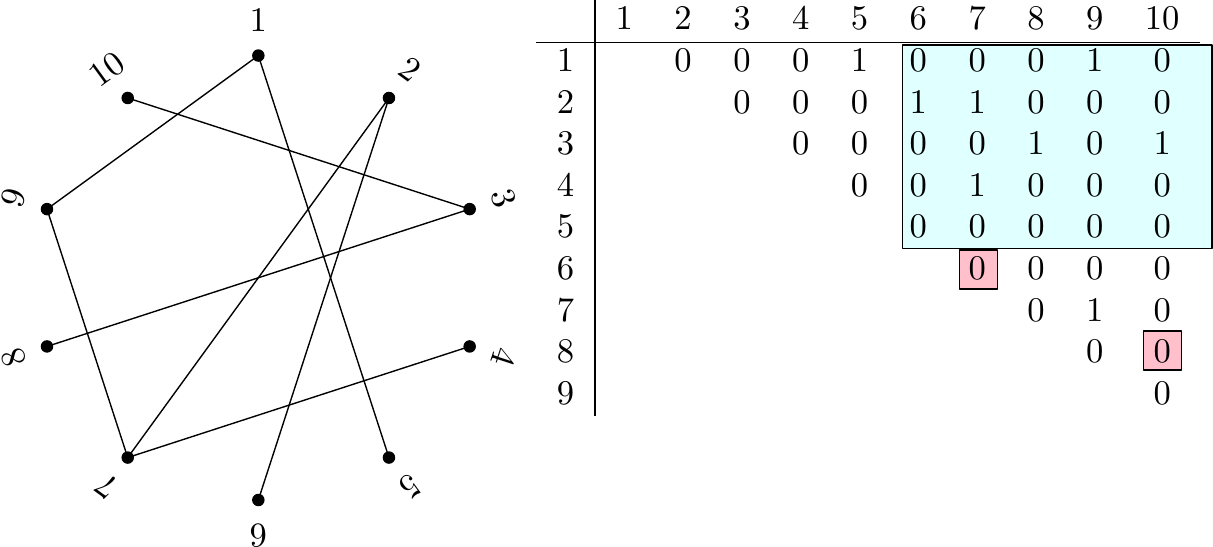}}
    \caption{The random graph $G_{n,c/n}$ contains triangles when $c$
      is large enough.  The highlighted 0 bits in the last five rows
      can be deduced from pairs of 1 bits in the first 5 rows.}
    \figlabel{triangles} 
  \end{figure}

  \begin{enumerate}
  \item \textbf{Event $\mathcal{E}_1$}: The number $n_1(M)$ of 
  ones in $M$ is at most $cn/8$.  In this case,
    the number of ones in this submatrix is much less than the
    expected number, $cn/4$.  
    We can simply apply
    Chernoff's bound to show that $\Pr \left\{ \mathcal{E}_0 \right\}
    = 2^{-\varOmega(c^3)}$.
    We leave this as an exercise to the reader.

  \item \textbf{Event $\mathcal{E}_2$}:
    Fix $n/2 + 1 \leq j <  k \leq n$, and let $\mathcal{B}_{jk}$ 
    be the event that
    $M$ contains at least three rows $1 \leq i_1 < i_2 < i_{3}
    \leq n/2$ with $A_{i_l, j} = A_{i_l, k} = 1$, for $l = 1, 2, 3$.
    Clearly, we have $\Pr \left\{\mathcal{B}_{jk}\right\} 
    \leq \binom{n/2}{3} (c/n)^{6}
    = O\left(c^{6}/n^{3}\right)$. Let $\mathcal{E}_2$ be the event that
    $\mathcal{B}_{jk}$ holds for at least one pair 
    $n/2+1 \leq j < k \leq n$.
    Since there are $\binom{n/2}{2}$ such pairs, we have
    $\Pr\left\{\mathcal{E}_2\right\} = O(c^{6}/n) = 
    2^{-\varOmega(c^3)}$, as we
    assumed $c \leq \log^{1/3} n$.
  \item \textbf{Event $\mathcal{E}_3$}: Let $\mathcal{E}_3$ 
  be the event that (i) 
  the number of ones in $M$ 
   is greater than $cn/8$; (ii)  for each pair $n/2+1 \leq j < k \leq n$,
   there are at most two rows in $M$ where both the entry with
   index $j$ and the entry with index $k$ are set two one; and (iii) $G$
   contains no triangles.
    Notice that, for $i<j<k$ if $A_{i,j}=1$ and $A_{i,k}=1$, then the
    fact that there are no triangles implies that $A_{j,k}=0$.

    Let $m_i$ be the number of ones in the $i$-th row of the
    submatrix.  By specifying rows $1,\dots,n/2$, we eliminate the
    need to specify
    \[
      m \geq (1/2)\sum_{i=1}^{n/2}\binom{m_i}{2} \ge (n/4) \binom{2
        n_1(M)/n}{2} \ge (n/4)\binom{c/4}{2} = \varOmega(c^2n) \enspace ,
    \]
    zeros in rows $n/2+1,\dots,n$ (here, we used the fact that each
    pair of ones appears in at most two rows and that the
    function $x \mapsto \binom{x}{2}$ is convex and increasing for 
    $x \geq 1/4$).
    We thus encode $G$ by giving a
    Shannon-Fano code with parameter $p$ for the first $n/2$ rows of
    $A$; and a Shannon-Fano code with parameter $p$ for the rest of
    $A$, excluding the bits which can be deduced from the preceding
    information. Such a code has length
    \[
      |C(G)| = n_1(A) \log(1/p) + (n_0(A)-m)\log(1/(1-p))
    \]
    which results in a savings of
    \[
      s = \log(1/p_G) - |C(G)| = m\log(1/(1-p)) \ge
      \varOmega(c^2n)\log(1/(1-p)) = \varOmega(c^3) \enspace . 
    \]
    It follows that $\Pr\left\{ \mathcal{E}_3 \right\} 
    \leq 2^{-\varOmega(c^3)}$.
  \end{enumerate}
  Now the probability that $G$ contains no triangles is at most
  $\Pr \left\{\mathcal{E}_1\right\} + 
  \Pr\left\{ \mathcal{E}_2 \right\} + \Pr\left\{ \mathcal{E}_3 \right\} 
  = 2^{-\varOmega(c^3)}$. 
\end{proof}

\begin{thm}\thmlabel{triangles-down}
  If $p = c/n$ with $c > 0$, then $G \in G_{n, p}$ 
  has no triangle with
  probability at least $1 - c^3$.
\end{thm}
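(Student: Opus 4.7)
My plan is a straightforward first-moment argument. Let $T$ denote the number of triangles in $G \in G_{n,c/n}$. For each triple $\{i,j,k\} \subseteq \{1,\dots,n\}$, the three edges $\{i,j\}, \{i,k\}, \{j,k\}$ are independently present with probability $p = c/n$, so the probability that they form a triangle is $p^3 = c^3/n^3$. By linearity of expectation,
\[
  \E\{T\} = \binom{n}{3} p^3 \le \frac{n^3}{6} \cdot \frac{c^3}{n^3} = \frac{c^3}{6}.
\]
Markov's inequality then yields $\Pr\{T \ge 1\} \le \E\{T\} \le c^3/6 < c^3$, and taking complements proves the theorem.

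In the spirit of the rest of the paper, this calculation admits a natural encoding interpretation. For any graph $G$ containing at least one triangle, select (by lexicographic minimality, say) a canonical triangle on vertices $i<j<k$ and encode $G$ by first listing this triple, using $\log\binom{n}{3}$ bits, followed by a Shannon-Fano$(p)$ code for the $\binom{n}{2}-3$ non-triangle entries of the adjacency matrix; the three omitted entries are forced to be $1$. Compared with the ``default'' Shannon-Fano encoding of $G$ of length $\log(1/p_G)$, where $p_G = p^{n_1(A)}(1-p)^{n_0(A)}$, this construction saves exactly $3\log(1/p) - \log\binom{n}{3} \ge \log(6/c^3)$ bits. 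Applying the Non-Uniform Encoding Lemma with $s = \log(6/c^3)$ yields $\Pr\{G \text{ has a triangle}\} \le 2^{-s} = c^3/6 \le c^3$.

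There isn't a real obstacle here: the factor $1/6$ gives comfortable slack against the target $c^3$, and the encoding construction is a textbook application of Shannon-Fano in which three bits of the adjacency matrix are ``forced'' by the preamble describing the triangle. Either of the two presentations above works; I would default to the Markov version for brevity, since \thmref{triangles-down} is the easy complementary bound to \thmref{triangles-up}.
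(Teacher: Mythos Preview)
Your encoding argument is essentially the paper's own proof: encode a canonical triangle, then Shannon--Fano the remaining adjacency bits, and apply the Non-Uniform Encoding Lemma. The only difference is that the paper spends $3\log n$ bits on the triangle's vertices (listing them individually) rather than your $\log\binom{n}{3}$, so it lands exactly on $|C(G)| = \log(1/p_G) + \log c^3$ and obtains the stated bound $c^3$ without the extra factor $1/6$. Your first-moment/Markov version is also correct and yields the same $c^3/6$, but the paper sticks to the encoding presentation.
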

\begin{proof}
  Suppose that $G$ contains a triangle. We encode the adjacency matrix
  $A$ of $G$. First, we specify the triangle's vertices; and finish
  with a Shannon-Fano code with parameter $p$ for the remaining
  edges of the graph. This code has length
  \begin{align*}
    |C(G)| &= 3 \log n + (n_1(A) - 3) \log (1/p) + n_0(A) \log (1/(1 - p))\\ 
           &= \log (1/p_G) + 3 \log n - 3 \log (1/p) 
           = \log (1/p_G) + 3 \log c
           = \log (1/p_G) + \log c^3 \enspace . 
  \end{align*}
\end{proof}

Together, \thmref{triangles-up} and \thmref{triangles-down} establish
the fact that $1/n$ is a \emph{threshold function} for
triangle-freeness, \emph{i.e.}~if $p = o(1/n)$, then $G \in G_{n, p}$
has no triangle with high probability, and if $p = \omega(1/n)$, then
$G$ has a triangle with high probability. 

\section{Encoding with Kraft's Condition}
\seclabel{el}

As promised in \secref{ceilings}, we finally discuss why it has made
sense to omit ceilings in all of our encoding arguments.

Let $[0, \infty]$ denote the set of extended non-negative real
numbers, supporting the extended arithmetic operations
$a + \infty = \infty$ for all $a \in [0, \infty]$, and
$2^{-\infty} = 0$.

  Recall from \secref{nuel}
that a function $\ell : X \to [0, \infty]$ \emph{satisfies Kraft's
  condition} if
\[
  \sum_{x \in X} 2^{-\ell(x)} \leq 1 \enspace .
\]

The main observation is that neither the (Non-)Uniform Encoding Lemma
nor any of its applications has actually required the specification of an
explicit
prefix-free code: We know, by construction, that every code we have
presented is
prefix-free, but we could also deduce from Kraft's inequality that,
since our described codes satisfy Kraft's condition, a prefix-free
code with the same codeword lengths exists. Similarly, we will see that
it is actually enough to assign to every element from our universe
a codeword \emph{length} such that Kraft's condition is satisfied. These
codeword lengths need not be integers.

\begin{lem}[The encoding lemma for the Kraft inequality]\lemlabel{el}
  Let $\ell : X \to [0, \infty]$ satisfy Kraft's condition and let
  $x\in X$ be drawn randomly where $p_x > 0$ denotes the probability
  of drawing $x$.  Then
  \[
    \Pr\{ \ell(x) \le \log(1/p_x)-s\} \le 2^{-s} \enspace .
  \]
\end{lem}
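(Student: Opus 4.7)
The plan is to mimic the proof of the Non-Uniform Encoding Lemma (\lemref{nuel}) almost verbatim, replacing $|C(x)|$ with $\ell(x)$ throughout. The key observation is that the only property of $|C(x)|$ that was actually used in that proof was Kraft's inequality, $\sum_{x \in X} 2^{-|C(x)|} \le 1$. Here, that inequality is granted directly by hypothesis, so no prefix-free code needs to exist at all: the lengths $\ell(x)$ can be arbitrary non-negative (extended) reals as long as they satisfy Kraft's condition.

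Concretely, I would first rewrite the bad event as $\{\log(1/p_x) - \ell(x) \ge s\}$. Note that the conventions $a + \infty = \infty$ and $2^{-\infty} = 0$ make every expression below well-defined; in particular, any $x$ with $\ell(x) = \infty$ contributes $0$ to the relevant sum and can never satisfy the bad event. Applying Chernoff's trick and then Markov's inequality (legitimate since the exponentiated random variable is non-negative), I would obtain
\[
  \Pr\{\ell(x) \le \log(1/p_x) - s\}
  = \Pr\left\{2^{\log(1/p_x) - \ell(x)} \ge 2^s\right\}
  \le \frac{\E\left\{2^{\log(1/p_x) - \ell(x)}\right\}}{2^s}.
\]

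Next I would expand the expectation by summing over $X$ weighted by $p_x$, noting that the $p_x$ factor cancels with $2^{\log(1/p_x)} = 1/p_x$:
\[
  \E\left\{2^{\log(1/p_x) - \ell(x)}\right\}
  = \sum_{x \in X} p_x \cdot 2^{\log(1/p_x) - \ell(x)}
  = \sum_{x \in X} 2^{-\ell(x)} \le 1,
\]
where the final inequality is exactly Kraft's condition on $\ell$. Dividing by $2^s$ yields the claim.

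There is no real obstacle here beyond routine bookkeeping; the argument is strictly a weakening of \lemref{nuel}, trading in the assumption ``$C$ is a partial prefix-free code'' (which implied Kraft's inequality) for the direct assumption that $\ell$ satisfies Kraft's condition. The only subtlety worth flagging in the write-up is the treatment of infinite codeword lengths and of the case where $X$ is countably infinite, both of which are handled automatically by the extended arithmetic conventions recalled at the start of \secref{el}.
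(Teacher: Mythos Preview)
Your proposal is correct and is exactly what the paper does: it simply states that the proof is identical to that of \lemref{nuel}, and your write-up carries out precisely that substitution of $\ell(x)$ for $|C(x)|$, invoking Kraft's condition directly in place of Kraft's inequality for prefix-free codes.
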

\begin{proof}
  The proof is identical to that of \lemref{nuel}.
\end{proof}

The \emph{sum} of two functions $\ell : X \to [0, \infty]$ and
$\ell' : X' \to [0, \infty]$ is the function
$\ell + \ell' : X \times X' \to [0, \infty]$ defined by
$(\ell + \ell') (x, x') = \ell(x) + \ell'(x')$. Note that for any
partial codes
$C : X \nrightarrow \{0, 1\}^*, C' : X' \nrightarrow \{0, 1\}^*$, any
$x \in X$, and any $x' \in X'$,
\[
(|C| + |C'|)(x, x') = |C(x)| + |C'(x')| = |(C \cdot C')|(x, x') \enspace .
\]
In other words, the sum of the functions of codeword lengths describes
the length of codewords in concatenated codes.

\begin{lem}\lemlabel{composition-tight}
  If $\ell : X \to [0, \infty]$ and $\ell' : X' \to [0,
  \infty]$ satisfy Kraft's condition, then so does $\ell + \ell'$.
\end{lem}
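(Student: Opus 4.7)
The plan is to prove this lemma by a direct, one-line calculation: unpack the definition of $\ell + \ell'$, factor the resulting double sum as a product of single sums over $X$ and $X'$, and apply the hypothesis that each of $\ell$ and $\ell'$ satisfies Kraft's condition.

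Concretely, I would start by writing
\[
  \sum_{(x,x') \in X \times X'} 2^{-(\ell + \ell')(x,x')}
  = \sum_{x \in X} \sum_{x' \in X'} 2^{-\ell(x) - \ell'(x')}
  = \sum_{x \in X} \sum_{x' \in X'} 2^{-\ell(x)} \cdot 2^{-\ell'(x')},
\]
using the definition $(\ell + \ell')(x, x') = \ell(x) + \ell'(x')$ stated just before the lemma. Then I would factor the double sum as
\[
  \left(\sum_{x \in X} 2^{-\ell(x)}\right) \left(\sum_{x' \in X'} 2^{-\ell'(x')}\right)
  \le 1 \cdot 1 = 1,
\]
invoking the Kraft condition for $\ell$ and $\ell'$ in the last inequality.

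The only real subtlety lies in the extended arithmetic: $\ell$ and $\ell'$ take values in $[0, \infty]$, so some summands may involve $\infty$. The convention $2^{-\infty} = 0$ established at the start of \secref{el} makes the factorization legitimate: a term $2^{-\ell(x)}$ that equals $0$ contributes $0$ to every product $2^{-\ell(x)} \cdot 2^{-\ell'(x')}$, matching what one gets on the other side of the factorization. Since all terms are non-negative, Tonelli's theorem (or just reordering of a non-negative series) justifies swapping the order of summation regardless of whether $X$ and $X'$ are finite or countable. I do not anticipate any real obstacle here; the lemma is essentially the statement that Kraft's condition is multiplicative under concatenation of codes, mirroring the fact observed just above that $|C \cdot C'| = |C| + |C'|$.
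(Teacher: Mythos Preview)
Your proposal is correct and matches the paper's own proof essentially line for line: the paper also writes the double sum, factors it, and bounds the product by $1 \cdot 1$. Your additional remarks about the $2^{-\infty}=0$ convention and non-negative reordering are a nice bit of extra care, but the argument is the same.
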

\begin{proof}
  Kraft's condition still holds:
  \[
  \sum_{(x, x') \in X \times X'} 2^{-(\ell + \ell')(x, x')} = \sum_{x
    \in X} \sum_{x' \in X'} 2^{-\ell(x) - \ell'(x')} = \sum_{x \in X}
  2^{-\ell(x)} \sum_{x' \in X'} 2^{-\ell'(x')} \leq 1 \enspace
  . 
  \]
\end{proof}
This is analogous to the fact that the concatenation of prefix-free
codes is prefix-free.

\begin{lem}\lemlabel{fixed-length-tight}
  For any probability density $p : X \to (0, 1)$, the function
  $\ell : X \to [0, \infty]$ with $\ell(x) = \log (1/p_x)$ satisfies
  Kraft's condition.
\end{lem}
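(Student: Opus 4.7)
The proof is a direct one-line calculation from the definitions, so my plan is simply to substitute $\ell(x) = \log(1/p_x)$ into the left-hand side of Kraft's condition and simplify. The key identity is $2^{-\log(1/p_x)} = p_x$, which converts the Kraft sum into the probability sum.

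More explicitly, I would write
\[
\sum_{x \in X} 2^{-\ell(x)} \;=\; \sum_{x \in X} 2^{-\log(1/p_x)} \;=\; \sum_{x \in X} p_x \;=\; 1,
\]
where the final equality uses the assumption that $p$ is a probability density on $X$. This in fact shows Kraft's condition holds with equality, not merely as an inequality. The hypothesis $p_x \in (0,1)$ keeps $\log(1/p_x)$ a well-defined non-negative real number, so no limiting conventions are needed (the case $p_x = 0$ would be handled by $\ell(x) = \infty$ and $2^{-\infty} = 0$, but is excluded by hypothesis).

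There is no real obstacle here; the content of the lemma is conceptual rather than technical. Combined with \lemref{el} and \lemref{composition-tight}, it rigorously justifies the blackboard practice used throughout the survey of assigning non-integer ``codeword lengths'' $\log(1/p_x)$ and then concatenating such codes: any length function built in this way automatically satisfies Kraft's condition and so plays the role of a genuine prefix-free code inside the Non-Uniform Encoding Lemma, making all the earlier applications that dropped ceilings fully rigorous.
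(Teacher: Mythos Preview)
Your proof is correct and is essentially identical to the paper's own proof, which consists of the same one-line computation $\sum_{x\in X} 2^{-\ell(x)} = \sum_{x\in X} 2^{-\log(1/p_x)} = \sum_{x\in X} p_x = 1$. Your additional remarks about the role of the hypothesis $p_x\in(0,1)$ and the conceptual purpose of the lemma are accurate and go slightly beyond what the paper states explicitly.
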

\begin{proof}
  \[
  \sum_{x \in X} 2^{-\ell(x)} = \sum_{x \in X} 2^{-\log (1/p_x)} =
  \sum_{x \in X} p_x = 1 \enspace . 
  \]
\end{proof}
This tells us that we can ignore the ceiling in every instance of a
fixed-length code and every instance of a Shannon-Fano code while
encoding.

We now give a tight notion corresponding to Elias codes.
\begin{thm}[Beigel~\cite{beigel:elias}]\thmlabel{elias-tight}
  Fix some $0 < \eps < e - 1$. Let $\ell : \N \to \R$ be defined as
  \[
  \ell(i) = \log i + \log \log i + \cdots + \underbrace{\log \cdots
    \log}_{\text{$\log^* i$ times}} i - (\log \log (e - \eps)) \log^*
  i + O(1) \enspace .
  \]
  Then, $\ell$ satisfies Kraft's condition. Moreover, the function
  $\ell' : \N \to \R$ with
  \[
  \ell'(i) = \log i + \log \log i + \cdots + \underbrace{\log \cdots
    \log}_{\text{$\log^* i$ times}} i - (\log \log e) \log^* i + c
  \]
  does not satisfy Kraft's condition for any choice of the constant
  $c$.
\end{thm}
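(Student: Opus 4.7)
The plan is to compute the Kraft sum for $\ell$ by partitioning $\N_{\ge 2}$ according to the value of $\log^* i$. Define the tower $T_1 = 2$ and $T_{k+1} = 2^{T_k}$, and set $I_n = [T_n, T_{n+1})$. A direct induction shows $\log^{(k)} T_n = T_{n-k}$ (taking $T_0 = 1$), so $\log^* i = n$ exactly when $i \in I_n$. Using $2^{-\log^{(k)} i} = 1/\log^{(k-1)} i$, for $i \in I_n$ one can rewrite
\[
2^{-\ell(i)} \;=\; \Theta(1)\cdot \frac{(\log(e-\eps))^n}{\prod_{k=0}^{n-1} \log^{(k)} i}.
\]

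The core computation is the closed-form antiderivative
\[
\int \frac{dx}{\prod_{k=0}^{n-1} \log^{(k)} x} \;=\; (\ln 2)^n \log^{(n)} x + C,
\]
verified by induction from the chain rule. Evaluating on $I_n$ gives $(\ln 2)^n (T_1 - T_0) = (\ln 2)^n$. Since the integrand is monotone decreasing on $I_n$ and the boundary term $1/\prod_{k=0}^{n-1} \log^{(k)} T_n = 1/\prod_{k=1}^n T_k$ is doubly-exponentially small, integral comparison yields
\[
\sum_{i \in I_n} \frac{1}{\prod_{k=0}^{n-1} \log^{(k)} i} \;=\; (\ln 2)^n (1 + o(1)).
\]
Combining with the $(\log(e-\eps))^n$ factor and using $\log(a)\cdot\ln 2 = \ln a$, one obtains $\sum_{i \in I_n} 2^{-\ell(i)} = \Theta\bigl((\ln(e-\eps))^n\bigr)$. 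Since $\eps > 0$ forces $\ln(e-\eps) < 1$, summing over $n$ gives a convergent geometric series bounded by some absolute constant, and absorbing that constant into the $O(1)$ additive slack inside $\ell$ makes the total Kraft sum at most $1$.

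For the second statement, the identical computation for $\ell'$ produces $\log(e)\cdot\ln 2 = 1$, so the per-block Kraft contribution becomes $\Theta(1)$, with an implied constant depending on $c$ but not on $n$; summing over infinitely many $n$ therefore diverges. The main obstacle here is making the divergence uniform in $c$: I need $\sum_{i \in I_n} 2^{-\ell'(i)}$ bounded below by a positive constant $\delta > 0$ for all sufficiently large $n$, not merely $\Theta(1)$ with a ratio that could degrade. This is exactly what the rapid decay of the endpoint term $1/\prod_{k=1}^n T_k$ guarantees: the main term $(\ln 2)^n$ dominates that correction for every $n \ge 1$, so the ratio of sum to integral tends to $1$. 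Once this uniform lower bound is secured, infinitely many blocks each contributing at least $\delta$ force divergence no matter how large the additive constant $c$ in $\ell'$ is chosen, establishing that Kraft's condition fails.
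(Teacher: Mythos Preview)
The paper does not prove this theorem; it is quoted from Beigel and stated without proof, so there is nothing to compare against on the paper's side.

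Your argument is sound and is essentially the standard one. The tower partition $I_n=[T_n,T_{n+1})$ with $\log^{(k)}T_n=T_{n-k}$ is correct, the antiderivative $(\ln 2)^n\log^{(n)}x$ is correct, and the block integral indeed evaluates to $(\ln 2)^n(T_1-T_0)=(\ln 2)^n$. One small simplification: for the divergence half you do not need the $1+o(1)$ asymptotic at all. Because the integrand is decreasing on $I_n$, the integral test gives the \emph{exact} inequality
\[
\sum_{i\in I_n}\frac{1}{\prod_{k=0}^{n-1}\log^{(k)}i}\;\ge\;\int_{T_n}^{T_{n+1}}\frac{dx}{\prod_{k=0}^{n-1}\log^{(k)}x}\;=\;(\ln 2)^n,
\]
so each block contributes at least $2^{-c}(\log e)^n(\ln 2)^n=2^{-c}$ to the Kraft sum of $\ell'$, with no error term to control. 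That immediately gives divergence for every fixed $c$, and your worry about ``making the divergence uniform in $c$'' evaporates. For the convergence half, the upper bound $\sum\le f(T_n)+\int$ with $f(T_n)=1/\prod_{j=1}^n T_j$ is what you need; since $T_j\ge 2$ and $2\ln 2>1$, the correction $f(T_n)$ is already $o((\ln 2)^n)$, which suffices. Finally, remember to account for $i=1$ (where $\log^*1=0$ and the iterated sum is empty): this contributes a single bounded term that is absorbed into the $O(1)$.
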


It is not hard to see how \lemref{el}, \lemref{composition-tight},
\lemref{fixed-length-tight}, and \thmref{elias-tight} can be used to
give encoding arguments with real-valued codeword lengths. For
example, recall how the result of \thmref{runs-i} carried an artifact
of binary encoding. Using our new tools, we can now refine this and
recover the exact result.
\begin{customthm}{\ref*{thm:runs-i}b}
  Let $x=(x_1,\ldots,x_n)\in\{0,1\}^n$ be chosen uniformly at random
  and let $t = \ceil{\log n + s}$. Then, the probability that 
   $x$ contains
  a run of $t$ ones
  is at most $2^{-s}$.
\end{customthm}
\begin{proof}
  Let $\ell : \{0, 1\}^n \to [0, \infty]$ be such that if $x$ contains
  a run of $t = \ceil{\log n + s}$ ones, then
  $\ell(x) = \log n + n - t$, and otherwise $\ell(x) = \infty$. We
  will show that $\ell$ satisfies Kraft's condition.

  Let the function $f : \{1, \ldots, n - t + 1\} \to [0, \infty]$ have
  $f(i) = \log n$ for all $i \in \{1, \dots, n - t + 1\}$, and
  $g : \{0, 1\}^{n - t} \to [0, \infty]$ have $g(y) = n - t$ for all
  $y \in \{0, 1\}^{n - t}$. Both $f$ and $g$ satisfy Kraft's condition by
  \lemref{fixed-length-tight}. By \lemref{composition-tight}, so does
  the function
  \[
    h = f + g : \{1, \ldots, n - t + 1\} \times \{0, 1\}^{n - t} \to
    [0, \infty] \enspace ,
  \]
  where $h(i, y) = \log n + n - t$ for all $i$ and $y$. Crucially,
  each element $(i, (y_1, \dots, y_{n-1})) \in 
  \{1, \ldots, n - t + 1\} \times \{0, 1\}^{n
    - t}$ yields to an $n$-bit binary string containing a run of
  $t$ ones, namely 
  \[
  (y_1, \dots, y_{i - 1}, \underbrace{1, 1, \dots, 1}_{\text{$t$ times}},
  y_i, \dots, y_{n - t}) \enspace ,
  \]
  and this mapping is surjective. Thus, $\ell$ is as desired.
  We finish by applying \lemref{el}.
\end{proof}

\section{Summary and Conclusions}
\seclabel{summary}

We have described a simple method for producing encoding
arguments. Using our encoding lemmas, we gave original proofs for
several previously established results.

Encoding arguments often give natural and short proofs for results in
the analysis of algorithms, particularly for algorithms which directly
handle bit strings. Many of our results concern the worst-case running
time of algorithms, but encoding arguments can also be used to study
other measures of algorithmic complexity, such as the average-case
running time of algorithms, or the communication complexity of Boolean
functions~\cite{buhrman:newapps}.

Typically, to produce an encoding argument, one would invoke the
incompressibility method after developing some of the theory of
Kolmogorov complexity. Our technique requires only a basic
understanding of prefix-free codes and one simple lemma. We are also
the first to suggest a simple and tight manner of encoding using only
Kraft's condition with real-valued codeword lengths. In this light, we
posit that there is no reason to develop an encoding argument through
the incompressibility method: our Uniform Encoding Lemma is simpler,
the Non-Uniform Encoding Lemma is more general, and our technique from
\secref{el} is less wasteful. Indeed, though it would be easy to state
and prove our Non-Uniform Encoding Lemma in the setting of Kolmogorov
complexity, it seems as if the general encoding lemma from \secref{el}
only can exist in our simplified framework.

\begin{comment}
Using our encoding lemmas, we gave original proofs for several
previously established results. Specifically, we showed:
\begin{itemize}
\item if $n$ balls are thrown into $n$ urns uniformly at random, then
  no urn contains more than $O(\log n/\log \log n)$ balls with high
  probability;
\item linear probing succeeds in $O(1)$ expected time;
\item searching in 2-choice hashing takes time $O(\log \log n)$ with
  high probability;
\item a certain random bipartite graph is an expander with high
  probability;
\item the number of records in a uniformly random permutation is
  $O(\log n)$ with high probability;
\item the height of the binary search tree built from the sequential
  insertion of a uniformly random permutation of $n$ integers is
  $O(\log n)$ with high probability;
\item Chernoff's bound on the tail of a binomial random variable;
\item percolation occurs in random dense subgraphs of the torus;
\item a strong threshold for triangle-freeness in the
  Erd\H{o}s-R\'{e}nyi random graph.
\end{itemize}
\end{comment}

\section*{Acknowledgements}

This research was initiated in response to an invitation for the first
author to give a talk at the Ninth Annual Probability, Combinatorics
and Geometry Workshop, held April 4--11, 2014, at McGill University's
Bellairs Institute.  Many ideas that appear in the current paper were
developed during the workshop. The author is grateful to the other
workshop participants for providing a stimulating working environment.
In particular, Xing~Shi~Cai pointed out the application to runs in
binary strings (\thmref{runs-i}) and G\'abor~Lugosi stated and proved
the Non-Uniform Encoding Lemma (\lemref{nuel}) in response to the first
author's half-formed ideas about a non-uniform version of \lemref{uel},
and then later pointed us to the proof in Barron's thesis.  We would
also like to thank Yoshio Okamoto, G\"unter Rote, and the 
anonymous referees for valuable comments.

\bibliography{encoding}{}

\begin{thebibliography}{10}

\bibitem{computing-curricula:computer}
{ACM/IEEE-CS Joint Task Force on Computing Curricula}.
\newblock Computer science curricula 2013: Curriculum guidelines for
  undergraduate degree programs in computer science.
\newblock Technical report, ACM Press and IEEE Computer Society Press, December
  2013.

\bibitem{alon:probabilistic}
N.~Alon and J.~H. Spencer.
\newblock {\em The Probabilistic Method}.
\newblock Wiley Series in Discrete Mathematics and Optimization.
  Wiley-Interscience, 4 edition, 2016.

\bibitem{azar:multiplechoice}
Y.~Azar, A.~Z. Broder, A.~R. Karlin, and E.~Upfal.
\newblock Balanced allocations.
\newblock {\em SIAM Journal on Computing}, 29(1):180--200, February 2000.

\bibitem{barron:dissertation}
A.~R. Barron.
\newblock {\em Logically Smooth Density Estimation}.
\newblock PhD thesis, Stanford University, September 1985.

\bibitem{beigel:elias}
R.~Beigel.
\newblock Unbounded searching algorithms.
\newblock {\em SIAM Journal on Computing}, 19(3):522--537, 1990.

\bibitem{berenbrink:densehashing}
P.~Berenbrink, A.~Czumaj, A.~Steger, and B.~V\"{o}cking.
\newblock Balanced allocations: The heavily loaded case.
\newblock {\em SIAM Journal on Computing}, 35(6):1350--1385, 2006.

\bibitem{boucheron.lugosi.ea:concentration}
R.~Boucheron, G.~Lugosi, and P.~Massart.
\newblock {\em Concentration Inequalities: A Nonasymptotic Theory of
  Independence}.
\newblock Oxford University Press, Oxford, United Kingdom, 2013.

\bibitem{buhrman:newapps}
H.~Buhrman, T.~Jiang, M.~Li, and P.~Vit\'{a}nyi.
\newblock New applications of the incompressibility method: Part {II}.
\newblock {\em Theoretical Computer Science}, 235(1):59--70, 2000.

\bibitem{cayley:theorem}
A.~Cayley.
\newblock A theorem on trees.
\newblock {\em Quarterly Journal of Mathematics}, 23:376--378, 1889.

\bibitem{chernoff:bound}
H.~Chernoff.
\newblock A measure of asymptotic efficiency for tests of a hypothesis based on
  the sum of observations.
\newblock {\em The Annals of Mathematical Statistics}, 23(4):493--507, 12 1952.

\bibitem{devroye:records}
L.~Devroye.
\newblock Applications of the theory of records in the study of random trees.
\newblock {\em Acta Informatica}, 26(1):123--130, 1988.

\bibitem{devroye:robin}
L.~Devroye, P.~Morin, and A.~Viola.
\newblock On worst-case {R}obin {H}ood hashing.
\newblock {\em SIAM Journal on Computing}, 33(4):923--936, 2004.

\bibitem{dubhashi.panconesi:concentration}
D.~P. Dubhashi and A.~Panconesi.
\newblock {\em Concentration of Measure for the Analysis of Randomized
  Algorithms}.
\newblock Cambridge University Press, New York, New York, 2009.

\bibitem{elias:coding}
P.~Elias.
\newblock Universal codeword sets and representations of the integers.
\newblock {\em IEEE Transactions on Information Theory}, 21(2):194--203, March
  1975.

\bibitem{erdos:some}
P.~Erd\H{o}s.
\newblock Some remarks on the theory of graphs.
\newblock {\em Bulletin of the American Mathematics Society}, 53:292--294,
  1947.

\bibitem{lovasz:locallemma}
P.~Erd\H{o}s and L.~Lov\'{a}sz.
\newblock Problems and results on 3-chromatic hypergraphs and some related
  questions.
\newblock In A.~Hajnal, R.~Rado, and V.~T. S\'{o}s, editors, {\em Infinite and
  Finite Sets}, volume~10 of {\em Colloquia Mathematica Societatis J\'{a}nos
  Bolyai}, pages 609--627. North-Holland, 1973.

\bibitem{fano:transmission}
R.~M. Fano.
\newblock The transmission of information.
\newblock Technical Report~65, Research Laboratory of Electronics at MIT,
  Cambridge, Massachusetts, 1949.

\bibitem{fortnow:ksat}
L.~Fortnow.
\newblock A {K}olmogorov complexity proof of the {L}ov\'{a}sz local lemma.
\newblock
  \url{http://blog.computationalcomplexity.org/2009/06/kolmogorov-complexity-proof-of-lov.html},
  2009.
\newblock (last accessed November 28, 2015).

\bibitem{grimmett:percolation}
G.~R. Grimmett.
\newblock {\em Percolation}.
\newblock 321. Springer-Verlag Berlin Heidelberg, 2 edition, 1999.

\bibitem{haimberger:cuckoo}
T.~Haimberger.
\newblock {T}heoretische und {E}xperimentelle {U}ntersuchung von
  {K}uckucks-{H}ashing.
\newblock Bachelor's thesis. Freie Universit\"at Berlin, 2013.

\bibitem{hoare:find}
C.~A.~R. Hoare.
\newblock Algorithm 65: Find.
\newblock {\em Communications of the ACM}, 4(7):321--322, July 1961.

\bibitem{hoory.linial.ea:expander}
S.~Hoory, N.~Linial, and A.~Wigderson.
\newblock Expander graphs and their applications.
\newblock {\em Bulletin of the American Mathematical Society}, 43(4):439--561,
  October 2006.

\bibitem{lucier.jiang.li:quicksort}
B.~Lucier, T.~Jiang, and M.~Li.
\newblock Average-case analysis of {Q}uicksort and binary insertion tree height
  using incompressibility.
\newblock {\em Information Processing Letters}, 103(2):45--51, July 2007.

\bibitem{mcdiarmid:on}
C.~McDiarmid.
\newblock On the method of bounded differences.
\newblock In J.~Siemons, editor, {\em Surveys in Combinatorics: Invited Papers
  at the 12th British Combinatorial Conference}, number 141, pages 148--188.
  Cambridge University Press, 1989.

\bibitem{messner:ksat}
J.~Messner and T.~Thierauf.
\newblock A {K}olmogorov complexity proof of the {L}ov\'{a}sz local lemma for
  satisfiability.
\newblock {\em Theoretical Computer Science}, 461:55--64, November 2012.

\bibitem{mitzenmacher.upfal:probability}
M.~Mitzenmacher and E.~Upfal.
\newblock {\em Probability and Computing: Randomized Algorithms and
  Probabilistic Analysis}.
\newblock Cambridge University Press, 2005.

\bibitem{morin:open}
P.~Morin.
\newblock {\em Open Data Structures: An Introduction}.
\newblock Athabasca University Press, Edmonton, 2013.
\newblock Also freely available at \url{opendatastructures.org}.

\bibitem{moser:ksat}
R.~A. Moser.
\newblock A constructive proof of the {L}ov\'{a}sz local lemma.
\newblock In {\em Proceedings of the 41st Annual ACM Symposium on Theory of
  Computing}, STOC '09, pages 343--350, New York, NY, USA, 2009. ACM.

\bibitem{moser:locallemma}
R.~A. Moser and G.~Tardos.
\newblock A constructive proof of the general {L}ov\'{a}sz local lemma.
\newblock {\em Journal of the ACM}, 57(2):11:1--11:15, January 2010.

\bibitem{motwani.raghavan:randomized}
R.~Motwani and P.~Raghavan.
\newblock {\em Randomized Algorithms}.
\newblock Cambridge University Press, New York, New York, 1995.

\bibitem{pagh.rodler:cuckoo}
R.~Pagh and F.~F. Rodler.
\newblock Cuckoo hashing.
\newblock {\em Journal of Algorithms}, 51(2):122--144, 2004.

\bibitem{pinsker:expanders}
M.~S. Pinsker.
\newblock On the complexity of a concentrator.
\newblock In {\em The 7th International Teletraffic Conference}, volume~4,
  pages 1--4, 1973.

\bibitem{patrascu:cuckoo}
M.~P\u{a}tra\c{s}cu.
\newblock Cuckoo hashing.
\newblock \url{http://infoweekly.blogspot.ca/2010/02/cuckoo-hashing.html},
  2010.
\newblock (last accessed April 15, 2015).

\bibitem{reed:height}
B.~Reed.
\newblock The height of a random binary search tree.
\newblock {\em Journal of the ACM}, 50(3):306--332, May 2003.

\bibitem{robbins:stirling}
H.~Robbins.
\newblock A remark on {S}tirling's formula.
\newblock {\em The American Mathematical Monthly}, 62(1):26--29, jan 1955.

\bibitem{flajolet.sedgewick:aofa}
R.~Sedgewick and P.~Flajolet.
\newblock {\em An Introduction to the Analysis of Algorithms}.
\newblock Addison-Wesley Longman Publishing Co., Inc., Boston, MA, USA, 1996.

\bibitem{shannon:mathematical}
C.~E. Shannon.
\newblock A mathematical theory of communication.
\newblock {\em Bell Systems Technical Journal}, 27:379--423, 1948.

\bibitem{vitanyi:analysis}
P.~Vit\'anyi.
\newblock Analysis of sorting algorithms by {K}olmogorov complexity (a survey).
\newblock In {\em Entropy, Search, Complexity}, volume~16 of {\em Bolyai
  Society Mathematical Studies}, pages 209--232. Springer-Verlag New York,
  2007.

\bibitem{vocking:witness}
B.~V\"{o}cking.
\newblock How asymmetry helps load balancing.
\newblock {\em Journal of the ACM}, 50(4):568--589, July 2003.

\end{thebibliography}
\bibliographystyle{plain}

\end{document}